\theoremstyle{definition}
\newtheorem{upright-definition}[theorem]{Definition}
\renewenvironment{definition}{\begin{upright-definition}}{\lipicsEnd\end{upright-definition}}
\DeclarePairedDelimiter\floor{\lfloor}{\rfloor}
\DeclarePairedDelimiter\cardinality{\lvert}{\rvert}
\DeclarePairedDelimiter\abs{|}{|}
\providecommand\with{}
\newcommand\SetSymbol[1][]{\nonscript\,#1\vert \allowbreak \nonscript\,\mathopen{}}
\DeclarePairedDelimiterX\set[1]{\lbrace}{\rbrace}{ \renewcommand\with{\SetSymbol[\delimsize]} #1 }
\DeclarePairedDelimiterX\multiset[1]{\llbracket}{\rrbracket}{ \renewcommand\with{\SetSymbol[\delimsize]} #1 }
\DeclarePairedDelimiterX{\norm}[1]{\lVert}{\rVert}{#1}
\newcommand*{\Iff}{\Longleftrightarrow}
\newcommand*{\N}{\mathbb N}
\newcommand*{\Sup}{\mathrm{Sup}}
\newcommand*{\indic}{\mathds 1}
\newcommand*{\symdiff}{\triangle}
\newcommand*{\compl}[1]{\overline{#1}}
\renewcommand{\O}{\mathcal{O}}
\newcommand{\mincup}{\mathbin{\ooalign{$\cup$\cr\hss\raisebox{0.5ex}{\scriptsize $\downarrow$}\hss}}}
\newcommand{\bigmincup}{\mathop{\ooalign{$\displaystyle\bigcup$\cr\hidewidth{\Large$\downarrow$}\hidewidth\cr}}}
\def\moverlay{\mathpalette\mov@rlay}
\def\mov@rlay#1#2{\leavevmode\vtop{%
   \baselineskip\z@skip \lineskiplimit-\maxdimen
   \ialign{\hfil$\m@th#1##$\hfil\cr#2\crcr}}}
\newcommand{\charfusion}[3][\mathord]{
    #1{\ifx#1\mathop\vphantom{#2}\fi
        \mathpalette\mov@rlay{#2\cr#3}
      }
    \ifx#1\mathop\expandafter\displaylimits\fi}
\newlang{\GVRS}{GVRS}
\newlang{\VRP}{VRP}
\newlang{\EVRP}{EVRP}
\newlang{\CVRP}{CVRP}
\newlang{\LoadCVRP}{LoadCVRP}
\newlang{\GasCVRP}{GasCVRP}
\newlang{\LoadGasCVRP}{LoadGasCVRP}
\newlang{\WRP}{WRP}
\newlang{\TSP}{TSP}
\newlang{\MetricTSP}{MetricTSP}
\newlang{\BinPacking}{BinPacking}
\newlang{\UnaryBinPacking}{UnaryBinPacking}
\newlang{\HetMdFpBinPacking}{HetMdFpBinPacking}
\newlang{\TrianglePacking}{TrianglePacking}
\newlang{\NTDM}{N3DMatching}
\newclass{\paraNP}{paraNP}
\renewcommand*{\P}{\mathcal P}
\newcommand*{\tw}{\mathrm{tw}}
\newcommand*{\opwidth}{\mathrm{width}}
\newcommand*{\opreduce}{\mathtt{reduce}}
\newcommand*{\oprmc}{\mathtt{rmc}}
\newcommand*{\opdetach}{\mathtt{detach}}
\newcommand*{\opglue}{\mathtt{glue}}
\newcommand*{\opproj}{\mathtt{proj}}
\newcommand*{\opjoin}{\mathtt{join}}
\newcommand*{\opins}{\mathtt{ins}}
\newcommand*{\opshift}{\mathtt{shift}}
\renewcommand*{\A}{\mathcal A}
\newcommand*{\T}{\mathcal T}
\newcommand*{\parP}{\mathcal P}
\newcommand*{\parQ}{\mathcal Q}
\newcommand*{\parR}{\mathcal R}
\title{Parameterized Complexity of Vehicle Routing}
\newcommand*{\hpiaffil}{Hasso Plattner Institute, University of Potsdam, Germany}
\author{Michelle Döring}{\hpiaffil \and \url{https://temporalgraph.notion.site/michelle-doering}}{michelle.doering@hpi.uni-potsdam.de}{https://orcid.org/0000-0001-7737-3903}{German Federal Ministry for Education and Research (BMBF) through the project ``KI Servicezentrum Berlin Brandenburg'' (01IS22092)}
\author{Jan Fehse}{\hpiaffil}{jan.fehse@student.hpi.uni-potsdam.de}{}{}
\author{Tobias Friedrich}{\hpiaffil}{tobias.friedrich@hpi.uni-potsdam.de}{}{}
\author{Paula Marten}{\hpiaffil}{paula.marten@student.hpi.uni-potsdam.de}{https://orcid.org/0009-0002-9121-2449}{}
\author{Niklas Mohrin}{\hpiaffil}{niklas.mohrin@student.hpi.uni-potsdam.de}{https://orcid.org/0009-0004-8296-0167}{}
\author{Kirill Simonov}{Department of Informatics, University of Bergen, Norway}{kirill.simonov@uib.no}{https://orcid.org/0000-0001-9436-7310}{}
\author{Farehe Soheil}{\hpiaffil}{farehe.soheil@hpi.uni-potsdam.de}{https://orcid.org/0000-0002-0504-8834}{}
\author{Jakob Timm}{\hpiaffil}{jakob.timm@student.hpi.uni-potsdam.de}{https://orcid.org/0009-0009-6007-0389}{}
\author{Shaily Verma}{\hpiaffil}{Shaily.Verma@hpi.de}{https://orcid.org/0009-0000-6789-1643}{}
\authorrunning{Döring, Fehse, Friedrich, Marten, Mohrin, Simonov, Soheil, Timm, and Verma}
\keywords{Vehicle Routing Problem, Treewidth, Parameterized Complexity}
\begin{document}
\maketitle

\begin{abstract}
    The Vehicle~Routing~Problem ($\VRP$) is a popular generalization of the Traveling Salesperson Problem.
Instead of one salesperson traversing the entire weighted, undirected graph $G$, there are $k$ vehicles available to jointly cover the set of clients $C \subseteq V(G)$.
Every vehicle must start at one of the depot vertices $D \subseteq V(G)$ and return to its start.
Capacitated~Vehicle~Routing ($\CVRP$) additionally restricts the route of each vehicle by limiting the number of clients it can cover, the distance it can travel, or both.

In this work, we study the complexity of $\VRP$ and the three variants of $\CVRP$ for several parameterizations, in particular focusing on the treewidth of $G$.
We present an $\FPT$ algorithm for $\VRP$ parameterized by treewidth.
For $\CVRP$, we prove $\paraNP$- and $\W[\cdot]$-hardness for various parameterizations, including treewidth, thereby rendering the existence of $\FPT$ algorithms unlikely.
In turn, we provide an $\XP$ algorithm for $\CVRP$ when parameterized by both treewidth and the vehicle capacity.

\end{abstract}


\section{Introduction}

Optimizing logistics has been an important driver of the theory of computing since its very dawn, and, with the ever-growing digitalization and decentralization of services, it remains a productive direction to this day.
One of the canonical challenges in the area, Vehicle Routing Problem (VRP), introduced by George Dantzig and John Ramser in 1959~\cite{dantzig59},
can be informally summarized as follows:
``What is the optimal set of routes for a fleet of vehicles to traverse in order to deliver to a given set of customers?''

To model this question, it is natural to use a weighted graph to represent the relevant locations and the travel costs between them, with particular vertices marked as the depots, where the vehicles and the goods are originally located, and some other vertices are marked as clients that await the delivery. Formally, we consider $\VRP$ to be the following computational problem: Given an edge-weighted undirected graph $(G, w)$, subsets $C, D \subseteq V(G)$, and integers $k$, $r$, determine whether there exists a collection of at most $k$ closed walks in $G$, such that each walk starts and ends at a vertex of $D$, the walks cover all vertices in $C$, and their total weight is at most $r$.

Defined in this way, $\VRP$ also becomes interesting from the theoretical point of view, as it generalizes several important \NP-hard problems. Notably, the famous Traveling Salesperson Problem is a special case of $\VRP$ with one vehicle and every vertex being a client, i.e., $k = 1$, $C = D = V(G)$. On the other hand, since $\VRP$ allows for multiple vehicles, it also captures packing and covering problems on graphs. For example, the \textsc{Maximum Cycle Cover} problem, where the task is to find a collection of $k$ vertex-disjoint cycles in the given graph $G$ that covers all vertices~\cite{cygan2015parameterized}, is modeled by $\VRP$ with $C = D = V(G)$ and $r = |V(G)|$.

Based on the immediate hardness of the problem in the classical sense, practical solvers for vehicle routing are mostly based on heuristics, which do not provide guarantees either on the quality of the solution or on the running time. We refer to surveys by Braekers et al.~\cite{braekers2016vehicleSurvey} and by Mor and Speranza~\cite{mor2022vehicleSurvey} for an in-depth introduction to the vast area of practical research on solving the vehicle routing problem, as well as  for listing various variants of the vehicle routing problem that received practical interest.
Stemming from numerous research works and the efficiency of the developed solvers, the vehicle routing problem itself sees applications to other domains, for example, routing traffic in computer networks~\cite{akhoondian2020walking}.

On the other hand, theoretical results regarding $\VRP$ have been much scarcer, especially with respect to solving the problem exactly. The immediate \NP-hardness of the problem together with the diversity of explicit (such as number of vehicles and tour lengths) and implicit parameters (such as treewidth that may be small in certain applications~\cite{akhoondian2020walking}) makes $\VRP$ a natural target for parameterized complexity. However, to the best of our knowledge, this avenue was largely unexplored until now, although $\TSP$ itself~\cite{bodlaender2015deterministic,EuclideanTSPETH} and other related problems such as cycle packing~\cite{BentertFGKLP00S25} are well-studied in the area.

\subparagraph*{Our contribution.} In this work, we aim to close this gap and initiate the parameterized complexity study of $\VRP$ and its variants. First, based on the different variants of vehicle routing found in the literature~\cite{braekers2016vehicleSurvey,mor2022vehicleSurvey}, we define a general  vehicle routing setting $\GVRS$ that provides a unified interface to the individual problems. We then focus on two important special cases: the $\VRP$ problem, as defined above, and its capacitated variant, where each route is additionally subject to a size constraint. Specifically, in $\LoadCVRP$, the input also contains the parameter $\ell$, and each route in the solution can only serve at most $\ell$ clients. This is motivated by real-life constraints, such as the limited cargo space of vehicles and the limited working hours of drivers.
On the other hand, the capacity constraint allows to encapsulate even more fundamental theoretical problems within the vehicle routing framework, such as \textsc{Triangle Packing}, corresponding to $\ell = 3$, and \textsc{$\ell$-Cycle Packing}. Similarly, we define $\GasCVRP$, where there is a limit on the edge-weight of each route, and $\LoadGasCVRP$, where both edge-weight and number of clients are limited.
The problem classification is covered in detail in \Cref{sec:classification}.

We then proceed to investigate the parameterized complexity of these problems with respect to the parameters such as the number of vehicles $k$, the number of depots $|D|$, the number of clients $|C|$, the total weight of the solution $r$, the treewidth of the graph $\tw$, and the maximum load $\ell$ in case of $\LoadCVRP$. Our findings are summarized in \Cref{tab:capacitated_overview_parameters}. We start in \Cref{sec:uncapacitated} with the uncapacitated case, i.e., the $\VRP$ problem. As this problem presents a generalization of $\TSP$, we can rule out tractability even for a constant number of vehicles and depots in \Cref{thm:vrp-d-k-paranp-hard}. In \Cref{thm:vrp-c-fpt} we also quickly conclude that the number of clients is a sufficient parameter for an $\FPT$ algorithm.
This result also directly implies that deciding whether a routing of weight at most $r$ exists can be done in $\FPT$-time in $r$.

\begin{table}[ht]
    \centering
    \begin{tabular}{llll} \toprule
        VRP Variant & Parameter                     & Complexity                     & Reference                                                      \\ \midrule
        $\VRP$
                    & $\cardinality D + k$          & $\paraNP$-hard                 & \Cref{thm:vrp-d-k-paranp-hard} ($\MetricTSP$)                  \\
                    & $\cardinality C$              & $\FPT$                         & \Cref{thm:vrp-c-fpt}                                           \\
                    & $r$                           & $\FPT$                         & \Cref{thm:vrp-r-fpt} ($r \ge \cardinality C$)                  \\
                    & $\tw$                         & $\FPT$                         & \Cref{thm:uncapacitated-tw-fpt}                                \\ \midrule
        $\LoadCVRP$
                    & $\ell$                        & $\paraNP$-hard                 & \Cref{thm:cvrp-cap-paraNP} ($\TrianglePacking$)                \\
                    & $\cardinality C$              & $\FPT$                         & \Cref{thm:cvrp-c-fpt}                                          \\
                    & $r$                           & $\FPT$ if $0$-edges disallowed & \Cref{thm:load-cvrp-no-zero-r-fpt} ($r \ge \cardinality C$)    \\
                    & $k + \ell$                    & $\FPT$                         & \Cref{thm:cvrp-cap-k-fpt} ($k \cdot \ell \geq \cardinality C$) \\
                    & $\tw + \cardinality D$        & $\paraNP$-hard                 & \Cref{thm:cvrp-k-w1-on-trees} ($\BinPacking$)                  \\
                    & $\tw + k + \cardinality D$    & $\W[1]$-hard                    & \Cref{thm:cvrp-k-w1-on-trees} ($\BinPacking$)                  \\
                    & $\tw + \ell$                  & $\XP$                          & \Cref{thm:loadcvrp-xp}                                         \\ \midrule
        $\LoadGasCVRP$
                    & $\tw + \ell + \cardinality D$ & $\paraNP$-hard                 & \Cref{thm:loadgascvrp_np_const_demand} ($\NTDM$)               \\
                    & $\tw + g + \cardinality D$    & $\paraNP$-hard                 & \Cref{cor:loadgascvrp_np_const_gas} ($\NTDM$)                                   \\
       \bottomrule
    \end{tabular}
    \vspace{0.5em}
    \caption{Overview of our results for Vehicle Routing variants. We omit the results for $\GasCVRP$ and $\LoadGasCVRP$ that are analogous to the listed results for $\LoadCVRP$.}
    \label{tab:capacitated_overview_parameters}
\end{table}

The main result of \Cref{sec:uncapacitated} deals with the structural properties of our network, namely its treewidth.
While there are many known $\FPT$ algorithms parameterized by treewidth for various graph problems, the $\VRP$ problem presents a unique combination of several challenging aspects. The solution is composed of arbitrarily many individual objects, i.e., walks, each of which can also have arbitrary length.
Moreover, each walk may use the same edges and vertices multiple times. Neither the number of walks in the solution, nor the length of the individual walk, nor the multiplicity of an edge or a vertex in a walk is bounded by the parameter.

The closest known starting point on the approach to $\VRP$ is the $\FPT$ algorithm by Schierreich and Such\'{y}~\cite{schierreich22} for the \textsc{Waypoint Routing Problem} parameterized by treewidth, which is effectively a single-vehicle case of the $\VRP$. While resolving some of the challenges outlined above, their algorithm does not readily allow for the incorporation of the partitioning aspect as well, i.e., to deal with multiple vehicles, over which the clients need to be partitioned.
We generalize this approach further and obtain an $\FPT$ algorithm for $\VRP$ parameterized by treewidth (\Cref{thm:uncapacitated-tw-fpt}). In fact, the result we obtain holds for the more general Edge-Capacitated Vehicle Routing Problem ($\EVRP$), which allows additionally to specify arbitrary capacities on the edges, so that each edge can be used by the solution at most the specified number of times. One of the key ingredients behind the algorithm is the combinatorial observation that allows us to restrict the number of times an edge can be used by the solution. Based on this, we introduce an equivalent reformulation of $\VRP$ that only considers what we call \emph{valid routings}, inducing Eulerian submultigraphs. This reformulation can, in turn, be solved efficiently via dynamic programming over the tree decomposition. Unfortunately, the operations in dynamic programming required for dealing with multiple walks extend beyond the established single-exponential-time machinery developed by Bodlaender et al.~\cite{bodlaender2015deterministic}. The running time that our algorithm achieves is thus of the form $2^{\O(\tw \log \tw)} \cdot n^{\O(1)}$; improving this to purely single-exponential running time remains a challenging open question. Nevertheless, this result completes the classification with respect to the outlined parameters.

We then focus on capacitated vehicle routing in \Cref{sec:capacitated}.
Recall that $\LoadCVRP$, $\GasCVRP$, and $\LoadGasCVRP$ result from adding to $\VRP$ the restriction on the maximum number of clients per vehicle $\ell$, the maximum length of a tour $g$, or both restrictions simultaneously. As our work shows, these three versions largely behave in the same way with respect to the parameters in question; therefore, we mainly focus on $\LoadCVRP$ in order to introduce our results.
Starting from our results for the uncapacitated problem, we first show that the $\FPT$ algorithm for the number of clients and, in the case of $\GasCVRP$, the combined length of all routes, can be extended to work with these new constraints in \Cref{thm:cvrp-c-fpt} and \Cref{thm:gas-cvrp-r-fpt}.
In \Cref{thm:cvrp-cap-k-fpt}, we use this to derive further tractability results that involve the new parameter restricting maximum load.

The additional constraints allow us to model different packing problems as capacitated vehicle routing problems, and by showing such inclusions, we derive several intractability results.
In \Cref{sec:triangle-packing}, we show that $\LoadCVRP$ is \NP-hard even for maximum load $3$, by providing a reduction from $\TrianglePacking$. More surprisingly, by reduction from $\BinPacking$, we can also show that treewidth alone is not enough for fixed-parameter tractability in the presence of the capacity constraints, even when combined with the number of vehicles and the number of depot vertices.

Based on this hardness result, we move on to consider treewidth in combination with the capacity parameters.
We show complete intractability for $\LoadGasCVRP$, even when combining treewidth, the maximum load, and the number of depots as a parameter, using $\NTDM$ in \Cref{sec:NTDM}.
However, $\XP$ tractability can be achieved by combining treewidth and the capacity parameter in $\LoadCVRP$ and $\GasCVRP$, as well as combining both capacity parameters with treewidth in $\LoadGasCVRP$. These algorithms are the main results of \Cref{sec:capacitated}.
To obtain these, we first introduce a generalization of the $\BinPacking$ problem that acts as an interface to the capacitated vehicle routing problems, and prove tractability of this problem for different combinations of parameters.
Finally, in \Cref{sec:cvrp-xp} we present a dynamic programming algorithm that solves $\LoadGasCVRP$ by modeling the optimal combination of partial solutions as instances of generalized $\BinPacking$.

\section{Preliminaries}

Let $\N = \set{0, 1, 2, \dots}$ be the set of natural numbers, including zero.
For $n \in \N$, we abbreviate $[n] = \set{1, 2, \dots, n}$ and $[n]_0 = \set 0 \cup [n]$.
A \emph{multiset} is an unordered collection of elements of some universe $U$, where each element can occur multiple times.
Formally, we model multisets as functions mapping each possible element to the number of occurrences.
Alternatively, we use set notations with modified brackets as in these examples: $\multiset{1, 2, 2, 3}$ (the multiset containing $1$ and $3$ once, and $2$ twice), $\multiset{\floor{n / 2} \with n \in \N}$ (the multiset containing each natural number twice).
The cardinality of a multiset $A$ is denoted as $\cardinality A = \sum_{u \in U} A(u)$.
For some sub-universe $U' \subseteq U$, we denote the restriction of $A$ to $U'$ as $A_{\downarrow U'}$.
For two multisets $A, B$ over the same universe $U$, we denote by $A + B$ the multiset union defined by $(A + B)(u) = A(u) + B(u)$ for every $u \in U$.
For $n \in \N$, we denote by $n \cdot A$ the multiset obtained by uniting $n$ copies of $A$.
Let $G$ be an undirected (multi-)graph.
By convention, we usually abbreviate $n = \cardinality{V(G)}$.
A \emph{walk} $p$ in a graph $G$ is defined as a sequence of vertices and edges $(v_1, e_1, v_2, e_2, \dots, e_{\ell-1}, v_\ell)$,  
where the sequence begins at a vertex and traverses edges of the graph, allowing for the repetition of both vertices and edges.  
The \emph{length} of a walk is given by the number of edges it contains, counted with multiplicity.  
For simplicity, we may represent a walk of length $\ell - 1$ by the sequence of its vertices, i.e., $(v_1, v_2, \dots, v_\ell)$.
A walk is \emph{closed} if the first and last visited vertices are equal, otherwise it is \emph{open}.

A (multi-)graph $G$ is called \emph{Eulerian} if all vertices in $G$ have even degree.
Equivalently, $G$ is Eulerian when it is the (multiset-)union of closed walks.
Note that, unlike the classic (connected) notion of Eulerian graphs, this definition does not require 
$G$ to be connected.
Given an edge $e \in E(G)$ and number $k \in \N$, we denote by $G - k \cdot e$ the multigraph $G$ with $k$ occurrences of $e$ removed.

Let $w \colon E(G) \to \N$ be a weight function for the edges of $G$.
For a walk $p = (v_1, v_2, \dots, v_\ell)$ in $G$ we define the weight of $p$ as $w(p) = \sum_{i = 1}^{\ell - 1} w(v_i, v_{i+1})$.
Given vertices $u, v \in V(G)$, we denote by $w(u \leadsto v)$ the weight of the shortest path from $u$ to $v$.
For a submultigraph $H \subseteq G$ we define the weight of $H$ as $w(H) = \sum_{e \in E(H)} w(e)$, where parallel edges in $H$ should appear separately in the sum, that is, the weight of $H$ accounts for the multiplicity of the edges.

Finally, we recall the definitions of tree decompositions, treewidth, and nice tree decompositions, which we will use for our treewidth-based algorithms. 
\begin{definition}[Tree Decomposition, \cite{robertson1984graph, cygan2015parameterized}]
    A \emph{tree decomposition} of an undirected graph $G$ is a pair $\T = (T, \set{X_t}_{t \in V(T)})$ consisting of a tree $T$ and for each vertex $t$ of $T$ a set of vertices $X_t \subseteq V(G)$, so that
    \begin{itemize}
        \item $\bigcup_{t \in V(T)} X_t = V(G)$,
        \item for all $\set{u, v} \in E(G)$, there is $t \in V(T)$ such that $u, v \in X_t$, and
        \item for all $v \in V(G)$, the vertices $t \in V(T)$ with $v \in X_t$ form a connected subtree of $T$.
    \end{itemize}
    The sets $X_t$ are called \emph{bags}.
    We abbreviate $t \in V(T)$ as $t \in \T$.
    The \emph{width} of $\T$ is defined as $\opwidth(\T) = \max_{t \in \T} \cardinality{X_t} - 1$ and the \emph{treewidth} of $G$ is the minimum width of any tree decomposition of $G$.
\end{definition}

\begin{definition}[Nice Tree Decomposition, \cite{cygan2015parameterized}]
    A tree decomposition $\T = (T, \set{X_t}_{t \in V(T)})$ is called \emph{nice} if $T$ is rooted at a vertex $r \in V(T)$ with $X_r = \emptyset$ and all vertices $t \in V(T)$ are of one of the following kinds:
    \begin{itemize}
        \item Leaf Node: $t$ has no children and $X_t = \emptyset$.
        \item Introduce Vertex Node: $t$ has exactly one child $t'$ and there is a vertex $v \notin X_{t'}$ such that $X_t = X_{t'} \cup \set v$.
        \item Introduce Edge Node: $t$ has exactly one child $t'$, $X_t = X_{t'}$, and $t$ is labeled with an edge $\set{u, v} \in E(G)$ with $u, v \in X_t$.
        \item Forget Node: $t$ has exactly one child $t'$ and there is a vertex $v \in X_{t'}$ such that $X_t = X_{t'} \setminus \set v$.
        \item Join Node: $t$ has exactly two children $t_1, t_2$ and $X_t = X_{t_1} = X_{t_2}$.
    \end{itemize}
    For each edge $e \in E(G)$, there must be exactly one Introduce Edge Node labeled with $e$.
    If $G$ is a multigraph, each copy of an edge is introduced separately.

    For $t \in \T$, let $S$ be the set of all descendants of $t$ in $T$ (including $t$ itself).
    We denote the set of all introduced vertices up until $t$ as $V_t^\downarrow = \bigcup_{t' \in S} X_{t'}$.
    Similarly, we define $E_t^\downarrow$ to be the (multi-)set of all introduced edges up until $t$ and $G_t^\downarrow = (V_t^\downarrow, E_t^\downarrow)$.
\end{definition}

Recall that a tree decomposition always implies a nice tree decomposition of similar size.
\begin{theorem}[\cite{cygan2015parameterized}]
    \label{thm:nice-tree-decomposition-transform}
    There is an algorithm, that given a tree decomposition $\T$, in time $2^{\O(\opwidth(\T))}n$ transforms it into a nice tree decomposition of width $\O(\opwidth(\T))$ with\\ $\O(\opwidth(\T)^{\O(1)}n)$ nodes.
\end{theorem}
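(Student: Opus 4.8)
The statement is purely structural, so the plan is to exhibit an explicit sequence of local rewriting passes over the given decomposition $\T = (T, \set{X_t}_{t \in V(T)})$, each preserving the three tree-decomposition axioms and not increasing the width, whose composition is a nice tree decomposition. The passes I would use are: (1) trim redundant nodes so that $\cardinality{V(T)}$ is bounded in terms of $n$; (2) root the tree and make every node have at most two children; (3) interpolate a path of introduce-vertex and forget nodes along every tree edge so that consecutive bags differ by exactly one vertex, and pad above the root and below every leaf so that those bags become $\emptyset$; (4) insert the introduce-edge nodes. Finally I would bound the output size and running time.

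\textbf{The passes in detail.} For pass~(1), I repeatedly contract an edge $\set{t, t'} \in E(T)$ with $X_t \subseteq X_{t'}$, letting the merged node keep the bag $X_{t'}$ and inherit all incident edges; one checks this keeps a valid decomposition of the same width. When no such edge remains, rooting the tree arbitrarily and assigning to each non-root node $t$ a vertex of $X_t \setminus X_{p(t)}$ (this set is nonempty for the parent $p(t)$) yields an injection into $V(G)$, since such a vertex has its unique topmost occurrence at $t$; hence $\cardinality{V(T)} \le n + 1$, and moreover afterwards no parent and child have comparable bags. For pass~(2), whenever a node has $d \ge 3$ children I replace it by a small binary tree of copies of it (all with the same bag), which only duplicates bags and keeps $\O(n)$ nodes. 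For pass~(3), on each parent--child edge with bags $X_t, X_c$ I insert a path that first forgets, one at a time, the vertices of $X_t \setminus X_c$ (shrinking to $X_t \cap X_c$) and then introduces, one at a time, those of $X_c \setminus X_t$; forgetting before introducing keeps every vertex's occurrence set connected. At a two-child node $t$ — destined to become a join node — each incoming branch is made to start with a node carrying the full bag $X_t$, so both children of $t$ carry $X_t$; at a one-child node the interpolation starts at $X_t$ directly, and it genuinely becomes an introduce or forget node because after pass~(1) $X_t$ is incomparable with its child's bag. Additionally I attach above the root a chain of forget nodes reducing its bag to $\emptyset$, and below every leaf a chain of introduce-vertex nodes ending in an empty Leaf Node. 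For pass~(4), for each edge $e = \set{u,v} \in E(G)$ (with multiplicity, if $G$ is a multigraph) the nodes whose bag contains both $u$ and $v$ form a nonempty subtree; I let $t_e$ be its topmost node and insert an introduce-edge node labeled $e$, with bag $X_{t_e}$, between $t_e$ and its parent, chaining together the nodes sharing the same $t_e$. Each edge thus gets exactly one introduce-edge node, and the insertion does not disturb any other node's type.

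\textbf{Wrap-up and main difficulty.} After the four passes the root and all leaves have empty bags, every node matches one of the five prescribed shapes by construction, and no pass created a bag larger than one already present, so the width is unchanged. The node count is $\O(n)$ from passes (1)--(2), plus $\O(\opwidth(\T))$ per original tree edge and per root/leaf chain from pass~(3), plus $\cardinality{E(G)} = \O(\opwidth(\T) \cdot n)$ introduce-edge nodes from pass~(4), i.e.\ $\O(\opwidth(\T)^{\O(1)} \cdot n)$ overall. Every pass runs in time polynomial in $\opwidth(\T)$ and linear in $\max(\cardinality{V(T)}, n)$ — pass~(1) dominates and afterwards $\cardinality{V(T)} = \O(n)$ — so under the standard convention that the input decomposition has $2^{\O(\opwidth(\T))}n$ nodes (met in particular by the outputs of the usual treewidth algorithms, and enforceable by the trimming of pass~(1)) the whole procedure fits within $2^{\O(\opwidth(\T))} n$. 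The construction carries no deep idea; the places that require care — and which I expect to be the most error-prone — are the node-count bound in pass~(1) (the incomparability of adjacent bags together with the topmost-occurrence injection) and the bookkeeping in passes~(3)--(4): forgetting before introducing to keep occurrence sets connected, ensuring both children of a future join node carry its bag, and placing exactly one introduce-edge node per edge without invalidating a neighbouring node's type.
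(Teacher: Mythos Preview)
The paper does not prove this theorem at all: it is stated with a citation to \cite{cygan2015parameterized} and used as a black box. Your proposal is correct and is precisely the standard construction given in that reference (trim so that adjacent bags are incomparable, binarize, interpolate forget/introduce chains, then add one introduce-edge node per edge at the topmost bag containing both endpoints); the width is in fact preserved exactly, and your justification of the $2^{\O(\opwidth(\T))}n$ running time via the size of the input decomposition together with the $\O(n)$ bound after pass~(1) is the right way to read the stated bound.
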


While the problem of determining the treewidth of a graph is $\NP$-hard~\cite{arnborg1987complexity}, there are $\FPT$ algorithms that compute exact or approximate tree decompositions.
For this work, we will use the following algorithm.

\begin{theorem}[\cite{korhonen2023single}]
    \label{thm:tw-2-approx}
    There is an algorithm that, given an $n$-vertex graph $G$ and an integer $k$, in time $2^{\O(k)}n$ either outputs a tree decomposition of $G$ of width at most $2k+1$ or concludes that $\tw(G) > k$.
\end{theorem}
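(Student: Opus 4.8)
Since this is a result cited from the literature, the goal here is to sketch the argument behind it; the plan is to follow the recursive divide-and-conquer scheme common to single-exponential treewidth approximations, with the separator bookkeeping tuned to achieve ratio $2$. The central subroutine solves a ``rooted'' version of the problem: given $G$, the target $k$, and a boundary set $W \subseteq V(G)$ with $\cardinality W \le k+1$, either output a tree decomposition of $G$ of width at most $2k+1$ whose root bag contains $W$, or correctly report $\tw(G) > k$. The theorem then follows by calling this subroutine with $W = \emptyset$.

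The recursion works as follows. If $\cardinality{V(G)}$ is bounded by a function of $k$ only, solve the rooted subproblem by brute force. Otherwise, search for a set $S \subseteq V(G)$ with $\cardinality S \le k+1$ that is simultaneously balanced with respect to $V(G)$ (each component of $G - S$ has at most $\tfrac{2}{3}\cardinality{V(G)}$ vertices) and with respect to $W$ (each component meets $W$ in at most $\tfrac{1}{2}\cardinality W$ vertices). Standard structural facts about width-$k$ tree decompositions guarantee such an $S$ exists whenever $\tw(G) \le k$; hence, failing to find one certifies $\tw(G) > k$. Given $S$, we make $W \cup S$ — of size at most $2k+2$, hence width $\le 2k+1$ — the root bag, and for each component $C$ of $G - S$ we recurse on $G[C \cup N(C)]$ with new boundary $W' = (W \cap C) \cup N(C)$, attaching the returned subtree below the root. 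The $V(G)$-balance makes the vertex count drop by a constant factor, yielding a recursion tree of total size $\O(n)$; the $W$-balance is what is supposed to keep every $W'$ of size $\le k+1$, so that the invariant — and thus the width bound — is preserved all the way down.

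Two further ingredients are needed for the $2^{\O(k)} n$ running time. First, the balanced-separator search itself must run in $2^{\O(k)} n$: this is done by a flow / min-cut procedure over candidate ``sides'', exploiting the promise that small balanced separators exist whenever $\tw(G) \le k$, and it can be bootstrapped using a coarser decomposition computed on the fly. Second, every recursion node must perform only $2^{\O(k)}$ additional work beyond the separator call, and the recursion must have linearly many nodes — both of which follow from the balance properties together with the size-$O(k)$ boundaries.

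The main obstacle is exactly the part glossed over above: making the width bound $2k+1$ rather than a larger multiple of $k$. Naively, $\cardinality{W'} = \cardinality{W \cap C} + \cardinality{N(C)}$ can be as large as $\tfrac{1}{2}(k+1) + (k+1)$, which would let boundaries, and hence bags, grow by $\Theta(k)$ per level — exactly as in earlier $5$-approximations. Bringing this down to $k+1$ requires a careful accounting of how much of each separator is actually passed to each branch (for example, re-rooting and ``cleaning'' the partial decomposition between splits so that only the genuinely active part of $S$ is remembered), together with a proof that these repair operations still satisfy the tree-decomposition axioms while the recursion terminates in $\O(n)$ steps. This amortized bookkeeping is where essentially all the difficulty lies; everything else is the familiar separator-and-recurse machinery.
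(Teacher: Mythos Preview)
The paper does not prove this theorem at all: it is stated as a black-box result imported from \cite{korhonen2023single} and is used only as a preprocessing subroutine (via \Cref{thm:nice-tree-decomposition-transform}) before the dynamic programming. There is thus no ``paper's own proof'' to compare your sketch against.

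Your sketch is a reasonable high-level narrative of the Robertson--Seymour / Bodlaender recursive-separator paradigm, and you correctly identify where the difficulty lies in pushing the approximation ratio down to $2$. That said, since the paper treats the statement purely as a citation, any sketch here is extra material rather than something to be matched against the text; if you want to keep it, it would be appropriate to frame it explicitly as an informal outline of the cited work rather than as a proof belonging to this paper.
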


\subsection{Classifying Vehicle Routing Problems}
\label{sec:classification}

The literature deals with many variants of vehicle routing problems~\cite{braekers2016vehicleSurvey,mor2022vehicleSurvey}.
To incorporate those in a single notion, we introduce the following \emph{Generalized Vehicle Routing Setting}.

\begin{definition}[Generalized Vehicle Routing Setting]
    \label{def:gvrs}
   An instance of the \emph{Generalized Vehicle Routing Setting} ($\GVRS$) consists of a graph $G$ with edge weights $w\colon E(G) \to \N$, a nonempty set of \emph{depots} $D \subseteq V(G)$, a set of \emph{clients} $C \subseteq V(G)$, an integer $k \in \N$ representing the number of vehicles, and a predicate $\Gamma$ modeling additional constraints. 
    We call a set of $k'$ walks $R = \set{p_1, p_2, \dots, p_{k'}}$ together with an assignment of vehicles $A\colon C \to [k']$ a \emph{routing} for $(G, w, D, C, k, \Gamma)$ if
    \begin{enumerate}
        \item $k' \le k$ (the routing uses at most $k$ vehicles),
        \item $\forall p = (v_1, v_2, \dots, v_\ell) \in R: v_1 \in D$ (every walk starts in one of the depots),
        \item $\forall c \in C: c \in p_{A(c)}$ (every client is visited by its assigned vehicle), and
        \item $\Gamma(R, A)$ (the additional constraints are satisfied).
    \end{enumerate}
    The weight of a routing $R$ is defined as
    $ w(R) = \sum_{p \in R} w(p)$.
    
    A problem instance may ask whether a feasible routing exists or whether there exists a routing $R$ with $w(R) \le r$,  for some $r \in \N$.
    A routing of minimum weight is called \emph{optimal}.
\end{definition}

We can see that vehicle routing problems generalize several well-studied combinatorial problems using this framework.
The \emph{Traveling Salesperson Problem} ($\TSP$) is equivalent to a $\GVRS$ where $k = 1$, $C = V(G)$, and with an additional constraint of
\[
    \Gamma_{\text{simple-cycle}}(R, A) \Iff \forall p \in R: p \text{ is a simple cycle}.
\]
The additional constraints can also require a strict form of routing, such as packing problems.
The \emph{$P_3$-packing} problem is equivalent to a $\GVRS$ with $\cardinality{V(G)} = 3k$, $C = D = V(G)$, and the additional constraint
\[
    \Gamma_{P_3}(R, A) \Iff \forall p \in R: p \text{ is a simple path of 3 vertices}.
\]

\subsection{Studied Problems}

This paper studies four vehicle routing problems, denoted here as $\VRP$, $\LoadCVRP$, $\GasCVRP$, and $\LoadGasCVRP$ for undirected graphs.
The problem definitions will contain the additional constraint that each vehicle must return to the depot at which it started.
More formally, we will use
\[
    \Gamma_{\text{closed}}(R, A) \Iff \forall p \in R: p \text{ is closed}.
\]

\begin{definition}[$\VRP$]
    \label{def:vrp}
    The \emph{(Uncapacitated) Vehicle Routing Problem} ($\VRP$) is a $\GVRS$ with an additional parameter $r \in \N$.
    It asks whether there exists a routing $(R, A)$ with $w(R) \le r$, subject to $\Gamma = \Gamma_{\text{closed}}$.
\end{definition}



In \Cref{sec:uncapacitated}, we obtain the main result for $\VRP$, a parameterized algorithm for instances with bounded treewidth, by considering $\EVRP$, a slightly generalized version of $\VRP$.

\begin{upright-definition}[$\EVRP$]
    \label{def:evrp}
    The \emph{Edge-Capacitated Vehicle Routing Problem} ($\EVRP$) is a $\GVRS$ with additional parameters $r \in \N$ and $\kappa\colon E(G) \to \N$.
    It asks whether there exists a routing $(R, A)$ with $w(R) \le r$, subject to $\Gamma = \Gamma_{\text{closed}} \cap \Gamma_\kappa$, where
    \[
        \Gamma_\kappa(R, A) \Iff \forall e \in E: e \text{ is used at most $\kappa(e)$ times in $R$}. \lipicsEnd
    \]
\end{upright-definition}

The literature on vehicle routing problems often considers \emph{capacitated} variants~\cite{braekers2016vehicleSurvey,mor2022vehicleSurvey}, usually denoted as $\CVRP$.
In contrast to $\EVRP$, these capacities are placed on the vehicles individually.
However, the term $\CVRP$ can refer to different kinds of capacities: \emph{load} and \emph{gas} capacities.
In this paper, we consider both interpretations of $\CVRP$.

In $\LoadCVRP$, each client $c$ has a demand for $\Lambda(c)$ units of some good, which must be delivered by its assigned vehicle.
The vehicles, in turn, have a capacity $\ell$ on the \emph{load} they can carry on their trip.
Notably, this capacity $\ell$ is equal for all vehicles.

\begin{definition}[$\LoadCVRP$]
    The \emph{Load-Capacitated Vehicle Routing Problem} ($\LoadCVRP$) is a $\GVRS$ with additional parameters $r \in \N$, $\ell \in \N$, and $\Lambda\colon C \to \N^+$.
    It asks whether there exists a routing $(R, A)$ with $w(R) \le r$, subject to $\Gamma = \Gamma_{\text{closed}} \land \Gamma_{\ell, \Lambda}$, where
    \[
        \Gamma_{\ell, \Lambda}(R, A) \Iff \forall i \in [\cardinality R]: \sum_{c \in A^{-1}(i)} \Lambda(c) \le \ell.
    \]
\end{definition}

A similar restriction is given in $\GasCVRP$, where each route is limited by the amount $g$ of \emph{gas} (fuel) a vehicle can carry.
Similar to $\LoadCVRP$, the capacity $g$ is equal for all vehicles.

\begin{upright-definition}[$\GasCVRP$]
    The \emph{Gas-Capacitated Vehicle Routing Problem} ($\GasCVRP$) is a $\GVRS$ with additional parameters $r \in \N$ and $g \in \N$.
    It asks whether there exists a routing $(R, A)$ with $w(R) \le r$, subject to $\Gamma = \Gamma_{\text{closed}} \land \Gamma_g$, where
    \[
        \Gamma_g(R, A) \Iff \forall p \in R: w(p) \le g. \lipicsEnd
    \]
\end{upright-definition}

Finally, $\LoadGasCVRP$ combines the constraints of $\LoadCVRP$ and $\GasCVRP$.

\begin{definition}[$\LoadGasCVRP$]
    The \emph{Load-and-Gas-Capacitated Vehicle Routing Problem} ($\LoadGasCVRP$) is a $\GVRS$ with additional parameters $r \in \N$, $\ell \in \N$, $\Lambda\colon C \to \N^+$, and $g \in \N$.
    It asks whether there exists a routing $(R, A)$ with $w(R) \le r$, subject to $\Gamma = \Gamma_{\text{closed}} \land \Gamma_{\ell, \Lambda} \land \Gamma_g$.
\end{definition}

\section{Uncapacitated Vehicle Routing}
\label{sec:uncapacitated}

In this section, we establish the parameterized landscape around the Uncapacitated Vehicle Routing Problem.
We show hardness of $\VRP$ when parameterized by the number of depots and vehicles.
In turn, we show that $\VRP$ is tractable when parameterized by the number of clients.
Finally, we present an algorithm which proves that $\EVRP$ (and thereby $\VRP$) is tractable on graphs of bounded treewidth.

Note that for $\VRP$ routings, it suffices to find a set $R$ of at most $k$ closed walks that connect every client to a depot.
This is because the predicate $\Gamma$ does not pose any requirements on the assignment of vehicles $A$.
In fact, an assignment $A$ can be chosen based on $R$ by arbitrarily selecting one of the vehicles which visit a given client.
We thus do not mention the assignment $A$ any further in this section.


Note that $\MetricTSP$ is a special case of \VRP~ where  $C = V(G)$, $D = \set v$, and $k = 1$.  This yields the following result:
\begin{theorem}
    \label{thm:vrp-d-k-paranp-hard}
    $\VRP$ parameterized by $\cardinality D + k$ is $\paraNP$-hard.
\end{theorem}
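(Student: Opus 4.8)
The plan is to realize $\VRP$ as a direct generalization of $\MetricTSP$, so that the classical $\NP$-hardness of the latter already witnesses intractability in the slice of $\VRP$ where the parameter $\cardinality D + k$ is a fixed constant. Recall that $\MetricTSP$ --- given a complete graph $G$ with metric edge weights $w$ and a bound $r$, decide whether $G$ has a Hamiltonian cycle of weight at most $r$ --- is $\NP$-hard. Given such an instance $(G, w, r)$, I would output the $\VRP$ instance $(G, w, D, C, k, r)$ with $C = V(G)$, $k = 1$, and $D = \set v$ for an arbitrary fixed vertex $v \in V(G)$. This transformation is computable in linear time, and the produced instance satisfies $\cardinality D + k = 2$ regardless of the input.

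It then remains to verify that $(G, w, r)$ is a yes-instance of $\MetricTSP$ if and only if $(G, w, \set v, V(G), 1, r)$ is a yes-instance of $\VRP$. For the forward direction, a Hamiltonian cycle of weight at most $r$ can be listed starting and ending at $v$ (rotate the cyclic vertex sequence so that $v$ is first); this is a single closed walk starting in $D$, visiting every vertex of $C = V(G)$, of weight at most $r$, hence a feasible $\VRP$ routing. For the backward direction, suppose there is a single closed walk $p$ starting at $v$ that covers $V(G)$ with $w(p) \le r$. Since $G$ is complete and $w$ satisfies the triangle inequality, repeatedly shortcutting $p$ past vertices it has already visited yields a Hamiltonian cycle whose weight does not exceed $w(p) \le r$. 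Thus the two instances are equivalent.

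Combining these, $\VRP$ is $\NP$-hard already when restricted to instances with $\cardinality D + k = 2$. By the standard fact that a parameterized problem which is $\NP$-hard for a fixed value of its parameter is $\paraNP$-hard, this proves that $\VRP$ parameterized by $\cardinality D + k$ is $\paraNP$-hard. I do not expect a genuine obstacle here; the only step needing a word of care is the shortcutting argument in the backward direction, which is routine and relies solely on the metricity assumption that is part of the $\MetricTSP$ input.
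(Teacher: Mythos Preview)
Your proposal is correct and takes essentially the same approach as the paper: both observe that $\MetricTSP$ is realized inside $\VRP$ by setting $C = V(G)$, $D = \set v$, $k = 1$, so $\NP$-hardness persists at the fixed parameter value $\cardinality D + k = 2$. Your added justification of equivalence via shortcutting is a fair point of care, since a $\VRP$ routing is a closed walk rather than a simple cycle, but the paper treats this step as immediate.
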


Note that the above observation still requires a large number of clients.
In the next theorem, we see that we can efficiently solve $\VRP$ instances for constant numbers of clients.

\begin{theorem}
    \label{thm:vrp-c-fpt}
    There is an algorithm that computes an optimal $\VRP$ routing in $\cardinality{C}^{\O(\cardinality C)} \cdot n^{\O(1)}$ steps.
\end{theorem}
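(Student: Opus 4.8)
Since the only constraint of $\VRP$ is $\Gamma_{\text{closed}}$ and, as already observed, the assignment $A$ plays no role, it suffices to compute a set $R$ of at most $k$ closed walks, each starting (hence ending) at a vertex of $D$, that together visit every client, and to minimize $w(R)$. The plan is to reduce this to a partition problem over the (few) clients. First I would record the key structural fact: for a nonempty $S = \set{c_1,\dots,c_m}\subseteq C$, define
\[
    \mathrm{cost}(S) \;=\; \min_{d\in D}\ \min_{\sigma}\Big( w(d\leadsto c_{\sigma(1)}) + \sum_{i=1}^{m-1} w(c_{\sigma(i)}\leadsto c_{\sigma(i+1)}) + w(c_{\sigma(m)}\leadsto d) \Big),
\]
where $\sigma$ ranges over the orderings of $S$, and set $\mathrm{cost}(\emptyset)=0$. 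Then $\mathrm{cost}(S)$ equals the minimum weight of a closed walk that starts at some depot and visits all of $S$: the ``$\le$'' direction follows by concatenating the corresponding shortest paths $d\leadsto c_{\sigma(1)}\leadsto\cdots\leadsto c_{\sigma(m)}\leadsto d$, and the ``$\ge$'' direction by cutting any such closed walk at the \emph{first} visits of the clients it serves and lower-bounding each of the resulting consecutive segments by a shortest path. In particular $\mathrm{cost}$ is monotone: $S'\subseteq S$ implies $\mathrm{cost}(S')\le\mathrm{cost}(S)$.

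From this, an optimal routing corresponds to a partition of $C$ into at most $\min(k,\cardinality C)$ nonempty blocks: given any solution, assign every client to one walk that visits it, let $S_1,\dots,S_{k'}$ be the (nonempty) client sets so obtained; then $w(R)\ge\sum_j\mathrm{cost}(S_j)$ by the fact above together with monotonicity (each $S_j$ is contained in the full set of clients of its walk). Conversely, any such partition yields a routing of weight $\sum_j\mathrm{cost}(S_j)$, by serving each block with an optimal closed walk for its best depot and padding with trivial one-vertex walks up to $k$. Hence the optimum is $\min\sum_j\mathrm{cost}(S_j)$ over all partitions of $C$ into at most $\min(k,\cardinality C)$ blocks. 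The algorithm is now immediate: compute all pairwise shortest-path weights $w(u\leadsto v)$ for $u,v\in C\cup D$ (together with shortest paths, for reconstruction) in $n^{\O(1)}$ time; enumerate all partitions of $C$ into at most $k$ nonempty blocks and, for each block, all of its orderings; evaluate each resulting candidate by summing, over its ordered blocks, the best-depot cost in $\O(n)$ time per block; and output the routing realizing the minimum (or report infeasibility if the minimum is infinite). There are at most $\cardinality{C}^{\O(\cardinality{C})}$ such candidates, each evaluated in $n^{\O(1)}$ time, giving the claimed bound $\cardinality{C}^{\O(\cardinality{C})}\cdot n^{\O(1)}$.

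The step I expect to require the most care is the structural fact from the first paragraph: one must argue cleanly that allowing walks to revisit vertices and edges arbitrarily buys nothing beyond shortest paths between consecutive newly-visited clients, and that the requirement that each walk is \emph{closed} forces the minimization over depots to be performed for a single depot per block (the walk leaves and returns to the \emph{same} vertex of $D$), not independently for its two endpoints; the ``first visit'' decomposition handles both points at once. The reduction to a partition then relies on the monotonicity of $\mathrm{cost}$ to discard the overlaps between the client sets of distinct walks of a solution; the remaining bookkeeping (padding with trivial walks up to $k$, the cases $C=\emptyset$ or $k=0$) is routine. As a remark, tabulating all values $\mathrm{cost}(S)$ by a Held--Karp-style subset dynamic program per depot and then choosing the best partition into at most $k$ blocks by a second subset dynamic program improves the running time to $2^{\O(\cardinality{C})}\cdot n^{\O(1)}$, but the stated bound already follows from the plain enumeration above.
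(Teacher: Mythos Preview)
Your proposal is correct and follows essentially the same approach as the paper: enumerate partitions of $C$ into at most $k$ blocks together with an ordering of each block, precompute all-pairs shortest paths over $C\cup D$, and for each ordered block pick the best depot. Your write-up is in fact more careful than the paper's, which simply asserts that ``any optimal routing $R$ induces a partition $\P$ and permutation $\sigma$'' without spelling out the first-visit decomposition and the monotonicity of $\mathrm{cost}$ that you use to justify the $\ge$ direction.
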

\begin{proof}
    We iterate over all partitions $\P$ and permutations $\sigma$ of $C$.
    If $\cardinality\P > k$, we skip the iteration.
    Otherwise, for every block $X \in \P$, let $c_1, c_2, \dots, c_\ell$ be the order of $X$ according to $\sigma$.
    We choose the depot $d_X \in D$ such that $w_X = w(d_X \leadsto c_1 \leadsto c_2 \leadsto \cdots \leadsto c_\ell \leadsto d_X)$ is minimized.
    This gives a routing that uses at most $k$ vehicles with weight $\sum_{X \in \P} w_X$.

    There are $\cardinality{C}^{\cardinality{C}} \cdot \cardinality{C}! \le \cardinality{C}^{\O(\cardinality C)}$ iterations.
    By first precomputing a distance-table over $D \cup C$ via any all-pairs-shortest-paths algorithm in $n^{\O(1)}$, each iteration takes
    \begin{itemize}
        \item $\O(\cardinality C)$ time to compute $w(c_1 \leadsto c_2 \leadsto \cdots \leadsto c_\ell)$, and
        \item $\O(\cardinality D)$ time to compute the weight $w(d \leadsto c_1) + w(c_\ell \leadsto d)$ for every depot $d \in D$.
    \end{itemize}
    In total, this gives a running time of $\cardinality{C}^{\O(\cardinality C)} (\cardinality C + \cardinality D) + n^{\O(1)} \le \cardinality{C}^{\O(\cardinality C)} \cdot n^{\O(1)}$.
    As any optimal routing $R$ induces a partition $\P$ and permutation $\sigma$, the found routing has minimum weight.
\end{proof}

While the above algorithm outputs an optimal routing, it can also be used to solve the decision variant of $\VRP$ as defined in \Cref{def:vrp}.
It also yields the following corollary:

\begin{corollary}
    \label{thm:vrp-r-fpt}
    There is an algorithm that, given $r \in \N$, decides whether there exists a $\VRP$ routing of weight at most $r$ in $f(r) \cdot n^{\O(1)}$ steps, for some computable function $f$.
\end{corollary}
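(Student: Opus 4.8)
The plan is to preprocess the instance so that its client set has size $\O(r)$, and then run the algorithm of \Cref{thm:vrp-c-fpt}.

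The core observation is that a ``nondegenerate'' yes-instance has few clients. Assume first that $G$ has no zero-weight edges and that no client is a depot, and let $R$ be a routing of weight at most $r$. Let $H$ be the union of the edge sets of the walks in $R$, viewed as a simple subgraph of $G$. Every client lies in $H$, and its connected component in $H$ also contains a depot, because every walk of $R$ starts at a depot. A component of $H$ whose client set is $C_i$ therefore has at least $\cardinality{C_i} + 1$ vertices and hence at least $\cardinality{C_i}$ edges; summing over components and using that every edge weight is at least $1$ and every edge of $H$ is traversed by $R$ at least once, we obtain
\[
    \cardinality C \;\le\; \cardinality{E(H)} \;\le\; w(H) \;\le\; w(R) \;\le\; r .
\]
Hence we may reject the instance if $\cardinality C > r$, and otherwise \Cref{thm:vrp-c-fpt} decides it in time $\cardinality C^{\O(\cardinality C)}\cdot n^{\O(1)} \le r^{\O(r)}\cdot n^{\O(1)}$.

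To handle an arbitrary instance, we first contract every edge of weight $0$. Identifying vertices joined by a $0$-weight path changes neither the existence nor the weight of a routing, nor the number of vehicles it uses: a routing of $G$ projects to one of the contracted graph, and a routing of the contracted graph lifts back by inserting $0$-weight detours inside each supernode (starting a lifted walk at an actual depot of the depot-supernode it began in). After this step all edges have weight at least $1$. It remains to treat the clients that now lie in a depot-supernode. Such a client can be served by a dedicated weight-$0$ closed walk at that supernode, so intuitively it can be removed at the cost of one vehicle. The one subtlety is that a nontrivial walk of $R$ may already pass through that depot-supernode and serve its clients for free, saving a vehicle; but the nontrivial walks have total weight at most $r$, hence together visit only $\O(r)$ supernodes, so at most $\O(r)$ of the depot-located clients can be absorbed in this way, and the remaining ones merely lower the vehicle budget $k$. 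Incorporating this into the enumeration of \Cref{thm:vrp-c-fpt} --- guess the partition and the orderings of the $\O(r)$ non-depot clients over the at most $r$ nontrivial walks and their start depots, then determine how many depot-located clients can be picked up along suitable shortest paths and whether the rest fit within the remaining budget --- keeps the running time $r^{\O(r)}\cdot n^{\O(1)}$.

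The main obstacle is exactly this last step: arguing that the ``free'' clients produced by zero-weight edges can be eliminated while the vehicle budget is updated correctly. Everything else is essentially immediate, the combinatorial heart being the inequality $\cardinality C \le r$ for nondegenerate yes-instances.
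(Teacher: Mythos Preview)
Your overall strategy --- contract zero-weight edges, bound the number of clients by $r$, invoke \Cref{thm:vrp-c-fpt} --- is exactly the paper's. The paper's argument is shorter: after contraction it observes that every edge has positive integer weight, asserts that any routing of weight at most $r$ therefore visits at most $r$ vertices, rejects when $\cardinality C > r$, and otherwise runs \Cref{thm:vrp-c-fpt}. It does not single out the case you worry about, where a contracted supernode is both a client and a depot and can be served by a trivial weight-$0$ walk.

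Your handling of that case, however, has a genuine gap. You propose to enumerate the partitions, orderings, and start depots for the $\O(r)$ non-depot clients, build each nontrivial walk from shortest paths, and then ``determine how many depot-located clients can be picked up along suitable shortest paths''. This step is not justified: an optimal routing may deliberately \emph{detour} to absorb additional depot-clients, trading spare weight budget for saved vehicles. Concretely, take $C\setminus D=\{c\}$, $D=C\cap D=\{d_1,d_2\}$, edges $cd_1,cd_2$ each of weight $1$ and no edge $d_1d_2$, with $r=4$ and $k=1$. Your enumeration only produces walks of the form $d_i\to c\to d_i$ (weight $2$), each absorbing a single depot-client and forcing a second vehicle for the other, so you would reject; but the single walk $d_1\to c\to d_2\to c\to d_1$ of weight $4$ covers all three clients with one vehicle, so the instance is a yes-instance. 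Repairing this by inserting the absorbed depot-clients as explicit waypoints requires choosing which $\le r$ of the up to $\cardinality{C\cap D}$ depot-clients to absorb, which is $n^{\O(r)}$ choices rather than $f(r)\cdot n^{\O(1)}$; the claimed running time does not follow from the sketch as written.
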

\begin{proof}
    We first contract any zero-weight edges in $G$.
    If the vertices $u, v$ are contracted, the new vertex $v'$ should be a client (depot) if either of $u$ and $v$ was a client (depot).
    This does not alter the existence of the routings we are interested in, as any routing which visits a new vertex $v'$ can visit all original vertices without incurring an additional cost via the contracted edge.
    Recall that we assume the weights $w$ to be integral.
    Now that every edge has positive weight, a routing $R$ can contain at most $w(R)$ vertices.
    Thus, we can reject the instance if $r < \cardinality C$, and otherwise use the algorithm from \Cref{thm:vrp-c-fpt} to obtain an optimal routing and compare its weight with $r$.
\end{proof}


Next, we present an FPT algorithm for $\VRP$ on graphs of bounded treewidth parameterized by the treewidth. Our algorithm generalizes the prior work by Schierreich and Su\'{c}hy~\cite{schierreich22} on the \emph{Waypoint Routing Problem} ($\WRP$). $\WRP$ can be formulated in terms of $\GVRS$ as follows:

\begin{upright-definition}[$\WRP$]
    The \emph{Waypoint Routing Problem} ($\WRP$) is a $\GVRS$ with $k = 1$, $D = \set s$, a designated vertex $t \in V(G)$, and additional parameters $r \in \N$ and $\kappa\colon E(G) \to \N$.
    It asks whether there exists a routing $R$ (consisting of a single walk, due to $k = 1$) with $w(R) \le r$, subject to $\Gamma = \Gamma_t \cap \Gamma_\kappa$, where
    \[
        \begin{aligned}
            \Gamma_t(R, A) &\Iff \forall p \in R: p \text{ ends at $t$,} \\
            \Gamma_\kappa(R, A) &\Iff \forall e \in E: e \text{ is used at most $\kappa(e)$ times in $R$}.
        \end{aligned}\lipicsEnd
    \]
\end{upright-definition}

\begin{theorem}[\cite{schierreich22}]
    \label{thm:wrp-tw-fpt}
    There is an algorithm that computes an optimal $\WRP$ routing in $2^{\O(\tw)} \cdot n^{\O(1)}$ steps, where $\tw$ denotes the treewidth of $G$.
\end{theorem}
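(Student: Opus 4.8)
The plan is to solve $\WRP$ by a dynamic program over a nice tree decomposition, following the usual recipe for connectivity problems on bounded treewidth but with two ingredients specific to walks. As a preprocessing step I would invoke \Cref{thm:tw-2-approx} and \Cref{thm:nice-tree-decomposition-transform} to obtain a nice tree decomposition $\T$ of width $\O(\tw)$. The crucial reformulation is this: a single walk from $s$ to $t$ visiting a prescribed set of waypoints corresponds exactly, via its multiset of traversed edges, to a sub-multigraph $H \subseteq G$ that (i) uses each edge $e$ at most $\kappa(e)$ times, (ii) contains $s$, $t$ and every waypoint among its vertices of positive degree, (iii) is connected on its vertices of positive degree, and (iv) has odd degree precisely at $s$ and $t$ (and even degree everywhere if $s=t$) — for then $H$ admits an Eulerian $s$--$t$ trail, which is the sought walk, and conversely every such walk induces such an $H$. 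Since $w(R)$ equals $\sum_{e \in E(H)} w(e)$ counted with multiplicity, the problem becomes: find a minimum-weight $H$ satisfying (i)--(iv) (the degenerate cases $s=t$ and the empty walk are handled separately).

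The second ingredient is the combinatorial observation that there is always an optimal $H$ using each edge at most twice, so one may replace $\kappa$ by $\min(\kappa,2)$ and record per edge only whether it is used $0$, $1$, or $2$ times; this keeps the state space single-exponential. The DP table at a node $t$ is then indexed by: (a) a status for each vertex of the bag $X_t$ — \emph{unused} (degree $0$ in $H \cap G_t^\downarrow$), or used with \emph{even} resp.\ \emph{odd} degree — and (b) a partition of the used bag vertices into blocks, one per connected component of $H \cap G_t^\downarrow$ having a representative in $X_t$; the value stored is the minimum weight of a partial $H$ with this signature in which every already-forgotten waypoint (and $s$, $t$ once forgotten) carries its correct final status and lies in a component with a surviving bag representative. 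The transitions are routine: introduce-vertex adds an \emph{unused} singleton; introduce-edge $\{u,v\}$ branches on using the edge $0$, $1$, or $2$ times, flipping the parities of $u$ and $v$ and merging their blocks when the edge is used; forget-vertex $v$ demands the correct final parity of $v$ and, to force global connectivity, forbids orphaning a component (losing its last bag representative), the unique final component being accounted for at the root with empty bag; join adds degrees over the shared bag (hence XOR-ing parities) and takes the transitive closure of the union of the two children's partitions; the answer is read off at the root.

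The step I expect to be the main obstacle is obtaining the purely single-exponential running time $2^{\O(\tw)} \cdot n^{\O(1)}$ rather than $2^{\O(\tw \log \tw)} \cdot n^{\O(1)}$, since a table indexed by partitions of a bag has $\tw^{\Theta(\tw)}$ entries and the join node manipulates those partitions directly. This is exactly where the rank-based / representative-sets machinery of Bodlaender et al.~\cite{bodlaender2015deterministic} is needed: for each fixed status vector one encodes the set of realizable partition-signatures as a matrix over $\mathbb{F}_2$ whose row space captures the connectivity information, and repeatedly replaces this set by a representative subset of size $2^{\O(\tw)}$ that is preserved under all DP operations, crucially under the join. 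The technical heart is verifying that the walk-specific parts of the state — the three-valued status per vertex, the orphaned-component bookkeeping, and the waypoint constraints — only blow up the state by a single-exponential factor and commute with the representative-set reduction; once this is in place, correctness of the recurrences (that they compute exactly the minimum-weight $H$ satisfying (i)--(iv)) follows by a standard bottom-up induction over $\T$.
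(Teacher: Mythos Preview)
Your proposal is correct and follows essentially the same approach that the paper attributes to~\cite{schierreich22}: reformulate $\WRP$ as finding a minimum-weight connected Eulerian sub-multigraph containing $s$, $t$, and all waypoints, cap edge multiplicities at two, and run a partition-based DP over a nice tree decomposition with the rank-based representative-sets machinery of~\cite{bodlaender2015deterministic} to bring the running time down from $\tw^{\O(\tw)}$ to $2^{\O(\tw)}$. The only cosmetic difference is that~\cite{schierreich22} first normalizes to the closed-walk case $s=t$ (so every vertex has even degree), whereas you handle the open case directly by allowing odd parity at $s$ and $t$.
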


Schierreich and Su\'{c}hy~\cite{schierreich22} obtain this result by reducing $\WRP$ to a normal form where $s = t$, effectively replacing the constraint $\Gamma_t$ by $\Gamma_{\text{closed}}$.
Next, they reformulate the problem to remove the parameter $\kappa$:
Every edge $e \in E(G)$ is replaced by $\kappa(e)$ many parallel edges $e^1, e^2, \dots, e^{\kappa(e)}$, each with weight $w(e)$.
Given the obtained multigraph $G'$, $\WRP$ now asks to find a connected, Eulerian subgraph $H \subseteq G$, which contains the depot $s$ and all clients $C$.

Since $\WRP$ is very similar to a single-vehicle case of $\VRP$, we aim to adapt the techniques from \cite{schierreich22} to our work.
In particular, we slightly generalize the $\VRP$ problem and consider $\EVRP$ that incorporates edge capacities (see \Cref{def:evrp}), making $\WRP$ a proper special case of our problem.
The introduction of edge capacities $\kappa$ to $\VRP$ yields the same reformulation to multigraphs as found for $\WRP$.
\begin{lemma}
    \label{thm:evrp-routing-multigraph-equiv}
    Let $G'$ be the multigraph obtained by replacing all edges $e$ of $G$ by $\kappa(e)$ parallel edges as described above.
    There is an $\EVRP$ routing $R$ of weight $r \in \N$ in $G$ if and only if there is an Eulerian subgraph $H \subseteq G'$ of weight $w(H) = r$ with $ C \subseteq V(H)$ and at most $k$ connected components, so that every client $c \in C$ is reachable from a depot $d \in D$.
\end{lemma}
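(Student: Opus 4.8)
The plan is to exhibit an explicit translation between an $\EVRP$ routing and an Eulerian submultigraph of $G'$, exploiting the characterization (already noted in the excerpt) that a multigraph is Eulerian exactly when it is a multiset-union of closed walks. Both directions then amount to bookkeeping on top of this correspondence, together with a count of connected components.

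For the forward direction, let $R = \{p_1, \dots, p_{k'}\}$ with $k' \le k$ be an $\EVRP$ routing of weight $r$. I would build $H$ by assigning to each individual traversal of an edge $e$ of $G$ along some $p_i$ a distinct parallel copy of $e$ in $G'$; since $R$ satisfies $\Gamma_\kappa$, edge $e$ is traversed at most $\kappa(e)$ times in total, so the $\kappa(e)$ available copies always suffice. I also put into $V(H)$ every vertex visited by $R$, which ensures $C \subseteq V(H)$. Each $p_i$ is closed, so the edge multiset of $p_i$ has all even degrees, and a multiset-union of even-degree multigraphs is again even-degree; hence $H$ is Eulerian. Since every copy of $e$ in $G'$ carries weight $w(e)$, weights match edge copy by edge copy and $w(H) = \sum_i w(p_i) = w(R) = r$. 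Each $p_i$ spans a connected subgraph of $H$ (a single vertex if $p_i$ is trivial), so $H$ has at most $k' \le k$ components. Finally, every client $c$ lies on the walk $p_{A(c)}$, whose first vertex is a depot, so $c$ is reachable from that depot inside $H$ along the prefix of $p_{A(c)}$ up to $c$ (or trivially when $c$ is itself a depot).

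For the backward direction, let $H \subseteq G'$ be Eulerian with $w(H) = r$, at most $k$ components, $C \subseteq V(H)$, and every client reachable in $H$ from a depot. I would process each connected component $K$ of $H$ independently. If $K$ contains at least one client then, by the reachability hypothesis, $K$ also contains a depot $d_K$, and since $K$ is connected and Eulerian it has an Eulerian circuit; starting this circuit at $d_K$ yields a closed walk using each edge of $K$ exactly once (the degenerate edgeless case is a single vertex that is simultaneously a client and a depot, covered by the trivial walk at it). Components containing a depot but no client are treated identically. Mapping each parallel copy back to its original edge of $G$ turns these circuits into walks in $G$; distinct copies correspond to distinct traversals and $G'$ has only $\kappa(e)$ copies of $e$, so $\Gamma_\kappa$ holds. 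This gives one closed, depot-rooted walk per relevant component, hence at most $k$ walks, jointly covering all of $C$, with total weight $\sum_K w(K) = w(H) = r$; choosing any assignment $A$ that sends each client to a vehicle visiting it completes the routing.

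The step needing the most care is that a valid $H$ might in principle contain a connected component with neither a client nor a depot, which no depot-rooted walk can realize. Here one observes that such components may be assumed away: deleting one preserves being Eulerian, the inclusion $C \subseteq V(H)$ (it held no client), the component bound, and the reachability condition, while not increasing the weight — so in particular a minimum-weight witness never has such a component, which is all that is needed for the intended use of the lemma. The remaining details — trivial walks for isolated client-depots, matching the number of walks with the number of components, and the one-to-one pairing of edge traversals with parallel copies — are routine.
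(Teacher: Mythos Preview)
Your proposal is correct and follows essentially the same approach as the paper: for the forward direction, assign each traversal of $e$ to a distinct parallel copy in $G'$; for the backward direction, take an Eulerian circuit in each connected component of $H$, root it at a depot, and map copies back to the underlying edge. You are in fact more careful than the paper's own proof, which simply says to ``find an Eulerian trail in every connected component of $H$'' without addressing components that contain neither a client nor a depot; your observation that such components can be dropped (so that a minimum-weight witness never contains one) is the right fix for the exact-weight formulation and matches how the lemma is actually used downstream.
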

\begin{proof}
    We proceed similar to \cite{schierreich22}.
    Given an $\EVRP$ routing $R$, we can transform the walks of $R$ into a set of walks $R'$ in $G'$ by replacing the first occurrence of an edge $e$ in $R$ by $e^1$, the second by $e^2$, and so on.
    As $R$ is an $\EVRP$ routing, it uses each edge $e$ at most $\kappa(e)$ times.
    Define $H$ so that $V(H)$ and $E(H)$ are the vertices and edges traversed by the walks of $R'$ respectively.
    Clearly, $w(H) = w(R') = w(R)$.
    As all walks in $R'$ are cyclic (Eulerian), $H$ is the required subgraph.
    Additionally, as $R$ is a routing, every client must be reachable from a depot in $H$.
    Finally, $\cardinality R \le k$ implies that $H$ has at most $k$ connected components.

    In turn, given $H$, we can obtain an $\EVRP$ routing by finding an Eulerian trail in every connected component of $H$ and mapping all edges $e^i$ from $G'$ to the edge $e$ in $G$.
    It is easy to check that this routing fulfills the conditions of the lemma.
\end{proof}

Based on \Cref{thm:evrp-routing-multigraph-equiv}, we reuse the term \emph{routing} to refer to Eulerian subgraphs $H \subseteq G'$ that connect every client to a depot and contain at most $k$ connected components.
Furthermore, we shift the goal of our algorithm to finding a suitable subgraph $H \subseteq G'$.
Before we can move on from $G$ and $\kappa$, we need to make one more simplification to the problem:
Although replicating all edges $e$ by $\kappa(e)$ copies greatly increases the size of the input graph, this increase can be avoided by the following observation, which can be proven easily.

\begin{lemma}
    \label{thm:evrp-remove-edge-twice}
    Let $H \subseteq G'$ be an $\EVRP$ routing and $e \in E(H)$ be an edge with multiplicity greater than $2$.
    Then $H - 2e$ is an $\EVRP$ routing and $w(H - 2e) \le w(H)$.
\end{lemma}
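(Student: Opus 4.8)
The plan is to check directly that $H - 2e$ still satisfies all four defining properties of an $\EVRP$ routing in the sense of \Cref{thm:evrp-routing-multigraph-equiv} — namely being Eulerian, containing all clients among its vertices, having at most $k$ connected components, and keeping every client reachable from a depot — and additionally that its weight does not exceed $w(H)$. The weight bound is immediate: $w(H - 2e) = w(H) - 2\,w(e) \le w(H)$, since edge weights lie in $\N$ and are in particular nonnegative. Write $e = \set{u, v}$; since $G$, and hence $G'$, is a simple graph (edges of $G$ are merely replicated to form $G'$), we have $u \ne v$, so $e$ is not a loop.

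For the Eulerian property, observe that deleting two parallel copies of $e$ decreases $\deg_H(u)$ and $\deg_H(v)$ by exactly $2$ each and leaves the degree of every other vertex unchanged. Parities are therefore preserved, so every vertex of $H - 2e$ still has even degree, i.e.\ $H - 2e$ is Eulerian.

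The only step that uses the hypothesis that $e$ has multiplicity greater than $2$ is the one concerning connectivity. After removing two copies of $e$, at least one copy remains in $H - 2e$, so $u$ and $v$ are still adjacent there; moreover no vertex is deleted, so $V(H - 2e) = V(H)$. Consequently, any walk in $H$ can be turned into a walk in $H - 2e$ with the same endpoints by rerouting each traversal of a deleted copy of $e$ through a surviving copy of $e$. Hence $H$ and $H - 2e$ induce exactly the same partition of the vertex set into connected components. In particular $H - 2e$ has at most $k$ connected components, still contains all of $C$ among its vertices, and each client $c \in C$ is still reachable from some depot $d \in D$ within its (unchanged) component. Together with the two previous paragraphs, this shows that $H - 2e$ is an $\EVRP$ routing.

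I expect no genuine obstacle here; the content of the statement is essentially the bookkeeping observation that removing \emph{two} copies of $e$ rather than one is exactly what preserves the parity (Eulerian) condition, while keeping the third copy is exactly what leaves the component structure — and therefore client-to-depot reachability and the bound on the number of components — completely untouched. The one thing to be careful not to overlook is that the multiplicity-$> 2$ assumption is what guarantees $u$ and $v$ stay adjacent; without it one could disconnect the graph.
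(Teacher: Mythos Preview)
Your argument is correct and is precisely the direct verification the paper has in mind; the paper in fact omits the proof entirely, remarking only that the lemma ``can be proven easily.'' One tiny wording slip: $G'$ is not a simple graph (it has parallel edges by construction), but your intended point---that $G'$ has no loops, hence $u \ne v$---is correct and in any case inessential, since even for a loop removing two copies changes the degree by $4$ and leaves connectivity untouched.
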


By repeatedly applying \Cref{thm:evrp-remove-edge-twice}, we see that it suffices to solve $\EVRP$ instances with $\kappa(e) \le 2$ for all edges $e$.
In this case, the introduction of $\kappa(e)$ copies of an edge $e$ can only increase the input size by a factor of $2$.
Therefore, we will not mention the simple graph $G$ and the edge capacities $\kappa$ anymore.
Instead, we work on multigraphs where each edge can only be used once.

It seems like the algorithm from \Cref{thm:wrp-tw-fpt} cannot be used directly to solve $\EVRP$.
 However, its general approach still applies to vehicle routing problems.
The main difference is that in $\WRP$, all partial solutions eventually merge to a connected subgraph containing the source vertex $s$, whereas in $\EVRP$, there are multiple connected components, each of which contains one of the depots.
It turns out that this observation directly translates to an $\FPT$-algorithm for $\EVRP$ parameterized by treewidth and the number of connected components.

\begin{theorem}
    \label{thm:evrp-tw-d-fpt}
    There is an algorithm that computes an optimal $\EVRP$ routing in $\binom{\cardinality D}{d^*} (\tw + d^*)^{\O(\tw + d^*)} \cdot n^{\O(1)}$ steps where $d^* = \min\set{\cardinality D, k}$.
\end{theorem}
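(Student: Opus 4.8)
The plan is to first pass to the multigraph reformulation of \Cref{thm:evrp-routing-multigraph-equiv}: it suffices to compute a minimum-weight Eulerian subgraph $H$ of the multigraph $G'$ --- which, after applying \Cref{thm:evrp-remove-edge-twice}, we may assume has every edge of multiplicity at most $2$, hence size $\O(n)$ and treewidth $\tw$ --- such that $C \subseteq V(H)$, the number of connected components of $H$ is at most $k$, and every client is reachable from a depot. Since every connected component of such an $H$ contains at least one depot, and there are at most $k$ of them, an optimal $H$ has at most $d^* = \min\set{\cardinality D, k}$ components. This suggests the outer structure of the algorithm: iterate over all subsets $S \subseteq D$ with $\cardinality S = d^*$ --- there are $\binom{\cardinality D}{d^*}$ of them --- and, for each, compute a minimum-weight Eulerian $H$ with $C \subseteq V(H)$ in which \emph{every connected component contains a vertex of $S$}; the lightest $H$ found over all choices of $S$ is returned. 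Soundness holds because any such $H$ has at most $\cardinality S \le k$ components, each containing a depot. For completeness, an optimal routing with components $H_1, \dots, H_m$ gives rise to such an $S$: pick $s_i \in D \cap V(H_i)$; the $s_i$ are distinct and $m \le k$, so $m \le d^*$ and we may pad $\set{s_1, \dots, s_m}$ to a set $S \subseteq D$ of size $d^*$ by arbitrary further depots, which never invalidates the component condition.

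For a fixed $S = \set{s_1, \dots, s_{d^*}}$, the plan is to solve the restricted problem by dynamic programming over a nice tree decomposition of $G'$ --- obtained via \Cref{thm:tw-2-approx} and \Cref{thm:nice-tree-decomposition-transform} --- generalizing the algorithm behind \Cref{thm:wrp-tw-fpt}. That algorithm tracks, at a bag $X_t$, a partition of $X_t$ into the traces of the connected components of the partial subgraph induced by the chosen edges in $G_t^\downarrow$, together with a parity vector in $\set{0,1}^{X_t}$ recording current degree parities, so that the Eulerian condition can be enforced when a vertex is forgotten. For $\EVRP$ one additionally records which of the $d^*$ designated depots currently lies in which (partial) component; equivalently, one maintains a partition of $X_t$ augmented by $d^*$ permanent ``root tokens'' $r_1, \dots, r_{d^*}$, together with the parity vector. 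The maintained invariant is that every component of the partial solution that contains no vertex of $S$ still has a representative in $X_t$. The transitions generalize those for $\WRP$: an introduce-vertex node adds the new vertex with parity zero and in no block (until an incident edge is introduced); an introduce-edge node flips the parities of the two endpoints and merges their blocks, also merging in the block of $r_i$ if an endpoint is $s_i$; a forget node requires the vertex's parity to be even, handles the client-coverage condition locally, and rejects the branch if forgetting the vertex empties a block that contains no root token; and a join node takes the transitive closure of the union of the two children's partitions, the coordinatewise XOR of their parity vectors, and the union of the component-membership information. The stored value is the minimum total edge weight; a surviving branch at the empty root bag certifies a feasible $H$ for this $S$, reconstructed by standard backtracking.

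For the running time, there are $\binom{\cardinality D}{d^*}$ subsets $S$, and for each the dynamic program has at most $(\tw + d^*)^{\O(\tw + d^*)}$ states per bag: the number of partitions of the ground set $X_t \cup \set{r_1, \dots, r_{d^*}}$, which has $\O(\tw + d^*)$ elements, is at most $(\tw + d^*)^{\O(\tw + d^*)}$, and the parity information contributes a further $2^{\O(\tw)}$ factor. Each transition --- the join being the costliest, as it enumerates pairs of compatible child states --- runs in time polynomial in the number of states, and the decomposition has $n^{\O(1)}$ nodes. In total this yields the claimed $\binom{\cardinality D}{d^*}(\tw + d^*)^{\O(\tw + d^*)} \cdot n^{\O(1)}$ bound.

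The main obstacle is exactly the passage from one walk to many. In $\WRP$ a single distinguished component suffices, whereas here the dynamic program must simultaneously track the connectivity of up to $d^*$ root components alongside the partition of the current bag, reconcile all of it at join nodes, and still enforce the Eulerian parity constraints; this multi-component bookkeeping is also what prevents the use of the single-exponential rank-based machinery of~\cite{bodlaender2015deterministic} and leaves the $(\cdot)^{\O(\cdot)}$ dependence on the width (as for \Cref{thm:uncapacitated-tw-fpt}). A secondary, more routine difficulty is verifying the correctness of the reduction to the fixed-$S$ subproblem --- in particular that restricting attention to $d^*$ designated depots, with the remaining depots padded in as trivial components, discards no optimal routing.
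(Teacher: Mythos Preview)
Your proposal is correct and follows essentially the same approach as the paper. The paper phrases the per-$S$ dynamic program more tersely --- it literally adds the $d^*$ designated depot vertices to every bag of the tree decomposition, thereby increasing the width to $\tw + d^*$, and then invokes the \WRP-style DP of~\cite{schierreich22} with the root signature $(D^*, \emptyset, D^*[\emptyset])$ --- whereas you keep the bags unchanged and instead carry $d^*$ abstract root tokens through the partition; these two devices are equivalent, and your outer guess-and-pad argument over $\binom{|D|}{d^*}$ subsets matches the paper's case split on $|D| \le k$ versus $|D| > k$.
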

\begin{proof}
    In the case that $\cardinality D \le k$, we use the dynamic programming algorithm from \cite{schierreich22} with only a slight modification.
    Instead of adding the source $s$ to every bag of the tree decomposition, all depots are added, increasing the size of the tree decomposition to $\tw + \cardinality D$.
    The solution is a minimum-weight partial solution compatible with $(D, \emptyset, D[\emptyset])$ at the root.

    If $\cardinality D > k$, we guess the set $D^*$ of depots that is used by an optimal solution and compute the optimal routing for every such $D^*$.
    As any routing can have at most $k$ connected components, there are $\binom{\cardinality D}{k}$ different sets $D^*$ to try.
\end{proof}

Importantly, the optimization employed by \cite{schierreich22} to improve the running time from $\tw^{\O(\tw)} \cdot n^{\O(1)}$ to $2^{\O(\tw)} \cdot n^{\O(1)}$ cannot be used to improve \Cref{thm:evrp-tw-d-fpt}.
The optimization is based on the $\opreduce$ operation from \cite{bodlaender2015deterministic} that merges partial solutions which are equivalent when considering that they still have to be extended to the signature $\set{\set s}$ of a full solution.
In \Cref{thm:evrp-tw-d-fpt}, the signature at the root node is read from the partition $D[\emptyset]$, not $\set D$.
Thus, a generalization of the setup by \cite{bodlaender2015deterministic} to more complex partitions is needed to achieve a similar speedup.

While \Cref{thm:evrp-tw-d-fpt} yields that $\EVRP$ parameterized by $\tw + \cardinality D$ is in $\FPT$, it seems that the information tracked in the DP by \cite{schierreich22} is almost sufficient for $\EVRP$.
As we will see in the remainder of this section, this intuition is correct.
By extending the weighted-partition setup of \cite{bodlaender2015deterministic} and adapting the DP by \cite{schierreich22}, we manage to develop an algorithm for $\EVRP$ which is $\FPT$ when parameterized by just the treewidth of $G$.

To adapt the setup we first introduce the coarsening relation and join operation on partitions.
After presenting our extension of weighted partitions and adjusting the operations on weighted partitions by \cite{bodlaender2015deterministic} to fit our definition, we can finally go on to present our tree decomposition based algorithm for \EVRP.

\subsection{Coarsening of Partitions}
Let $U$ be a finite set and let $\parP, \parQ$ be partitionings of $U$.
We call the elements of a partition, $X \in \parP$, \emph{blocks}.
We say that $\parP$ is \emph{finer} than $\parQ$, denoted as $\parP \sqsubseteq \parQ$, if every block $X \in \parP$ is contained in some block $Y \in \parQ$, that is $X \subseteq Y$.
Conversely, we say that $\parQ$ is \emph{coarser} than $\parP$, or is a \emph{coarsening} of $\parP$.

\begin{example}
    Let $\parP = \set{\set{1, 2}, \set{3}, \set{4, 5, 6}}$ and $\parQ = \set{\set{1, 2, 3}, \set{4, 5, 6}}$ be partitionings of the universe $U = \set{1, 2, 3, 4, 5, 6}$.
    Since $\set{1, 2}, \set{3} \subseteq \set{1, 2, 3}$ and $\set{4, 5, 6} \subseteq \set{4, 5, 6}$, we have $\parP \sqsubseteq \parQ$.

    We can also see that $\set{U}$ is the coarsest partitioning while $\set{\set{1}, \set{2}, \set{3}, \set{4}, \set{5}, \set{6}}$ is the finest partition.
\end{example}

For two partitions $\parP, \parQ$ of $U$, we define the binary operation $\parP \sqcup \parQ$ to denote the finest common coarsening of $\parP$ and $\parQ$.
More formally, the finest common coarsening of $\parP$ and $\parQ$ is a partition $\parR$ such that:
\begin{enumerate}
    \item $\parP \sqsubseteq \parR$ and $\parQ \sqsubseteq \parR$
    \item There is no partition $\parR' \neq \parR$, satisfying 1.\ with $\parR' \sqsubseteq \parR$
\end{enumerate}
\begin{example}
    Let $\parP = \set{\set{1, 2}, \set{3, 4}, \set{5, 6}}$ and $\parQ = \set{\set{1}, \set{2, 3}, \set{4}, \set{5}, \set{6}}$ be partitions of $U = \set{1, 2, 3, 4, 5, 6}$.
    We have $\parP \sqcup \parQ = \set{\set{1, 2, 3, 4}, \set{5, 6}}$.

    One can verify that this is a coarsening of both $\parP$ and $\parQ$.
    Note that, $\set{1, 2, 3, 4}$ needs to be a part of any valid coarsening of both $\parP$ and $\parQ$.
    $\parP$ requires $1$ and $2$ to stay together, as well as $3$ and $4$, while $\parQ$ requires $3$ and $4$ to end up in the same block.
    Therefore this is the finest possible coarsening.
\end{example}

For $V \subseteq U$, we denote by $U[V] = \set V \cup \set{\set u \with u \in U \setminus V}$ the partition where all blocks are singletons, except for $V$, which is one block.
We denote by $\parP_{\downarrow V} = \set{X \cap V \with X \in \parP} \setminus \set{\emptyset}$ the restriction of $\parP$ to $V$.
Finally, for $V \supseteq U$, we define $\parP_{\uparrow V} = \parP \cup \set{\set v \with v \in V \setminus U}$ as the extension of $\parP$ to $V$.

\subsection{Computation of the Dynamic Programming Array}

Our dynamic programming procedure computes a set of partial solutions conforming with a particular signature at each node of the tree decomposition.

\begin{upright-definition}
    \label{def:vrp-dp}
    For every $t \in \T$, $X \subseteq X_t$, $L \subseteq X$, and $c \le k$, with $C \cap X_t \subseteq X$, we call $S = (X, L, c)$ a presignature at $t$.
    A presignature together with a partition $\P$ of $X$ and a subset of depot-connected blocks $\mathcal{B} \subseteq \P$ is a solution signature at $t$.
    A subgraph $H \subseteq G_t^\downarrow$ is a partial solution compatible with the solution signature $(X, L, c, \P, \mathcal{B})$ at $t$, if
    \begin{enumerate}
        \item $C \cap V_t^\downarrow \subseteq V(H)$
        \item $V(H) \cap X_t = X$,
        \item a vertex $v \in V(H)$ has odd degree in $H$ iff $v \in L$,
        \item there are exactly $c$ connected components $A$ of $H$ with $V(A) \cap X_t = \emptyset$ and $V(A) \cap D \ne \emptyset$,
        \item every other connected component $A$ has $X_A \coloneq V(A) \cap X_t \in \P$ and $V(A) \cap D \ne \emptyset$ if and only if $X_A \in \mathcal{B}$. \lipicsEnd
    \end{enumerate}
\end{upright-definition}

Before proceeding, we explain the intuition behind each required condition in the definition above.
Intuitively, all the visited clients so far should be in $H$ (1), as otherwise there will be an unserved client in the final solution.
\( X \) represents the set of vertices of \( H \) that are contained within the bag (2). The set \( L \) consists of the odd degree vertices of \( H \).
Note that due to the Eulerian nature of $H$, its odd-degree vertices can only be in the bag (3).
The parameter \( c \) denotes the number of connected components of \( H \) that lie entirely outside the bag and each contain at least one depot (4). Finally, \( \mathcal{P} \) indicates the connected components of \( H \), present in the bag, and \( \mathcal{B} \) is the set of those connected components of $H$ which contain at least one depot, in the bag (5).

\paragraph*{Marked partitions}
Let $w \in \N$, represent the weight of a partial solution $H$, we call a triple $(\P, \mathcal{B}, w)$ a marked partition.
Using dynamic programming, we compute a set $dp[t, X, L, c]$ of marked partitions $(\P, \mathcal{B}, w)$ for each $t \in \T$ and presignature $(X, L, c)$.
A marked partition $(\P, \mathcal{B}, w) \in dp[t, X, L, c]$ represents a partial solution compatible with the signature $(X, L, c, \P, \mathcal{B})$ at $t$, with weight $w$.

We further generalize certain operations originally introduced by Bodlaender et al.~\cite{bodlaender2015deterministic} and subsequently employed by Schierreich and Such\'{y}~\cite{schierreich22}, adapting them to our framework to serve as fundamental components in the dynamic programming state transition process.

For a family $\mathcal{S}$ of sets (over some universe $U$), we denote the set of all supersets of members of $\mathcal S$ by
\[
    \Sup(\mathcal{S}) = \left\{ S' \;\middle|\; \exists S \in \mathcal{S} \text{ such that } S \subseteq S' \right\}.
\]
In practice, the set $\Sup(\mathcal S)$ should not be constructed as it is only used to make the definitions below more readable.
Instead, an intersection $\P \cap \Sup(\mathcal S)$ should be computed by filtering $\P$ according to the definition of $\Sup(\mathcal S)$.

Let $ \mathcal{A} $ and $ \mathcal{C} $ be sets of marked partitions, where each partition in $ \mathcal{A} $ and $ \mathcal{C} $ partitions the universe $ U $. We define the following operations on sets of marked partitions:

\subparagraph*{Remove copies:}
Removes marked partitions where an equivalent marked partition with a smaller weight exists.
\[
    \oprmc(\mathcal{A}) \coloneq \set{ (\P, \mathcal{B}, w) \in \mathcal{A} \with \nexists(\P, \mathcal{B}, w')\in \mathcal{A}: w' < w }
\]

\subparagraph*{Union:}
Combines the marked partitions of $\A$ and $\mathcal{C}$ and discards duplicates.
\[
    \mathcal{A} \mincup \mathcal{C} \coloneq \oprmc(\mathcal{A} \cup \mathcal{C})
\]

\subparagraph*{Shift:}
Increases the weight of each marked partition in $\A$ by $w' \in \N$.
\[
    \opshift(w', \A) \coloneq \set{(\P, \mathcal{B}, w + w') \with (\P, \mathcal{B}, w) \in \A}
\]

\subparagraph*{Glue:}
This operation is applied when an edge $\{u, v\}$ is introduced to the bag.
Intuitively, the $\opglue$ operation merges the connected components containing $u$ and $v$ under the assumption that the edge $\{u, v\}$ is included in the solution $H$.  
If either connected components of $u$ or $v$ contains a depot,  
then the merged connected component is likewise marked as containing a depot.  
Additionally, the weight of the edge $\{u, v\}$ is added to the total weight of $H$.  

More formally, for $u, v \in U$, this operation merges the blocks containing $u$ and $v$ within the partition of each entry.  
The resulting merged blocks are marked as connected to a depot  
if at least one of the original blocks was marked as such prior to the merge.
We also add the abbreviation $\opglue_w$ defined as:
\begin{align*}
    \opglue(\set{u, v}, \A)   &\coloneq \oprmc(\set{(\P', \P' \cap \Sup(\mathcal{B}), w) \with (\P, \mathcal{B}, w) \in \A \land \P' = \P \sqcup U[\set{u, v}]}) \\
    \opglue_w(\set{u, v}, \A) &\coloneq \opshift(w(\set{u, v}), \opglue(\set{u, v}, \A))
\end{align*}

\subparagraph*{Insert:}
This operation is used when a vertex $v$ is introduced to the bag.
Intuitively, the $\opins$ operation adds $\{v\}$ to the connected components of $H$ in the previous bag, since it is an isolated vertex, as none of its neighboring edges have been introduced yet. In case $v$ is a depot, that is $v \in D$, it adds $\{v\}$ also to $\mathcal{B}$.  

Technically, for $V$ with $V \cap U = \emptyset$ and a set $D$, $\opins$ expands the universe by $V$ by inserting every $v \in V$ as a singleton block in every entry of $\A$.
The new blocks for $v \in V$ should be marked as connected to a depot if $v \in D$ and it is denoted by
\[
    \opins(V, D, \A) \coloneq \set{(\P_{\uparrow U \cup V}, \mathcal{B} \cup (D \cap V)[\emptyset], w) \with (\P, \mathcal{B}, w) \in \A}
\]

\subparagraph*{Project:}
This operation is used when a vertex $v$ is forgotten, and by forgetting $v$ no connected component is separated from $H$. So it first discards entries where $\{v\}$ was a connected component in the previous bag, as it can not be case. Then, among the remaining valid entries of the previous bag, it removes $v$ from the connected components. An illustration of this can be found in~\cref{fig:operations}.
More formally, for $V \subseteq U$, it removes $V$ from the universe and discard entries where an entire block of a partition is deleted:
\[
    \opproj(V, \A) \coloneq \oprmc\left( \set*{ (\P_{\downarrow U \setminus V}, \mathcal{B}_{\downarrow U \setminus V}, w ) \with \begin{aligned}
        & (\P, \mathcal{B}, w) \in \A, \\
        & \forall v \in V \exists v' \in U \setminus V: U[\set{v,v'}] \sqsubseteq \P
    \end{aligned}}\right)
\]

\subparagraph*{Detach:}
This operation is used also when a vertex $v$ is forgotten, with the difference that forgetting $v$ results in a connected component being completely detached from $H$. In this case, $v$ should be connected to a depot as otherwise there will be a connected component with no depot.
So $\opdetach$ only keeps the entries where $\{v\}$ (with no other vertex) was one connected component in the previous bag and also was connected to a depot. Then, among the above valid entries, it removes $v$ from the connected component. \Cref{fig:operations} provides a visual representation of the $\opdetach$ operation.
In precise terms, for $V \subseteq U$, this operation removes $V$ from the universe and discards entries containing blocks which are only partly deleted.
Particularly, we only keep entries where the blocks with elements from $V$ are completely deleted and are marked as connected to a depot. 
\[
    \opdetach(V, \mathcal{A}) \coloneq \oprmc\left(\set*{(\P_{\downarrow U \setminus V}, \mathcal{B}_{\downarrow U \setminus V}, w) \with
    \begin{aligned}
        & (\P, \mathcal{B}, w) \in \mathcal{A}, \\
        & \forall x \in \P: x \cap V \neq \emptyset \implies x \subseteq V \land x \in \mathcal{B}
    \end{aligned}}\right)
\]

\begin{figure}[ht]
  \centering
  \includegraphics[width=0.8\textwidth]{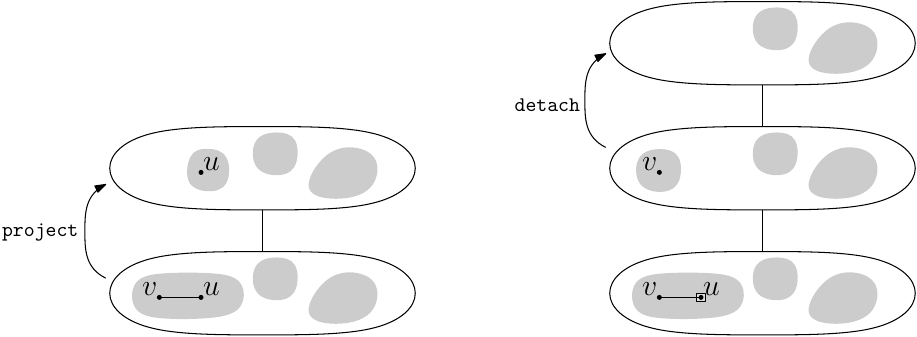}
  \caption{Illustration of $\opproj$ and $\opdetach$ operations, left and right respectively. Gray areas indicate the connected components, square indicates a depot, and vertex $v$ is being forgotten. Observe that on the left, no connected component is separated by forgetting $v$, while on the right, a connected component is completely detached by forgetting $v$. For $\opdetach$, it is required that $v$ is connected to a depot, which is showcased by the additional bag below.}
  \label{fig:operations}
\end{figure}

\subparagraph*{Join:}
Constructs all combinations of marked partitions from $\A$ and $\mathcal{C}$.
The blocks which share an element are combined, and the combined block is connected to a depot if any of the blocks was before the merge. It is denoted by
\[
    \opjoin(\A, \mathcal{C}) \coloneq \oprmc\left(\set*{(\P', \P' \cap \Sup(\mathcal{B}_1 \cup \mathcal{B}_2), w_1 + w_2) \with \begin{aligned}
        & (\P, \mathcal{B}_1, w_1) \in \A, \\
        & (\parQ, \mathcal{B}_2, w_2) \in \mathcal{C}, \\
        & \P' = \P \sqcup \parQ
    \end{aligned}}\right)
\]

Similarly to the observations of \cite{bodlaender2015deterministic,schierreich22}, it is easy to check that the following holds:
\begin{lemma}
    The operators described above, excluding $\opjoin$, can be executed in $s \cdot \cardinality{U}^{\O(1)}$ time, where $s$ is the size of the operation input.
    The $\opjoin$ operator can be executed in $\cardinality \A \cdot \cardinality{\mathcal{C}} \cdot \cardinality{U}^{\O(1)}$ time.
\end{lemma}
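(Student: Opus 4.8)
The plan is to verify, for each operator, that it can be implemented by iterating over the input marked partitions and performing only polynomial-time bookkeeping on partitions of the universe $U$, with the single exception of $\opjoin$, which must consider pairs of entries.

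First I would recall the standard fact that all basic operations on partitions of a set $U$ — testing $\parP \sqsubseteq \parQ$, computing $\parP \sqcup \parQ$ (e.g.\ via union-find), computing restrictions $\parP_{\downarrow V}$ and extensions $\parP_{\uparrow V}$, and forming the finest partition $U[\set{u,v}]$ — run in $\cardinality{U}^{\O(1)}$ time, and that a single marked partition $(\P,\mathcal B,w)$ has size $\cardinality{U}^{\O(1)}$. Hence any operator whose definition is a comprehension ranging over a single input entry $(\P,\mathcal B,w)$ can be computed by looping over the (at most $s$) entries of the input, doing $\cardinality{U}^{\O(1)}$ work per entry to produce the image entry (or decide to discard it), and finally applying $\oprmc$. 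This immediately covers $\opshift$ (add $w'$ to each weight), $\opglue_w$ and $\opglue$ (compute $\P' = \P \sqcup U[\set{u,v}]$, then keep only those blocks of $\P'$ that are supersets of some marked block — this is the filtering $\P' \cap \Sup(\mathcal B)$, which by the remark after the definition of $\Sup$ is done by a direct membership test, not by materializing $\Sup(\mathcal B)$), $\opins$ (append singletons, possibly marked), $\opproj$ and $\opdetach$ (both restrict $\P$ and $\mathcal B$ to $U\setminus V$ after checking the stated block conditions, each of which is a scan over the blocks of $\P$). For the $\oprmc$ wrapper itself, I would note that grouping entries by their $(\P,\mathcal B)$ key and keeping the minimum weight per key is done in $s \cdot \cardinality U^{\O(1)}$ time (e.g.\ by sorting on a canonical encoding of $(\P,\mathcal B)$); $\mincup$ is then just $\oprmc$ applied to the concatenation of the two inputs, again linear in total size. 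This establishes the first sentence of the lemma for every operator except $\opjoin$.

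For $\opjoin(\A,\mathcal C)$, the comprehension ranges over all pairs $(\P,\mathcal B_1,w_1)\in\A$, $(\parQ,\mathcal B_2,w_2)\in\mathcal C$, so there are $\cardinality\A \cdot \cardinality{\mathcal C}$ candidate entries; for each pair I compute $\P' = \P \sqcup \parQ$, the marking $\P' \cap \Sup(\mathcal B_1 \cup \mathcal B_2)$ by the same filtering trick, and the weight $w_1 + w_2$, all in $\cardinality U^{\O(1)}$ time, and then apply $\oprmc$ to the resulting family (which has size at most $\cardinality\A \cdot \cardinality{\mathcal C}$). This gives the claimed $\cardinality\A \cdot \cardinality{\mathcal C} \cdot \cardinality U^{\O(1)}$ bound.

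I do not expect a genuine obstacle here — this is essentially the bookkeeping lemma of Bodlaender et al.~\cite{bodlaender2015deterministic} transported to the marked-partition setting. The only point requiring a line of care is that the $\Sup(\cdot)$ sets are never built explicitly: in both $\opglue$ and $\opjoin$ the intersection $\P' \cap \Sup(\mathcal S)$ must be read, as the text says, as ``retain the blocks of $\P'$ that contain some block of $\mathcal S$,'' which is checkable in polynomial time; and that $\oprmc$ needs a canonical form for $(\P,\mathcal B)$ so that equivalent entries are actually identified. Both are routine, so the proof is short.
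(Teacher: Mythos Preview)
Your proposal is correct and essentially matches the paper's treatment: the paper does not give a proof at all, merely stating that ``similarly to the observations of~\cite{bodlaender2015deterministic,schierreich22}, it is easy to check'' the lemma holds. Your write-up is precisely the routine verification the paper alludes to, and your care about not materializing $\Sup(\cdot)$ and about canonical forms for $\oprmc$ is exactly the kind of remark the paper leaves implicit.
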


Given a nice tree decomposition of the input graph, we are ready to define the computation at every node  using the operations above. 

\paragraph*{Leaf Node}
In a leaf node $t$, we have $X_t = \emptyset$, so we set $dp[t, \emptyset, \emptyset, 0] = \set{(\emptyset, \emptyset, 0)}$.

\paragraph*{Introduce Vertex Node}
Suppose that node $t$ with child $t'$ introduces vertex $v$ and let $(X, L, c)$ be a presignature at $t$.
As none of the edges incident to $v$ have been introduced yet, $v$ cannot have odd degree in any partial solution at $t$ and $v$ can only be a singleton component.
We set
\[
    dp[t, X, L, c] = \begin{cases}
        \mathtt{ins}(\set v,D, dp[t', X \setminus \set v, L, c]) &\text{if } v \in X \cap \compl L, \\
        dp[t', X, L, c] &\text{if } v \in \compl X \cap \compl L, \\
        \emptyset &\text{otherwise}.
    \end{cases}
\]

\paragraph*{Introduce Edge Node}

Suppose that $t$ is an introduce edge node with child $t'$ and $E_t = E_{t'} + \multiset{\set{u,v}}$, let $(X, L, c)$ be a presignature at $t$.
We might use the edge, in which case the parity of $u,v$ in $L$ flips, or ignore it.
We define
\[
   dp[t,X,L,c] = \begin{cases} 
        dp[t', X, L, c]\mincup \mathtt{glue}_{w}(\set{u,v}, dp[t', X, L \symdiff \set{u,v}, c]) &\text{if } u,v \in X,\\
        dp[t', X, L, c] &\text{otherwise}. 
   \end{cases}
\]  

\paragraph*{Forget Node}
Suppose that $t$ is a node with child $t'$ forgetting vertex $v$ and let $(X, L, c)$ be a presignature at $t$.
We know that $v \notin X$ and thus consider all marked partitions from presignature $(X,L,c)$ at $t'$, i.e.\ marked partitions that do not contain $v$ to begin with.
Additionally, we can also remove $v$ from marked partition at $t'$, by either removing $v$ from its partition set or, if $v$ was singleton, dropping the block from the partition.
In the latter case, we use the marked partitions from $t'$ with the presignature $(X, L, c-1)$, as detaching $v$ creates one additional component that is counted in $c$ at $t$.
Formally, we define
\begin{align*}
    dp[t, X, L, c] &= dp[t', X, L, c] \\
             &\quad \mincup \mathtt{proj}(\set{v}, dp[t', X \cup \set{v},L, c]) \\
             &\quad \mincup \opdetach(\set{v}, dp[t', X \cup \set{v},L, c-1]).
\end{align*}

\paragraph*{Join Node}
Suppose that $t$ is a join node with children $t'$ and $t''$. We want to consider all combinations of solution signatures at $t'$ and $t''$ by combining their compatible partial solutions $H'$ and $H''$. We define
\[
    dp[t, X, L, c] = \bigmincup_{L'\subseteq X, c'\leq c} \mathtt{join}(dp[t', X, L', c'], dp[t'', X, L\symdiff L', c-c']).
\]

\begin{lemma}\label{lm_depot_compatible_solution}
    For every node $t$, presignature $(X, L, c)$, and marked partition $(\P, \mathcal{B}, w) \in dp[t, X, L, c]$, there is a partial solution compatible with $(X, L, c, \P, \mathcal{B})$ at $t$ of cost at most $w$.
\end{lemma}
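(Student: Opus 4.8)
The plan is to prove the statement by bottom-up induction over the nice tree decomposition (this is the soundness direction of the dynamic program): assuming the claim for the child(ren) of a node $t$, I show that every marked partition $(\P, \mathcal{B}, w)$ newly placed into $dp[t, X, L, c]$ is witnessed by a partial solution $H \subseteq G_t^\downarrow$ compatible with $(X, L, c, \P, \mathcal{B})$ of weight at most $w$. Since $\oprmc$ and $\mincup$ only delete entries (and never increase a surviving entry's weight), it suffices to treat one underlying operation at a time ($\opins$, $\opglue_w$, $\opproj$, $\opdetach$, $\opjoin$): starting from a child entry and, by the induction hypothesis, its witnessing partial solution $H'$, I construct the witness for the resulting entry at $t$ and verify the five conditions of \Cref{def:vrp-dp}. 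The base case is the leaf node, where the empty subgraph trivially witnesses $(\emptyset, \emptyset, 0)$.

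For the introduce, forget, and edge nodes the constructions are small local modifications of $H'$, and all conditions follow from the induction hypothesis together with the identities relating $G_t^\downarrow$ to $G_{t'}^\downarrow$. When introducing a vertex $v$ with $v \in X \setminus L$, I take $H = H' + \{v\}$ as an isolated vertex: then $v$ has degree $0$ (consistent with $v \notin L$), $c$ is unchanged since $v$ lies in the bag, and $\{v\}$ is a fresh singleton block that enters $\mathcal{B}$ exactly when $v \in D$, matching $\opins$ (the case $v \notin X$ reuses $H'$ verbatim, using $v \notin V_{t'}^\downarrow \supseteq V(H')$). When introducing an edge $\{u,v\}$ with $u,v \in X$, either the edge is unused ($H'$ is reused) or the entry comes from $\opglue_w(\{u,v\}, dp[t', X, L\symdiff\{u,v\}, c])$, in which case $H = H' + \{u,v\}$: adding the edge flips the parity of $u$ and $v$, so the odd-degree set becomes $(L\symdiff\{u,v\})\symdiff\{u,v\} = L$; it merges the blocks of $u$ and $v$, matching $\P' = \P \sqcup U[\{u,v\}]$; the merged block stays depot-marked iff one of the two was, matching $\P' \cap \Sup(\mathcal{B})$; and the weight grows by $w(\{u,v\})$. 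When forgetting $v$, I handle the three summands: $dp[t', X, L, c]$ reuses $H'$ (which does not contain $v$ since $V(H')\cap X_{t'} = X$); the $\opproj$ summand applies when $v$ lies in a non-singleton block of $H'$, so deleting $v$ from the bag leaves its component attached to the bag and keeps $c$ unchanged; the $\opdetach$ summand applies when $\{v\}$ is a depot-marked singleton block counted against $c-1$, so forgetting $v$ pushes its depot-containing component entirely outside the bag and raises the count to $c$.

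The delicate case, and the one on which I expect to spend most of the work, is the join node $t$ with children $t_1, t_2$ and $X_t = X_{t_1} = X_{t_2}$, where an entry arises from $\opjoin(dp[t_1, X, L', c'], dp[t_2, X, L\symdiff L', c-c'])$ combining witnesses $H_1 \subseteq G_{t_1}^\downarrow$ and $H_2 \subseteq G_{t_2}^\downarrow$ into $H = H_1 \cup H_2$ (union of vertex sets, multiset sum of edge sets). The crucial structural input is the standard tree-decomposition property $V_{t_1}^\downarrow \cap V_{t_2}^\downarrow = X_t$ and $E_{t_1}^\downarrow \cap E_{t_2}^\downarrow = \emptyset$. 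From edge-disjointness, the $H$-degree of a vertex outside $X_t$ equals its degree on the unique side containing it, hence is even, while for $x \in X$ we get $\deg_H(x) = \deg_{H_1}(x) + \deg_{H_2}(x)$, so the odd-degree set of $H$ is $L' \symdiff (L\symdiff L') = L$, giving condition (3); conditions (1) and (2) are immediate from $V_t^\downarrow = V_{t_1}^\downarrow \cup V_{t_2}^\downarrow$ and $V(H_i)\cap X_t = X$. For conditions (4) and (5), the shared-vertex property means a component of $H_1$ or $H_2$ lying entirely outside $X_t$ cannot merge with anything and becomes a component of $H$ outside the bag (and conversely), so the number of depot-containing such components is exactly $c' + (c - c') = c$; while a component of $H$ meeting the bag has within-bag trace equal to the transitive closure of the traces in $H_1$ and $H_2$, i.e.\ $\P' = \P \sqcup \parQ$, and it contains a depot iff one of its constituent components did, i.e.\ iff its trace lies in $\P' \cap \Sup(\mathcal{B}_1 \cup \mathcal{B}_2)$. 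Finally $w(H) = w(H_1) + w(H_2) \le w_1 + w_2$, completing the verification. The only real subtlety is this bookkeeping of ``inside-the-bag'' versus ``outside-the-bag'' components across the join and tracking which depot marks propagate through $\sqcup$ and $\Sup$.
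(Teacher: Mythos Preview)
Your proposal is correct and follows essentially the same bottom-up induction as the paper's own proof, treating each node type by unwinding the corresponding operation and constructing the witness from the child's witness. Your treatment of the join node is in fact more careful than the paper's (you explicitly invoke $V_{t_1}^\downarrow \cap V_{t_2}^\downarrow = X_t$ and edge-disjointness to justify the parity, component, and depot-mark bookkeeping), but the overall structure and the constructions at each node type are the same.
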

\begin{proof}
    We prove the statement by bottom-up induction on the tree decomposition. 
    
    A leaf node only has presignature $(\emptyset,\emptyset,0)$ with marked partition $(\emptyset, \emptyset, 0)$, which is compatible with the empty graph. We now suppose that the lemma holds for all child nodes of a node $t$.
    
    Let $t$ introduce vertex $v$ with child $t'$.
    If $v \notin X$ and $v \notin L$, every $(\P,\mathcal{B},w) \in dp[t,X,L,c]$ is also contained in $dp[t',X,L,c]$, where it has a compatible partial solution $H$ with a cost of at most $w$. $H$ remains compatible to $(X,L,c,\P,\mathcal{B})$ at $t$.
    If $v \in X$ and $v \notin L$, let $H$ be the partial solution compatible with $(X\setminus\set{v},L,c,\P\setminus\set{\set{v}}, \mathcal{B}\setminus\set{\set{v}})$ in $t'$ and a cost of at most $w$. Adding $v$ to $H$ as an isolated vertex creates a solution which is compatible to $(X,L,c,\P,\mathcal{B})$ and has the same cost as $H$. 
    Otherwise $dp[t,X,L,c]= \emptyset$.

    If $t$ introduces the edge $\set{u,v}$ with child node $t'$, two cases are possible. If $(\P, \mathcal{B},w)\in dp[t',X,L,c]$ there exists a partial solution $H$ with a cost of at most $w$, which is compatible with the corresponding solution signature at $t'$ and thus also at $t$. 
    Else, $\set{u,v}$ must have been used to join two sets which previously contained $u$ and $v$ using the $\opglue$ operation. Let $w' =w-w(\set{u,v})$ and $(\P',\mathcal{B}',w')\in dp[t',X,L\symdiff\set{u,v},c]$ such that $\opglue_w({\set{u,v}},\set{(\P',\mathcal{B}',w')}) = \set{(\P,\mathcal{B},w)}$. We know that there is a partial solution $H'$ with a cost of at most $w'$ compatible with $(X,L\symdiff\set{u,v},c,\P',\mathcal{B}')$ at $t'$. When adding the edge $\set{u,v}$ to $H$, this creates a partial solution $H$ with a cost of at most $w$. By the definition of $\opglue$, $H$ is compatible with $(X,L,c,\P,\mathcal{B})$ at $t$.

    Let $t$ be a forget node with child $t'$ and $V_t = V_{t'} \setminus \set v$. 
    If $(\P, \mathcal{B}, w)\in dp[t',X,L,x]$, there exists a partial solution with a cost of at most $w$ that is compatible with $(X,L,c,\P,\mathcal{B})$ at $t'$ and thus also at $t$.
    If there exists a marked partition $(\P',\mathcal{B}',w) \in dp[t', X \cup \set{v},L,c]$ such that $\mathtt{proj}(\set{v}, \set{(\P',\mathcal{B}',w)}) = \set{\P,\mathcal{B},w}$, let $H$ be the partial solution compatible with $(X\cup \set{v},L,c,\P',\mathcal{B}')$ at $t'$ and with a cost of at most $w$. By the definition of $\opproj$, we know that the connected component containing $v$ in $H$ also contains an $u \in X_t$, thus $H$ is also compatible with $(X,L,c,\P,\mathcal{B})$ at $t$.
    Else, there exists a marked partition $(\P',\mathcal{B}',w) \in dp[t', X \cup \set{v},L,c-1]$ such that $\opdetach(\set{v}, \set{(\P',\mathcal{B}',w)}) = \set{(\P,\mathcal{B},w)}$. Let $H$ be the partial solution with a cost of at most $w$ compatible with $(X\cup \set{v},L,c-1,\P',\mathcal{B}')$ at $t'$. In contrast to the case above, we now know that the connected component containing $v$ is completely detached from $X_t$ and contains a depot. Thus, $H$ is also compatible with $(X,L,c,\P,\mathcal{B})$ at $t$.

    Lastly, suppose that $t$ is a join node with children $t'$ and $t''$.
    Let $L' \subseteq X$ and $c' \le c$ so that $(\P,\mathcal{B},c) \in \opjoin(dp[t',X,L',c'], dp[t'',X,L\symdiff L',c-c'])$.
    Then there must exist $(\P',\mathcal{B}',w')\in dp[t',X,L',c']$ and $(\P'',\mathcal{B}'',w'')\in dp[t'',X,L\symdiff L',c-c']$ such that $(\P,\mathcal{B},c)\in \opjoin(\set{(\P',\mathcal{B}',w')},\set{(\P'',\mathcal{B}'',w'')})$.
    We take $H'$ and $H''$ to be the partial solutions compatible with $(\P',\mathcal{B}',w')$ and $(\P'',\mathcal{B}'',w'')$ with a cost of at most $w'$ and $w''$ and claim that $H=H' + H''$ is compatible with $(X,L,c,\P,\mathcal{B})$ at $t$.
    First, we have that the weight of $H$ is at most $w = w' + w''$.
    As every edge is only introduced once, we have $H' + H'' \subseteq G_V^\downarrow$.
    Additionally, this implies that vertices have an odd degree if and only if they also have an odd degree in either $H'$ or $H''$, so $L' \symdiff L'' = L$.
    With the same argument we can verify that $H$ has $c'+c''$ connected components which are separate from $X_t$.
    By the definition of $\opjoin$ the other connected components are correctly represented in $\P$ and $\mathcal{B}$.
    Since, further, $C \cap V_t^\downarrow \subseteq V(H)$ and $V(H) \cap X_t = X$ still hold, we conclude that $H$ is compatible with $(X,L,c,\P,\mathcal{B})$ at $t$.
\end{proof}

\begin{lemma}
    For every node $t$, presignature $(X, L, c)$, partition $\P$ and set $\mathcal{B}$ such that there is a partial solution $H$ compatible with $(X, L, c, \P, \mathcal{B})$ of cost $w_H$  at $t$, there is a marked partition $(\P, \mathcal{B}, w)$ in $dp[t, X, L, c]$ with $w \le w_H$.
\end{lemma}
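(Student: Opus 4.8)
The plan is to prove this statement by bottom-up induction over the nice tree decomposition, exactly dual to the proof of \Cref{lm_depot_compatible_solution}: instead of extracting a partial solution from a DP entry, I would start from an arbitrary partial solution $H$ compatible with $(X, L, c, \P, \mathcal B)$ at $t$ of cost $w_H$ and exhibit a marked partition $(\P, \mathcal B, w) \in dp[t, X, L, c]$ with $w \le w_H$. A first useful observation is that every recurrence defining $dp[t, X, L, c]$ ends with an application of $\oprmc$ (directly, or inside $\mincup$, $\opglue$, $\opproj$, $\opdetach$, $\opjoin$), and $\oprmc$ discards a triple $(\P, \mathcal B, w_0)$ only if it keeps some $(\P, \mathcal B, w_1)$ with $w_1 \le w_0$. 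So it suffices to show that, before clean-up, the recurrence produces \emph{some} triple with the correct pair $(\P, \mathcal B)$ and weight at most $w_H$; such a triple then survives (possibly improved) into $dp[t, X, L, c]$. The base case is immediate: a leaf has only the presignature $(\emptyset, \emptyset, 0)$, the only compatible $H$ is the empty graph with $w_H = 0$, and $dp[t, \emptyset, \emptyset, 0] = \set{(\emptyset, \emptyset, 0)}$.

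For the inductive step I would treat each node kind, reusing the structure of \Cref{lm_depot_compatible_solution}. At an \emph{introduce vertex} node for $v$, the vertex $v$ is isolated in $H$ (no incident edge is introduced yet), so $v \notin L$; if $v \in X$, deleting the isolated $v$ from $H$ gives a partial solution compatible at the child with $(X \setminus \set v, L, c, \P \setminus \set{\set v}, \mathcal B \setminus \set{\set v})$, and $\opins(\set v, D, \cdot)$ re-adds $\set v$ as a singleton block, marking it exactly when $v \in D$; if $v \notin X$ then $v \notin V(H)$ and $H$ is already compatible at the child with $(X, L, c, \P, \mathcal B)$. At an \emph{introduce edge} node for $\set{u, v}$, either $\set{u, v} \notin E(H)$, in which case $H$ is compatible at the child with the same signature, or $\set{u, v} \in E(H)$, in which case $H - \set{u, v}$ is compatible at the child with label set $L \symdiff \set{u, v}$ and with the partition obtained from $\P$ by splitting the block of $u$ and $v$; then $\opglue_w(\set{u, v}, \cdot)$ restores $\P$ and $\mathcal B$ (a merged block is marked iff one of its parts was) and adds $w(\set{u, v})$ back. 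At a \emph{forget} node for $v$, let $A$ be the $H$-component of $v$: if $v \notin V(H)$, $H$ is compatible at the child with the same signature; if $V(A) \cap X_t \neq \emptyset$, then at the child $v$ joins the block $V(A) \cap X_t$ and $\opproj(\set v, \cdot)$ keeps the entry since that block still contains a bag vertex; if $V(A) \cap X_t = \emptyset$, then by compatibility $A$ contains a depot, at the child $\set v$ is a singleton block lying in $\mathcal B$, the component count drops from $c$ to $c - 1$, and $\opdetach(\set v, \cdot)$ keeps the entry as $\set v$ is a fully deleted, marked block. In all these cases the induction hypothesis at the child supplies a triple of weight at most $w_H$, and the relevant operation leaves the weight unchanged.

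The main obstacle is the \emph{join} node $t$ with children $t', t''$. Here I would split $H$ into $H' = H \cap G_{t'}^\downarrow$ and $H'' = H \cap G_{t''}^\downarrow$, defining $V(H') = V(H) \cap V_{t'}^\downarrow$ (allowing isolated bag vertices) and likewise for $H''$, and using the standard facts that $E_t^\downarrow$ is the disjoint union of $E_{t'}^\downarrow$ and $E_{t''}^\downarrow$ and that $V_{t'}^\downarrow \cap V_{t''}^\downarrow = X_t$. The crucial locality fact is that any vertex of $V_{t'}^\downarrow \setminus X_t$ has all of its incident $G$-edges in $E_{t'}^\downarrow$; from this I would derive, writing $L_1, L_2$ for the odd-degree bag vertices of $H', H''$, writing $c_1, c_2$ for the numbers of their components disjoint from $X_t$, and $\P_1, \P_2, \mathcal B_1, \mathcal B_2$ for the traces and marked traces on $X$: that $L_1, L_2 \subseteq X$ with $L_1 \symdiff L_2 = L$; that every component of $H'$ disjoint from $X_t$ is already a component of $H$ (hence depot-containing), giving $c_1 + c_2 = c$ and forbidding depot-free components outside the bag; that $\P_1 \sqcup \P_2 = \P$ (two bag vertices lie in one block of $\P$ iff they are linked by an alternating chain of $H'$- and $H''$-components) and $(\P_1 \sqcup \P_2) \cap \Sup(\mathcal B_1 \cup \mathcal B_2) = \mathcal B$; and that $C \cap V_{t'}^\downarrow \subseteq V(H')$ and $V(H') \cap X_t = X$. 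Together these say precisely that $H'$ is compatible with $(X, L_1, c_1, \P_1, \mathcal B_1)$ at $t'$ and $H''$ with $(X, L_2, c_2, \P_2, \mathcal B_2)$ at $t''$. Since $L_2 = L \symdiff L_1$ and $c_2 = c - c_1$, the pair $(L_1, c_1)$ is among those ranged over in the join recurrence; applying the induction hypothesis on both sides yields triples in $dp[t', X, L_1, c_1]$ and $dp[t'', X, L_2, c_2]$ of weights at most $w(H')$ and $w(H'')$, and $\opjoin$ then produces a triple with partition $\P_1 \sqcup \P_2 = \P$, marks $\mathcal B$, and weight at most $w(H') + w(H'') = w_H$, which is retained in $dp[t, X, L, c]$ up to $\oprmc$. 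I expect the bookkeeping behind the identities $\P_1 \sqcup \P_2 = \P$ and $(\P_1 \sqcup \P_2) \cap \Sup(\mathcal B_1 \cup \mathcal B_2) = \mathcal B$, together with the careful treatment of bag vertices that are isolated on one side, to be the fiddly part; but all of it follows from the edge-locality fact above, much as in the join analysis of \cite{bodlaender2015deterministic,schierreich22}.
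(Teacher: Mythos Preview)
Your proposal is correct and follows essentially the same approach as the paper's proof: bottom-up induction over the nice tree decomposition, with the same case analysis at each node type (leaf, introduce vertex, introduce edge, forget with its three sub-cases, and join via splitting $H$ along the two subtrees). The only minor imprecision is your phrasing at the introduce-edge node---removing $\set{u,v}$ need not actually split the block containing $u$ and $v$, since they may remain connected in $H - \set{u,v}$; the paper handles this by simply saying the child partition is ``some $\P'$'' with $\P' \sqcup U[\set{u,v}] = \P$, which is all $\opglue$ requires---but your conclusion that $\opglue_w$ restores $(\P, \mathcal B)$ is correct either way.
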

\begin{proof}
    We prove the lemma by induction on the tree decomposition starting at the leaf nodes, similarly to the proof of \Cref{lm_depot_compatible_solution}.
    
    Since the leaf nodes contain no vertex, the only valid presignature for some leaf node $t$ is $(\emptyset, \emptyset, 0)$, the only partial solution compatible with this presignature is the empty graph, which has a cost of 0. Since $dp[t,\emptyset,\emptyset,0]=\set{(\emptyset, \emptyset,0)}$, the lemma holds true for $t$.
    We suppose that the lemma holds for every child node of a node $t$. 
    
    Let node $t$ with child $t'$ introduce vertex $v$. 
    If $v \notin X$, $H$ is also compatible with $(X,L,c,\P, \mathcal{B})$ at $t'$.
    Thus, by the induction hypothesis, for some $w\leq w_H$ we know that $(\P, \mathcal{B}, w)\in dp[t',X,L,c]$.
    Since, in this case, $dp[t',X,L,c]=dp[t,X,L,c]$ we get $(\P, \mathcal{B}, w)\in dp[t,X,L,c]$.
    If $v \in X$, we know that $v$ is isolated in $H$ and $v \notin L$.
    We also know that $(V(H)\setminus\set{v},E(H))$ is compatible with $(X\setminus\{v\},L,c,\P\setminus\set{\set{v}}, \mathcal{B}\setminus\set{\set{v}})$ at $t'$ and thus $(\P\setminus\set{\set{v}},\mathcal{B}\setminus\set{v}, w) \in dp[t',X\setminus\set{v},L,c]$ for some $w\leq w_H$. By definition of $\opins$ we know that $(\P, \mathcal{B}, w)\in dp[t,X,L,c]$.
    
    Suppose that $t$ is an introduce edge node with child $t'$ and $E_t = E_{t'} \cup \set{u,v}$. 
    Again, we need to consider two cases. First, if $\set{u,v}\notin E(H)$, $H$ is compatible with $(X,L,c,\P,\mathcal{B})$ at $t'$ and thus, for some $w\leq w_H$, we know that $(\P, \mathcal{B}, w)\in dp[t',X,L,c]$. By definition, this also implies that, for some $w' \leq w$ we have $(\P, \mathcal{B}, w')\in dp[t,X,L,c]$.
    If $\set{u,v}\in E(H)$, there is a partial solution $H' = (V(H),E(H)\setminus\set{\set{u,v}})$ with cost $w_{H'} = w_H - w(\set{u,v})$ compatible with $(X,L\symdiff \set{u,v},c,\P', \mathcal{B}')$ at $t'$ for some $\P'$ and $\mathcal{B}'$. If $u$ and $v$ are connected in $H'$ we know that $\P'=\P$ and $\mathcal{B}' = \mathcal{B}$, otherwise $\P'$ is such that the blocks $S,T \in \P'$, which contain $u$ and $v$ are joined in $\P$, and all other blocks remain the same. In this case $\mathcal{B} \setminus \set{S \cup T} = \mathcal{B}'\setminus \set{S,T}$ and $\mathcal{B}'$ contains $S$ and $T$ if there exists a depot in the connected component of $S$ or $T$ in $H'$. We know that there exists a $w \leq w_{H'}$ so that $(\P',\mathcal{B}',w)\in dp[t',X,L\symdiff \set{u,v},c]$. Therefore, there is a $w' \leq w_H$ such that $(\P,\mathcal{B},w') \in \mathtt{glue}_{w}(\set{u,v}, dp[t', X, L \symdiff \set{u,v}, c])$ and $(\P,\mathcal{B},w')\in dp[t, X, L, c]$. 

    Let $t$ with child $t'$ forget vertex $v$. 
    If $v \notin V(H)$, then $H$ is also compatible with the solution signature $(X, L, c, \P, \mathcal{B})$ at $t'$. Thus there exists $(\P,\mathcal{B},w) \in dp[t', X, L, c]$ with $w \leq w_H$ and $(\P,\mathcal{B},w) \in dp[t, X, L, c]$. 
    Otherwise, $v \in V(H)$. If $v$ is connected to some other vertex $u \in X$, $H$ is compatible with $(X \cup \set v, L, c, \P', \mathcal{B})$, where $\P' = \set{u,v} \sqcup \P$. Thus, for some $w \leq w' \leq w_H$ we have $(\P',\mathcal{B},w')\in dp[t',X\cup\set{v},L,c]$ and, by the definition of $\opproj$,$(\P,\mathcal{B},w)\in dp[t',X\cup\set{v},L,c]$. For the last case, that is $v \in V(H)$ and $v$ is not reachable from any vertex in $X$, $v$ is contained in a connected component which is separated from $X$. At $t'$, this connected component is still active and $H$ is compatible with $(X\cup\set{v},L,c-1,\P\cup\set{\set{v}},\mathcal{B}\cup\set{\set{v}})$. By the induction hypothesis, there exists $w\leq w' \leq w_H$ so that $(\P\cup\set{\set{v}},\mathcal{B}\cup\set{\set{v}},w)\in dp[t',X\cup\set{v},L,c-1]$ and $(\P,\mathcal{B},w)\in dp[t,X,L,c]$.     

    Lastly, suppose that $t$ is a join node with children $t'$ and $t''$.
    Let $H_{t'} = H \cap G_{t'}^\downarrow$ and $H_{t''} = H \cap G_{t''}^\downarrow$ with costs $w_{t'}$ and $w_{t''}$, respectively.
    Since the edges of $G_t^\downarrow$ are split into $G_{t'}^\downarrow$ and $G_{t''}^\downarrow$ we have $w = w_{t'} + w_{t''}$.
    Let $L_{t'}, L_{t''}\subseteq X$ be the sets of vertices with an odd degree and $c_{t'}$ and $c_{t''}$ be the number of connected components separated from $X$ in $H_{t'}$ and $H_{t''}$.
    We know that $L = L_{t'}\symdiff L_{t''}$ and $c_{t'}+c_{t''}=c$.
    Let $\P_{t'}$ be a partitions of $X$ corresponding to the connected component of $H_{t'}$ and $\mathcal{B}_{t'}\subseteq X$ contain the connected components connected to a depot.
    We know that $H_{t'}$ is compatible with $(X,L_{t'},c,\P_{t'},\mathcal{B}_{t'})$.
    Thus, for some $w'_{t'}\leq w_{t'}$ we have $(\P_{t'},\mathcal{B}_{t'},w'_{t'}) \in dp[t',X,L_{t'},c_{t'}]$.
    Defining $\P_{t''},\mathcal{B}_{t''}$ and $w'_{t''}\leq w_{t''}$ analogously for $ H_{t'}$ yields $(\P_{t''},\mathcal{B}_{t''},w'_{t''}) \in dp[t'',X,L_{t''},c_{t''}]$.
    By definition, for some $w\leq w' \leq w_H$ we have $(\P,\mathcal{B},w')\in\opjoin(\set{(\P_{t'},\mathcal{B}_{t'},w'_{t'})},\set{(\P_{t''},\mathcal{B}_{t''},w'_{t''})})$ and $(\P,\mathcal{B},w)\in dp[t,X,L,c]$.
\end{proof}

The above algorithm yields:

\begin{theorem}
    \label{thm:uncapacitated-tw-fpt}
    $\VRP$ parameterized by $\tw$ is in \FPT.
\end{theorem}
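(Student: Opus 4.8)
The plan is to assemble the pieces developed in this section into a single \FPT\ algorithm for $\VRP$ parameterized by $\tw$. First I would invoke \Cref{thm:tw-2-approx} to obtain, in time $2^{\O(\tw)} n$, a tree decomposition of width $\O(\tw)$, and then \Cref{thm:nice-tree-decomposition-transform} to turn it into a nice tree decomposition $\T$ of width $\O(\tw)$ with polynomially many nodes. Next I would recall the chain of reductions already established: by \Cref{thm:evrp-routing-multigraph-equiv} a $\VRP$ routing of weight $r$ corresponds to an Eulerian subgraph $H$ of the multigraph $G'$ (here with edge multiplicities $\kappa \equiv 1$, so $G' = G$) with at most $k$ connected components, each containing a depot, such that $C \subseteq V(H)$; and by \Cref{thm:evrp-remove-edge-twice} it suffices to consider multiplicities at most $2$ anyway, so the input blow-up is only a constant factor. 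Thus solving $\VRP$ reduces to deciding, for the given $G$, whether such an $H$ of weight at most $r$ exists.

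Then I would run the dynamic program defined above over $\T$, computing $dp[t, X, L, c]$ bottom-up using the leaf/introduce-vertex/introduce-edge/forget/join recurrences. The two correctness lemmas just proven establish that $dp[r_{\text{root}}, \emptyset, \emptyset, c]$ contains exactly (up to dominated duplicates) the marked partitions realizable by partial solutions compatible with the root signature: since $X_r = \emptyset$, the only presignatures at the root are $(\emptyset, \emptyset, c)$ for $c \le k$, a compatible partial solution is precisely an Eulerian subgraph of $G = G_r^\downarrow$ covering $C$ whose $c$ connected components each contain a depot, and its weight is recorded. So the answer to the $\VRP$ instance is ``yes'' iff some $dp[r, \emptyset, \emptyset, c]$ with $c \le k$ contains an entry $(\emptyset, \emptyset, w)$ with $w \le r$; reading off a minimum such $w$ over all $c$ gives the optimal routing weight, and a routing can be recovered by standard backtracking through the DP table followed by extracting Eulerian trails in each component as in \Cref{thm:evrp-routing-multigraph-equiv}.

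For the running time I would bound the number of marked partitions stored in any $dp[t, X, L, c]$: a marked partition is a partition $\P$ of $X \subseteq X_t$ together with a subset $\mathcal{B} \subseteq \P$, and since $\cardinality{X_t} = \O(\tw)$ there are $\tw^{\O(\tw)}$ partitions and $2^{\O(\tw)}$ choices of $\mathcal{B}$, while $\oprmc$ keeps at most one weight per $(\P, \mathcal{B})$ pair; hence $\cardinality{dp[t,X,L,c]} = \tw^{\O(\tw)}$. There are $n^{\O(1)}$ nodes and, per node, $2^{\O(\tw)}$ presignatures $(X, L, c)$ (with $c \le k \le n$). By the lemma on the cost of the operators, every node except join nodes costs $\tw^{\O(\tw)} \cdot n^{\O(1)}$; a join node sums over $2^{\O(\tw)} \cdot n$ choices of $(L', c')$ an $\opjoin$ of cost $\cardinality{\A}\cdot\cardinality{\mathcal{C}}\cdot\cardinality{U}^{\O(1)} = \tw^{\O(\tw)}\cdot n^{\O(1)}$, which is still $\tw^{\O(\tw)}\cdot n^{\O(1)}$. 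Multiplying through yields total time $2^{\O(\tw\log\tw)} \cdot n^{\O(1)}$, which is \FPT\ in $\tw$, completing the proof.

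The main obstacle in this argument is not any single step but making sure the DP recurrences genuinely realize all and only the compatible partial solutions at each node type --- in particular that the $\opdetach$/$\opproj$ split at forget nodes correctly accounts for a connected component becoming ``finished'' (detached, hence required to contain a depot and counted in $c$) versus staying active, and that the $\opjoin$ at join nodes correctly merges the parity sets $L$, the component counts $c$, and the depot-marking $\mathcal{B}$ without double-counting edges; these are exactly the points addressed by the two preceding lemmas, so here I would simply cite them. A secondary subtlety worth a sentence is confirming that the weighted-partition machinery of \cite{bodlaender2015deterministic}, which we have extended to carry the extra marking $\mathcal{B}$ and count $c$, still supports the claimed operator running times --- but this is the content of the operator-cost lemma stated above.
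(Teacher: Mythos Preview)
Your proposal is correct and follows essentially the same approach as the paper's proof: obtain a nice tree decomposition, run the DP of \Cref{def:vrp-dp}, invoke the two correctness lemmas, read off the answer at the root over all $c \le k$, and bound the running time by $\tw^{\O(\tw)} \cdot n^{\O(1)} = 2^{\O(\tw\log\tw)} \cdot n^{\O(1)}$. One small slip: to view $\VRP$ as an $\EVRP$ instance you should take $\kappa$ unbounded (not $\kappa \equiv 1$) and then apply \Cref{thm:evrp-remove-edge-twice} to reduce to $\kappa \equiv 2$, so $G'$ is $G$ with every edge doubled; you clearly have the right picture since you invoke the multiplicity-$2$ bound immediately afterward, but the sentence as written is inconsistent.
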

\begin{proof}
    We first obtain a nice tree decomposition $\T$ of $G$ with width $\O(\tw)$ in $\FPT$-time with respect to $\tw$ using \Cref{thm:tw-2-approx,thm:nice-tree-decomposition-transform}.
    Next, we compute the dynamic programming array $dp$ according to \Cref{def:vrp-dp}.
    The array has a total of $2^{\O(\tw)}\cdot n ^{\O(1)}$ entries.
    Every entry contains at most $\tw^{\O(\tw)}\cdot n ^{\O(1)}$ marked partitions, since we only keep marked partitions with optimal weight.
    For the computation of the entry at any presignature $(X,L,c)$ at a node $t$, we consider the marked partitions of the sole child node of $t$, or all combinations of marked partitions at the two child nodes, if $t$ is a join node. Thus, every node requires a running time of at most $\tw^{\O(\tw)}\cdot n ^{\O(1)}$.
    We then check the weight of all marked partitions at the root node $r$ of $\T$ in $dp[r, \emptyset, \emptyset, \emptyset, k']$ for all $k' \le k$ and accept or reject the instance accordingly. In conclusion, the described algorithm has a total running time in $\tw^{\O(\tw)}\cdot n ^{\O(1)}$.
\end{proof}

\section{Capacitated Vehicle Routing}
\label{sec:capacitated}

In this section, we investigate the complexity of $\LoadCVRP$, $\GasCVRP$, and $\LoadGasCVRP$.
We observe that the three problems share most complexity characteristics.
In particular, the requirement to adhere to vehicle capacities greatly increases the complexity.
Whereas $\VRP$ admits an $\FPT$ algorithm when parameterized by the treewidth of $G$, all problems investigated in this section are $\NP$-hard even on trees.
We derive this hardness by encoding $\BinPacking$ instances in vehicle routing problems.
In doing so, we also realize that the presence of zero-weight edges greatly impacts the tractability when parameterizing by the output weight $r$.

Next, we investigate the parameterized complexity with respect to the given vehicle capacities ($\ell$ for $\LoadCVRP$ and $g$ for $\GasCVRP$).
We show $\paraNP$-hardness by reduction from $\TrianglePacking$ for all three variants.
This shows that capacitated vehicle routing problems do not only pose the challenge of ``algebraic packing'' as in $\BinPacking$, but also one of ``structural packing''.
We further provide $\paraNP$-hardness results for $\LoadGasCVRP$, when parameterized by treewidth, number of depots and either the load capacity or gas constraint, both by reduction from $\NTDM$. 

Finally, we find that the situation is not as dim when parameterizing by both the treewidth and the capacity value (for example $\tw + \ell$ for $\LoadCVRP$).
While we cannot yet answer whether or not there can exist an $\FPT$ algorithm for these parameters, we provide an $\XP$ algorithm for $\LoadGasCVRP$ parameterized by $\tw + \ell + g$, thereby ruling out $\paraNP$-hardness.
Additionally, our (to the best of our knowledge) novel $\FPT$-algorithm for $\BinPacking$ parameterized by the capacity of each bin is a solid starting point for further analysis of the parameterized complexity of capacitated vehicle routing.

We start by extending some of our results from \Cref{sec:uncapacitated} to capacitated vehicle routing.

\begin{theorem}
    \label{thm:cvrp-c-fpt}
    For each of $\LoadCVRP$, $\GasCVRP$, and $\LoadGasCVRP$, there is an algorithm that computes an optimal routing in $\cardinality{C}^{\O(\cardinality C)} \cdot n^{\O(1)}$ steps.
\end{theorem}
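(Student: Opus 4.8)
The plan is to reuse the enumeration from the proof of \Cref{thm:vrp-c-fpt}, augmenting it with feasibility checks for the new constraints while keeping the same asymptotic running time. As before, iterate over all partitions $\P$ of $C$ and all permutations $\sigma$ of $C$, skipping any iteration with $\cardinality\P > k$. For a block $X \in \P$, let $c_1, \dots, c_m$ be its elements ordered by $\sigma$, and pick $d_X \in D$ minimizing $w_X \coloneq w(d_X \leadsto c_1 \leadsto \cdots \leadsto c_m \leadsto d_X)$ using a precomputed distance table on $D \cup C$; the associated route is the closed walk tracing these shortest paths. The only change is that we now reject the current partition--permutation pair unless every block also satisfies the relevant capacity constraint: $\sum_{c \in X} \Lambda(c) \le \ell$ for every block $X$ (for $\LoadCVRP$ and $\LoadGasCVRP$) and $w_X \le g$ for every block $X$ (for $\GasCVRP$ and $\LoadGasCVRP$). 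Among all surviving pairs we output a routing of minimum total weight $\sum_{X \in \P} w_X$ (or report infeasibility if none survive).

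Correctness in one direction is immediate: the load check enforces $\Gamma_{\ell,\Lambda}$, the gas check enforces $\Gamma_g$, $\Gamma_{\text{closed}}$ holds by construction, and taking $A$ to map each client to the vehicle of its block yields a valid routing. For the other direction, take an optimal routing $(R, A)$ with $R = \set{p_1, \dots, p_{k'}}$. Let $\P$ have blocks $A^{-1}(1), \dots, A^{-1}(k')$ (discarding empty ones), and let $\sigma$ order, within each block $A^{-1}(i)$, the clients by the order in which $p_i$ first visits them. Cutting $p_i$ at the depot and at the first occurrences of its assigned clients and replacing each piece by a shortest path yields a closed walk of weight at most $w(p_i)$ that visits exactly the clients of $A^{-1}(i)$ (and possibly others). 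Consequently $w_{A^{-1}(i)} \le w(p_i)$, so the gas check on $w_{A^{-1}(i)}$ passes whenever $w(p_i) \le g$, and the load check passes because the set of clients assigned to $p_i$ is unchanged. Hence this partition--permutation pair survives and produces a routing of weight at most $\sum_i w(p_i) = w(R)$, so the algorithm is optimal.

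The running time is that of \Cref{thm:vrp-c-fpt} up to constants: there are $\cardinality{C}^{\cardinality C} \cdot \cardinality C! \le \cardinality{C}^{\O(\cardinality C)}$ iterations, the distance table costs $n^{\O(1)}$ once, and each iteration spends $\O(\cardinality C)$ on the walk weights and load sums and $\O(\cardinality C \cdot \cardinality D)$ over the choices of $d_X$, for a total of $\cardinality{C}^{\O(\cardinality C)} \cdot n^{\O(1)}$.

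The only subtle point --- and hence the main (mild) obstacle --- is the shortcutting argument: one has to use that in $\GVRS$ a route need only \emph{visit} its assigned clients (coverage is a superset condition, so a shortcut route incidentally passing through other clients is harmless) and that $\Gamma_{\ell,\Lambda}$ constrains the \emph{assignment} rather than the vertices physically traversed. These two facts are exactly what make shortcutting each $p_i$ to shortest paths between its assigned clients preserve feasibility, and what make the per-block choice of the cheapest depot globally optimal, since depots may be shared by several vehicles and the blocks are therefore independent.
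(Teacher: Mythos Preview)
Your proposal is correct and follows the same approach as the paper: extend the enumeration of \Cref{thm:vrp-c-fpt} and discard any candidate routing that violates $\Gamma_{\ell,\Lambda}$ and/or $\Gamma_g$. The paper's proof is a two-line sketch, whereas you spell out the shortcutting argument and the subtlety that $\Gamma_{\ell,\Lambda}$ constrains the assignment rather than the traversed vertices; this extra detail is sound and does not change the method.
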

\begin{proof}
    The algorithm is an extension of the algorithm from \Cref{thm:vrp-c-fpt}.
    Whenever a $\VRP$ routing is generated, it should only be considered as a feasible $\CVRP$ solution if it satisfies $\Gamma_{\ell, \Lambda}$ and/or $\Gamma_g$ (according to which problem is being solved).
    This can be checked in $n^{\O(1)}$ steps per $\VRP$ routing.
\end{proof}

Note that the proof for \Cref{thm:vrp-r-fpt} does not work for $\LoadCVRP$ in general, as there is no sensible way to update the demand $\Lambda$ when contracting zero-weight edges between two clients.
In particular, before merging vertices $u, v$, the two clients could be covered by different vehicles, but after the merge, they need to be covered by the same vehicle.
This problem also occurs in $\LoadGasCVRP$.
However, as the assignment of clients is unimportant for $\GasCVRP$, the proof of \Cref{thm:vrp-r-fpt} works analogously for the following statement.
\begin{corollary}
    \label{thm:gas-cvrp-r-fpt}
    There is an algorithm that, given a $\GasCVRP$ instance and $r \in \N$, decides whether there exists a routing of weight at most $r$ in $f(r) \cdot n^{\O(1)}$ steps, for some computable function $f$.
\end{corollary}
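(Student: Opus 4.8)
The plan is to follow the proof of \Cref{thm:vrp-r-fpt} almost verbatim, replacing the appeal to \Cref{thm:vrp-c-fpt} by the $\GasCVRP$ case of \Cref{thm:cvrp-c-fpt}. The only step that requires genuine care is verifying that the preprocessing step --- contracting zero-weight edges --- remains sound in the presence of the gas constraint $\Gamma_g$. This is exactly the place where $\GasCVRP$ behaves better than $\LoadCVRP$: the predicate $\Gamma_g$ only caps the weight of each individual walk and introduces no coupling between clients, so collapsing vertices joined by zero-weight edges cannot turn a feasible routing infeasible or vice versa.

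First I would perform the contraction. Let $G_0$ be the spanning subgraph of $G$ consisting of all weight-$0$ edges, and contract each connected component of $G_0$ to a single vertex, obtaining a multigraph $G'$ whose weight function is induced by $w$ (parallel edges are allowed; any positive-weight self-loop created inside a component is simply deleted, since it can never help a routing). A contracted vertex is declared a client (respectively, a depot) iff some original vertex in its component was one. The parameters $r$ and $g$ are left unchanged, and every edge of $G'$ now has weight at least $1$.

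Next I would establish that $G'$ admits a $\GasCVRP$ routing of weight at most $r$ iff $G$ does. For the forward direction, given a routing $(R, A)$ in $G$, projecting each walk of $R$ onto $G'$ (replacing each vertex by the component containing it and dropping the steps that traversed weight-$0$ edges) yields closed walks of the same weight that start at depot vertices of $G'$, cover all clients of $G'$, and still satisfy $w(p) \le g$ for every walk, since weights can only decrease. For the converse, given a routing in $G'$, each walk visiting a contracted vertex $v'$ is lifted to a walk in $G$ by additionally traversing, back and forth, a spanning tree of the corresponding component of $G_0$; these detours add zero weight, so the lifted walk still obeys $w(p) \le g$, still starts at a depot of $G$, and now visits every original client in that component. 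Hence both reductions preserve the gas constraint as well as the weight bound $r$. (The analogous lift fails for $\LoadCVRP$, because merging two clients into one vertex may force them onto a common vehicle and thereby violate $\Gamma_{\ell, \Lambda}$; this is why \Cref{thm:vrp-r-fpt} does not transfer there.)

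Finally I would conclude as in \Cref{thm:vrp-r-fpt}. In $G'$ every edge has weight at least $1$, so any routing of weight at most $r$ consists of closed walks of total length at most $r$ and therefore visits at most $r$ vertices overall; in particular it can cover at most $r$ clients. If $G'$ has more than $r$ clients we reject immediately; otherwise the number of clients of $G'$ is at most $r$, and we run the $\GasCVRP$ algorithm of \Cref{thm:cvrp-c-fpt} on $G'$, obtaining an optimal routing in time $\cardinality{C}^{\O(\cardinality{C})} \cdot n^{\O(1)} \le r^{\O(r)} \cdot n^{\O(1)} = f(r) \cdot n^{\O(1)}$, and accept iff its weight is at most $r$. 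The one point demanding a careful argument is the two-directional equivalence between the original and the contracted instance under $\Gamma_g$; everything else is identical bookkeeping to \Cref{thm:vrp-r-fpt}.
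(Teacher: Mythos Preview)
Your proposal is correct and follows precisely the approach the paper takes: contract zero-weight edges, observe that this preserves feasibility under $\Gamma_g$ (since the gas constraint only bounds walk weight and the zero-weight detours add nothing), then invoke the $\GasCVRP$ case of \Cref{thm:cvrp-c-fpt} on the contracted instance after rejecting when $r < \cardinality{C}$. The paper in fact compresses all of this into a single sentence noting that the proof of \Cref{thm:vrp-r-fpt} carries over because the assignment of clients is irrelevant for $\GasCVRP$; your write-up simply spells out the two-directional equivalence that the paper leaves implicit.
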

When no zero-weight edges exist, the proof of \Cref{thm:vrp-r-fpt} also suffices, as the step of contracting zero-weight edges can be skipped.
\begin{corollary}
    \label{thm:load-cvrp-no-zero-r-fpt}
    There is an algorithm that, given a $\LoadCVRP$ or $\LoadGasCVRP$ instance with no zero-weight edges and $r \in \N$, decides whether there exists a routing of weight at most $r$ in $f(r) \cdot n^{\O(1)}$ steps, for some computable function $f$.
\end{corollary}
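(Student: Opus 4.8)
The plan is to follow the proof of \Cref{thm:vrp-r-fpt} essentially verbatim, with the single modification that the initial zero-weight edge contraction is omitted. Recall that this contraction is precisely the step that breaks for $\LoadCVRP$ and $\LoadGasCVRP$: merging a zero-weight edge between two clients would force them onto a common vehicle, and there is no consistent way to update the demand function $\Lambda$. Since the instance is assumed to have no zero-weight edges, this problematic preprocessing is unnecessary, and the rest of the argument carries over unchanged to both problems.

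Concretely, I would first observe that since $w$ is integral and never zero, every edge has weight at least $1$. Hence a closed walk $p$ of length $\ell$ (edges counted with multiplicity) satisfies $w(p) \ge \ell$ and visits at most $\ell$ distinct vertices, because its first and last vertices coincide. Summing over the walks of a routing $R$, the set of vertices visited by $R$ has size at most $w(R)$. Any feasible routing of weight at most $r$ must cover all of $C$, so the existence of such a routing implies $\cardinality C \le r$; if $\cardinality C > r$, the algorithm rejects immediately.

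In the remaining case $\cardinality C \le r$, I would invoke the algorithm of \Cref{thm:cvrp-c-fpt} for the appropriate variant ($\LoadCVRP$ or $\LoadGasCVRP$) to compute an optimal routing, and accept if and only if its weight is at most $r$. The running time is $\cardinality C^{\O(\cardinality C)} \cdot n^{\O(1)} \le r^{\O(r)} \cdot n^{\O(1)}$, which is of the desired form $f(r) \cdot n^{\O(1)}$ with the computable function $f(r) = r^{\O(r)}$.

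I do not foresee any real obstacle here; the only conceptual point is recognizing that the failure mode of \Cref{thm:vrp-r-fpt} for the load-capacitated variants is caused solely by the zero-weight contraction step, which the hypothesis of the corollary eliminates. The elementary bound ``weight $\ge$ number of visited vertices'' is then the only ingredient needed to reduce the parameter $r$ to the parameter $\cardinality C$ and appeal to \Cref{thm:cvrp-c-fpt}.
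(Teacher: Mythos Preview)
Your proposal is correct and mirrors the paper's own argument exactly: the paper simply notes that the proof of \Cref{thm:vrp-r-fpt} works once the zero-weight contraction step is dropped, and you spell out precisely this, correctly invoking \Cref{thm:cvrp-c-fpt} in place of \Cref{thm:vrp-c-fpt} for the capacitated variants.
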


Another insight is that limiting both $k$ and the capacity constraint suffices for tractability, because the number of clients that can possibly be covered is limited.

\begin{theorem}
    \label{thm:cvrp-cap-k-fpt}
    All of the following problems are in $\FPT$:
    \begin{itemize}
        \item $\LoadCVRP$ parameterized by $k + \ell$,
        \item $\GasCVRP$ parameterized by $k + g$, and
        \item $\LoadGasCVRP$ parameterized by $k + \ell$.
    \end{itemize}
\end{theorem}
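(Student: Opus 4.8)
The plan is to turn each of these parameterizations into an instance with a bounded number of clients and then invoke \Cref{thm:cvrp-c-fpt}. The key observation is that, once the number of vehicles $k$ and the relevant capacity parameter are fixed, only boundedly many clients can be served by any feasible routing; hence either the instance is a trivial no-instance or $\cardinality C$ is bounded by a function of the parameters, at which point \Cref{thm:cvrp-c-fpt} runs in $\FPT$ time.

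For $\LoadCVRP$ and $\LoadGasCVRP$ parameterized by $k + \ell$: since $\Lambda(c) \ge 1$ for every client, the constraint $\Gamma_{\ell, \Lambda}$ forces each of the at most $k$ routes to serve at most $\ell$ clients, so any feasible routing covers at most $k \ell$ clients (the gas constraint is not even needed for this bound). Thus if $\cardinality C > k \ell$ we reject immediately; otherwise $\cardinality C \le k \ell$ is bounded in terms of the parameters, and running the algorithm of \Cref{thm:cvrp-c-fpt} --- which computes an optimal routing, whose weight we compare against $r$ --- takes time $\cardinality C^{\O(\cardinality C)} \cdot n^{\O(1)} \le (k \ell)^{\O(k \ell)} \cdot n^{\O(1)}$, which is $\FPT$ in $k + \ell$.

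For $\GasCVRP$ parameterized by $k + g$ the same idea applies, but one first has to deal with zero-weight edges, along which a single route of weight at most $g$ could otherwise cover unboundedly many clients. We preprocess exactly as in the proof of \Cref{thm:vrp-r-fpt} (which, as observed before \Cref{thm:gas-cvrp-r-fpt}, carries over to $\GasCVRP$ because $\Gamma_g$ places no requirement on the assignment $A$): contract all zero-weight edges, declaring a merged vertex a client, respectively a depot, whenever one of the merged vertices was. After this step every edge has weight at least $1$, so a closed walk of weight at most $g$ consists of at most $g$ edges and hence visits at most $g$ distinct vertices; thus each route covers at most $g$ clients and a feasible routing covers at most $k g$ clients. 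As before, we reject if $\cardinality C > k g$, and otherwise apply \Cref{thm:cvrp-c-fpt}, obtaining running time $(k g)^{\O(k g)} \cdot n^{\O(1)}$.

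The only step that needs genuine care is the zero-weight-edge contraction for $\GasCVRP$: one must verify that it neither creates nor destroys routings of weight at most $r$ and that it is compatible with the per-route bound $\Gamma_g$. This holds because contraction never increases the weight of a route, and conversely any route in the contracted graph lifts back to the original graph by inserting traversals of the (weight-zero) contracted edges to pick up the merged clients at no extra cost; hence both the optimal weight and the set of achievable per-route weights are preserved. Everything else is the routine bound-and-reduce argument above, so I expect no further obstacles.
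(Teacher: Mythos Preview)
Your proposal is correct and follows essentially the same argument as the paper: bound $\cardinality C$ by $k\ell$ (respectively $kg$ after contracting zero-weight edges) and invoke \Cref{thm:cvrp-c-fpt}. Your added justification that the zero-weight contraction preserves both the total weight and the per-route gas bound is a welcome elaboration, but the overall approach is identical.
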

\begin{proof}
    For $\LoadCVRP$ and $\LoadGasCVRP$, the vehicles can in total cover most $k \cdot \ell$ clients, as $\Lambda(c) \ge 1$ for all $c \in C$.
    Thus, if $k \cdot \ell < \cardinality C$, the instance can be rejected.
    Otherwise, $\cardinality C \le k \cdot \ell$ and the algorithm from \Cref{thm:cvrp-c-fpt} can be used.

    For $\GasCVRP$, we first contract every zero-weight edges.
    Afterwards, the vehicles can in total cover at most $k \cdot g$ clients, because a vehicle has to traverse an edge with positive weight to enter a vertex and we assume the weights to be integral.
    The rest of the argument is analogous to $\LoadCVRP$.
\end{proof}

\subsection{Hardness from Bin Packing}
\label{sec:bin-packing}

Prior work regarding capacitated vehicle routing, such as by \cite{ralphs2003capacitated}, has already observed the connection to $\BinPacking$.
Before we can formally capture the inclusion of $\BinPacking$ within vehicle routing, we first introduce the relevant prior work.

\begin{definition}[$\BinPacking$]
    A $\BinPacking$ instance consists of a finite set $U$ of items, with sizes given by $s\colon U \to \N^+$, a bin capacity $B \in \N^+$, and a number of bins $k \in \N$.
    It asks whether there exists a partition $U_1, U_2, \dots, U_k$ of $U$ such that for all $i \in [k]$ we have $\sum_{u \in U_i} s(u) \le B$.

    Additionally, we define $\UnaryBinPacking$ as a $\BinPacking$ variant, in which all numerical inputs are encoded in unary.
\end{definition}

Going forward, we assume that $\BinPacking$ instances do not contain items $u \in U$ with $s(u) > B$, as these instances can be rejected immediately.

\begin{theorem}[\cite{garey1979computers}]
    Both $\BinPacking$ and $\UnaryBinPacking$ are $\NP$-complete.
\end{theorem}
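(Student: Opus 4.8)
The plan is to establish both membership in $\NP$ and $\NP$-hardness, treating the binary-encoded and unary-encoded versions separately for the hardness part. For membership, a certificate is simply the assignment of each item in $U$ to one of the $k$ bins (equivalently, the partition $U_1, \dots, U_k$); this has size polynomial in the input, and verifying that $\sum_{u \in U_i} s(u) \le B$ holds for every $i \in [k]$ takes polynomial time. Since the unary encoding of an instance is at least as long as its binary encoding, $\UnaryBinPacking$ is in $\NP$ as well. So the substance lies in the hardness arguments.

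For $\BinPacking$ with the standard (binary) encoding, I would reduce from \textsc{Partition}: given positive integers $a_1, \dots, a_n$ with $\sum_i a_i = 2T$, build the instance with item set $U = \{1, \dots, n\}$, sizes $s(i) = a_i$, bin capacity $B = T$, and $k = 2$ bins. A feasible packing into two bins of capacity $T$ exists if and only if the $a_i$ split into two parts each summing to exactly $T$, i.e., if and only if the \textsc{Partition} instance is a yes-instance. Since all numbers are copied verbatim, this is a polynomial-time many-one reduction, and as \textsc{Partition} is $\NP$-hard we obtain $\NP$-completeness of $\BinPacking$.

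The main obstacle is the unary variant, since the reduction above does not transfer: \textsc{Partition} with unary-encoded numbers is solvable in pseudopolynomial (hence polynomial) time, so it is not a valid $\NP$-hard source. Instead I would reduce from \textsc{3-Partition}, which is \emph{strongly} $\NP$-hard, i.e., $\NP$-hard even when all numbers are encoded in unary. Given a \textsc{3-Partition} instance of $3m$ positive integers $a_1, \dots, a_{3m}$ and a bound $B$ with $B/4 < a_j < B/2$ for all $j$ and $\sum_j a_j = mB$, build the $\BinPacking$ instance with these $3m$ items, sizes $s(j) = a_j$, bin capacity $B$, and $k = m$ bins. The size constraints force every bin to contain exactly three items: a bin with four or more items has weight exceeding $4 \cdot (B/4) = B$, and since there are $3m$ items in $m$ bins, if some bin held at most two items then some other bin would hold at least four. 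Hence a feasible packing exists if and only if the items partition into $m$ triples each summing to exactly $B$, which is precisely a yes-instance of \textsc{3-Partition}. Because the numbers are copied verbatim, their unary encodings blow up the instance size only polynomially, so this is a valid reduction showing that $\UnaryBinPacking$ is $\NP$-hard; combined with membership in $\NP$, this yields $\NP$-completeness.
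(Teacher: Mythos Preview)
The paper does not provide its own proof of this statement; it simply cites the result from Garey and Johnson~\cite{garey1979computers}. Your argument is correct and is essentially the standard textbook proof: membership in $\NP$ via the partition certificate, hardness of the binary version from \textsc{Partition}, and hardness of the unary version from the strongly $\NP$-hard \textsc{3-Partition}. There is nothing to compare against in the paper itself, but your write-up is sound and matches the classical treatment.
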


\begin{theorem}[\cite{jansen2013bin}]
    $\UnaryBinPacking$ parameterized by the number of bins $k$ is $\W[1]$-hard.
\end{theorem}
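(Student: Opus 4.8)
The plan is to recover the standard argument behind this (cited) result: an FPT reduction from a canonical $\W[1]$-complete problem into $\UnaryBinPacking$ in which the number of bins is bounded by a function of the source parameter. The most convenient source is \textsc{Multicolored Clique}, which is $\W[1]$-hard parameterized by the number $k$ of color classes: given a graph $G$ whose vertex set is partitioned into classes $V_1,\dots,V_k$, each of size $n$, decide whether $G$ has a clique containing exactly one vertex of every class. From such an instance I would build a $\UnaryBinPacking$ instance with $k+\binom{k}{2}$ bins — a \emph{selection} bin $S_i$ for each color $i$ and a \emph{pair} bin $P_{ij}$ for each $i<j$ — so that the number of bins depends only on $k$, as an FPT reduction requires. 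The conceptual point, and the reason the statement is not obvious, is that the textbook $\NP$-hardness reductions for $\BinPacking$ encode the combinatorial choices inside a single integer written in a positional system with an exponentially large base; under a unary encoding every item size must remain polynomial, so the $n^k$ ways of picking one vertex per class cannot be compressed into one large number and must instead be spread \emph{coordinate-wise} across the $\binom{k}{2}$ pair bins, leaving the bins themselves to carry out the arithmetic bookkeeping.

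For the construction I would first normalize to \emph{exact} packing: fix a common bin capacity $B$ and add filler items until the total item size equals $B$ times the number of bins, so that in any feasible solution every bin is filled to exactly $B$ and thus each bin imposes a linear equation on its contents. For color $i$ and the $p$-th vertex of $V_i$, introduce a selection item whose size is a large fixed constant plus a term depending on $p$, arranged so that exactly one selection item fits into $S_i$ and the choice of that item records the index $p\in[n]$ of the chosen vertex of color $i$. For each pair $i<j$ and each \emph{edge} of $G$ joining the $p$-th vertex of $V_i$ to the $q$-th vertex of $V_j$, create an edge item of size $\alpha p+\beta q+\gamma$ for suitably chosen coefficients; since no such item is created for a non-edge, a pair bin $P_{ij}$ can be completed only by a genuine edge, forcing the vertices selected for colors $i$ and $j$ to be adjacent.

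The heart of the reduction is making the index $p$ recorded in $S_i$ agree with the color-$i$ index used by the edge item placed in each of the $k-1$ pair bins $P_{ij}$. I would enforce this by a conservation argument: each pair bin $P_{ij}$ must also receive two ``complement'' padding items, one tied to color $i$ and one to color $j$, whose sizes let $P_{ij}$ reach exactly $B$ only when the padding's index matches the edge item's index; and the selection item in $S_i$ together with the $k-1$ color-$i$ padding items are balanced so that all of them can be satisfied simultaneously only for a single common index. All coefficients and padding sizes can be chosen polynomial in $n$ and $k$, with the three summands inside a pair bin confined to well-separated value ranges so that no spurious combination reaches $B$; hence every size is polynomial, the unary encoding has polynomial length, the number of bins is a function of $k$, and the map is computable in polynomial time — a valid parameterized reduction. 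Correctness is then routine: a multicolored clique yields the obvious exact packing (its selection items, the $\binom{k}{2}$ edge items, and the matching paddings), and conversely an exact packing forces one vertex per color, forces every pair bin to be completed by a true edge, and, via the conservation constraints, forces those edges to agree on the selected vertices, i.e.\ to span a multicolored clique.

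The step I expect to be most technical is the arithmetic design of this consistency gadget. One has to simultaneously keep every size polynomial (so that the unary encoding is legitimate), ensure that in each pair bin the only way to reach capacity $B$ is ``one edge item together with its two matching paddings'' and never an accidental sum, and ensure the cross-bin agreement between $S_i$ and its $k-1$ pair bins; the first two requirements typically call for a ``no-carry'' separation of the value ranges of the summands, while the last forces the selection item of $S_i$ to encode $k-1$ copies of the index — still polynomial, since $k$ is the parameter. Verifying that no unintended packing exists is the part of the proof that carries the real weight.
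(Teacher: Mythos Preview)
The paper does not prove this theorem at all: it is stated with a citation to \cite{jansen2013bin} and used as a black box in the subsequent reduction (\Cref{thm:cvrp-k-w1-on-trees}). There is therefore nothing to compare against within the paper itself.

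That said, your sketch is a faithful outline of the argument in the cited reference: Jansen et al.\ indeed reduce from a $k$-parameter clique problem, use $\Theta(k^2)$ bins (vertex-selection bins plus edge/pair bins), encode vertex indices as polynomially bounded integers so that the unary encoding stays polynomial, and enforce cross-bin consistency via carefully chosen additive gadgets that admit a unique exact decomposition. Your identification of the technical crux—designing the arithmetic so that no unintended sum reaches the bin capacity while keeping all sizes polynomial—is exactly right. One detail worth tightening if you ever write this out in full: the ``conservation'' mechanism you describe informally is realized in the original proof by giving each color $i$ a distinct base position in a mixed-radix representation, so that agreement across the $k-1$ pair bins incident to $S_i$ is forced digit-by-digit rather than by a global balancing argument; this is what makes the ``no spurious combination'' verification clean.
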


\begin{theorem}
    \label{thm:cvrp-k-w1-on-trees}
    $\LoadCVRP$, $\GasCVRP$, and $\LoadGasCVRP$ are $\W[1]$-hard when parameterized by $k$, even with unit demands, unit weights, a single depot, and on trees.
\end{theorem}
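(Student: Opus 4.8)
The plan is to reduce from $\UnaryBinPacking$ parameterized by the number of bins $k$, which is $\W[1]$-hard and $\NP$-complete. Given an instance with items $U$, sizes $s\colon U \to \N^+$, capacity $B$, and $k$ bins, write $S = \sum_{u \in U} s(u)$; since the encoding is unary, $S$ is polynomial in the input size. I would build a \emph{spider} (a star of paths): take a single depot $d$, and for each item $u$ attach a path $d = x^u_0, x^u_1, \dots, x^u_{s(u)}$ whose $s(u)$ non-depot vertices are all clients; every edge has weight $1$ and every client has demand $1$. The resulting graph is a tree with $\cardinality D = 1$. Keep the same number $k$ of vehicles, set the weight bound to $r = 2S$, and set the capacity parameter to $\ell = B$ for $\LoadCVRP$, to $g = 2B$ for $\GasCVRP$, and to both $\ell = B$ and $g = 2B$ for $\LoadGasCVRP$.

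For the forward direction, a bin packing $U_1, \dots, U_k$ translates into the routing in which vehicle $i$ walks from $d$ down and back along each path of $U_i$ in turn. Each such walk is closed, starts at the single depot, visits exactly the clients on the paths of $U_i$, has weight $2\sum_{u \in U_i} s(u) \le 2B$, and serves $\sum_{u \in U_i} s(u) \le B$ clients; the total weight is $2S = r$, and all clients are covered, so this is a feasible routing for whichever of the three problems we reduce to.

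The core of the reduction is the backward direction. First I would observe that any routing covering all clients has weight at least $2S$: removing any edge $e$ from the tree separates off a component containing a client, so $e$ is traversed an even, hence positive, number of times in total; summing over the $S$ unit-weight edges gives weight $\ge 2S$, with equality only if \emph{every} edge is traversed exactly twice. So a routing of weight $r = 2S$ uses every edge exactly twice. Now fix an item $u$: the deepest vertex $x^u_{s(u)}$ is a client, so some walk reaches it, and hence that walk crosses every edge of the path of $u$; since each of those edges is used only twice overall, this walk is the \emph{unique} walk that meets the path of $u$ at all, and it covers all of $x^u_1, \dots, x^u_{s(u)}$. Assigning $u$ to the vehicle of this walk yields a partition of $U$ into at most $k$ bins $U_1, \dots, U_k$; the walk of vehicle $i$ then covers exactly the $\sum_{u \in U_i} s(u)$ clients lying on the paths of $U_i$ and has weight exactly $2\sum_{u \in U_i} s(u)$. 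With unit demands, the load constraint forces $\sum_{u \in U_i} s(u) \le \ell = B$, and the gas constraint forces $2\sum_{u \in U_i} s(u) \le g = 2B$; either way, $U_1, \dots, U_k$ is a valid bin packing (padding with empty bins if the routing uses fewer than $k$ walks). The main obstacle is exactly this all-or-nothing step: choosing the weight budget tight enough that every item-path is served fully and by a single vehicle, which is what makes the (weighted) per-vehicle capacities mirror the bin capacity.

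Finally, the reduction runs in polynomial time (using that $S$ is polynomial by the unary encoding) and leaves the parameter $k$ unchanged, so the $\W[1]$-hardness of $\UnaryBinPacking$ parameterized by $k$ transfers to all three problems. Since the constructed instance is a tree (so $\tw = 1$) with a single depot, unit weights, and unit demands, this simultaneously yields $\W[1]$-hardness for the combined parameter $\tw + k + \cardinality D$; and since $\UnaryBinPacking$ is already $\NP$-complete, the same construction shows the three problems are $\NP$-hard on such instances, hence $\paraNP$-hard when parameterized by $\tw + \cardinality D$.
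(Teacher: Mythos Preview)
Your proposal is correct and matches the paper's proof essentially line for line: the same reduction from $\UnaryBinPacking$ to a spider with one depot, unit weights and demands, $r = 2\sum_u s(u) = 2\cardinality{E(G)}$, and capacities $\ell = B$, $g = 2B$. Your backward direction is in fact slightly more explicit than the paper's (you spell out why each edge is used exactly twice and hence each item-path is served by a unique vehicle), and you correctly derive the corollary for $\tw + k + \cardinality D$ and $\tw + \cardinality D$ as well.
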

\begin{proof}
    Similar to before, we give an extensive proof for $\LoadCVRP$ and briefly explain why it carries over to the other variants.
    Given a $\UnaryBinPacking$ instance $(U, s, k, B)$, we construct a tree $G$ with one central depot $d$ and one branch for every item $u \in U$.
    To build such a branch, we add $s(u)$ vertices that form a path $v_u^1 v_u^2 \dots v_u^{s(u)}$ and connect this path to the depot via an edge $\set{d, v_u^1}$.
    The set of clients is the set of all vertices on these paths, that is, $C = \set{v_u^i \with u \in U \land i \in [s(u)]}$.
    We then output the $\LoadCVRP$ instance
    \[
        (G, w_1, D = \set d, C, k, \Lambda_1, \ell = B, r = 2 \cardinality{E(G)}),
    \]
    where $w_1$ and $\Lambda_1$ are the unit weight and unit demand functions $w_1\colon e \mapsto 1$ and $\Lambda_1\colon c \mapsto 1$ respectively.
    \Cref{fig:cvrp-k-reduction-example} shows an example of the reduction.
    Note that even though we are creating $s(u)$ vertices for every $u \in U$, this reduction still runs in polynomial time, as we are reducing from $\UnaryBinPacking$.

    \begin{figure}[ht]
        \centering
        \includegraphics{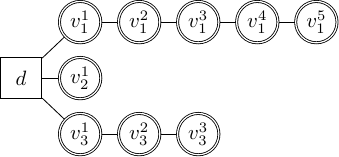}
        \caption{Example graph for $U = [3]$ with $s(1) = 5$, $s(2) = 1$, and $s(3) = 3$.}
        \label{fig:cvrp-k-reduction-example}
    \end{figure}

    If there is a valid partition of $U$ into $k$ sets $U_1, \dots, U_k$, which serves as a solution to the $\BinPacking$ instance, for every $i \in [k]$ we let a vehicle cover all clients $v_u^j$ for $u \in U_i$ and $j \in [s(u)]$.
    This can be achieved by having the vehicle traverse the tree from the depot $d$ to $v_u^{s(u)}$ and back for each $u \in U_i$.
    This way, every edge is walked exactly twice and a vehicle $i \in [k]$ covers $\sum_{u\in U_i} s(u)\leq B$ clients.

    Suppose in turn that there is a $\LoadCVRP$ routing $(R = \set{p_1, p_2, \dots, p_{k'}}, A)$ for the given instance.
    Note that as $G$ is a tree, $r = 2 \cardinality{E(G)}$ is just enough so that the vehicles can reach the vertices $\set{v_u^{s(u)} \with u \in U}$.
    Thus, the path of an item $u \in U$ is traversed by exactly one vehicle.
    Given that $\ell = B$, the partition given by $U_i = \set{u \in U \with v_u^1 \in V(p_i)}$ for all $i \in [k']$ abides the constraints of $\BinPacking$.

    The same reduction works without demands for $g = 2B$, as any vehicle can then again visit at most $B$ clients.
    For $\LoadGasCVRP$ the reduction works with $\ell = B$ and $g = 2B$.
\end{proof}

The above translates to the following result regarding treewidth:
\begin{corollary}
    $\LoadCVRP$, $\GasCVRP$, and $\LoadGasCVRP$ are $\paraNP$-hard when parameterized by $\tw + \cardinality D$ and $\W[1]$-hard when parameterized by $\tw + k + \cardinality D$, even with unit demands and unit weights, where $\tw$ denotes the treewidth of $G$.
\end{corollary}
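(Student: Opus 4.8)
The plan is to reuse the reduction from the proof of \Cref{thm:cvrp-k-w1-on-trees} and simply read off the treewidth and depot count of the instances it produces. Recall that, starting from a $\UnaryBinPacking$ instance $(U, s, k, B)$, that reduction builds a ``star of paths'': a single depot $d$ with one pendant path of length $s(u)$ attached for each item $u \in U$. Such a graph is a tree, hence $\tw(G) \le 1$, and by construction $\cardinality D = 1$. Moreover, the reduction already uses the unit weight function $w_1$ and the unit demand function $\Lambda_1$, so the claimed restrictions to unit weights and unit demands are met automatically, and the same observations apply to the $\GasCVRP$ and $\LoadGasCVRP$ variants treated there.

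For $\paraNP$-hardness under the parameter $\tw + \cardinality D$: on every instance output by the reduction we have $\tw + \cardinality D \le 2$, so the reduction witnesses $\NP$-hardness of $\LoadCVRP$ (resp.\ $\GasCVRP$, $\LoadGasCVRP$) restricted to instances whose parameter value is at most the constant $2$. Since $\UnaryBinPacking$ is $\NP$-hard and the reduction runs in polynomial time --- this is precisely why the construction starts from the \emph{unary} variant, so that producing $s(u)$ vertices per item stays polynomial --- this is exactly $\paraNP$-hardness.

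For $\W[1]$-hardness under the parameter $\tw + k + \cardinality D$: the very same map is a parameterized reduction from $\UnaryBinPacking$ parameterized by the number of bins $k$. It runs in polynomial (in particular $\FPT$) time, the number of vehicles equals the number of bins $k$, and the target parameter satisfies $\tw + k + \cardinality D \le k + 2$, which is bounded by a computable function of $k$. As $\UnaryBinPacking$ parameterized by $k$ is $\W[1]$-hard, $\W[1]$-hardness of the three routing problems under $\tw + k + \cardinality D$ follows, again with unit weights and unit demands retained.

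There is essentially no genuine obstacle here; the only points requiring care are (i) confirming that a star of paths has treewidth $1$, which is immediate since it is a tree, and (ii) reducing from $\UnaryBinPacking$ rather than binary $\BinPacking$, since only under the unary encoding does introducing $s(u)$ vertices per item keep the reduction polynomial-time (hence $\FPT$-time). The corollary is then just the combination of this structural observation on $G$ with the already-cited hardness of bin packing.
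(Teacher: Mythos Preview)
Your proposal is correct and matches the paper's intended argument: the corollary is stated immediately after \Cref{thm:cvrp-k-w1-on-trees} with only the remark ``The above translates to the following result regarding treewidth,'' and your write-up supplies exactly the omitted details (tree $\Rightarrow \tw \le 1$, single depot, and the parameter bookkeeping for the two hardness claims).
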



When mapping $\BinPacking$ instances to vehicle routing, the number of bins corresponds to the number of vehicles, and the bin capacity corresponds to the capacity of each vehicle.
As we see later in \Cref{thm:binpacking-B-fpt}, $\BinPacking$ parameterized by $B$ is in $\FPT$.
Thus, an $\FPT$-reduction from $\BinPacking$ mapping a value from $B$ to some vehicle capacity does not yield any hardness result.
$\CVRP$ instances parameterized by the vehicle capacities could potentially be reduced to $\BinPacking$ parameterized by $B$, but such a reduction is unlikely to be applicable to the general case, as vehicle routing problems not only require a numerical partitioning as with $\BinPacking$, but also a structural one found in $\TrianglePacking$.
Still, we return to this intuition in \Cref{sec:cvrp-xp}, where we develop an $\XP$ algorithm for $\LoadGasCVRP$.

\subsection{Hardness from Triangle Packing}
\label{sec:triangle-packing}

In this section, we show that $\TrianglePacking$ is contained within the three $\CVRP$ variants we consider.

\begin{definition}[$\TrianglePacking$]
    A $\TrianglePacking$ instance consists of an undirected graph $G$ such that $\cardinality{V(G)} = 3q$ vertices, for some integer $q \in \N$.
    It asks whether there exists a partition $V_1, V_2, \dots, V_q$ of $V(G)$ so that for each $i \in [q]$, $G[V_i] \simeq K_3$.
\end{definition}
\begin{theorem}[\cite{garey1979computers}]
    $\TrianglePacking$ is $\NP$-complete.
\end{theorem}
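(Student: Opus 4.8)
The plan is to recall the standard argument for this classical fact. Membership in $\NP$ is immediate: a partition $V_1,\dots,V_q$ serves as a certificate of linear size, and one verifies in polynomial time that the $V_i$ are pairwise disjoint, cover $V(G)$, and each induces a copy of $K_3$. The substance is the $\NP$-hardness, for which I would give a polynomial-time reduction from \textsc{Exact Cover by 3-Sets} (or, for our purposes equivalently, from \textsc{3-Dimensional Matching}), known to be $\NP$-complete~\cite{garey1979computers}. Recall that such an instance is a ground set $U$ with $\cardinality U = 3q$ together with a family $\mathcal S$ of $3$-element subsets of $U$, and it asks whether some subfamily $\mathcal S' \subseteq \mathcal S$ partitions $U$ (necessarily with $\cardinality{\mathcal S'} = q$).

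The construction would create a graph $G$ with one \emph{element vertex} for each $u \in U$ and, for every set $S = \set{a,b,c} \in \mathcal S$, a \emph{selector gadget} $H_S$ built on a constant number of fresh private vertices and attached to the rest of $G$ only through $a$, $b$, and $c$. The gadget is to be designed so that in any partition of $H_S$ into triangles — allowing $a,b,c$ to be covered either by triangles inside $H_S$ or by triangles elsewhere in $G$ — exactly two behaviours are possible: either all three element vertices of $S$ are consumed inside $H_S$ (``$S$ is selected''), or none of them is (``$S$ is unselected''); consuming a proper nonempty subset of $\set{a,b,c}$ must be impossible. Since each gadget has constant size and there are $\cardinality U + \cardinality{\mathcal S}$-many objects, this is polynomial time; finally one pads $G$ with a few vertex-disjoint triangles so that $\cardinality{V(G)}$ is divisible by $3$.

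Correctness then follows almost directly: in a triangle partition of $G$ every element vertex is covered exactly once, and since an element vertex $u$ is only adjacent to vertices of gadgets $H_S$ with $u \in S$, the selected gadgets must correspond precisely to an exact cover of $U$; conversely, an exact cover dictates which of its two behaviours each gadget adopts, and the remaining internal vertices of every gadget together with the padding triangles are absorbed by the gadgets' own internal partitions. The main obstacle — and the only place where genuine work is needed — is constructing $H_S$ and proving it sufficiently ``rigid'': one must guarantee (i) an internal triangle partition avoiding $\set{a,b,c}$, (ii) an internal triangle partition consuming all of $\set{a,b,c}$, (iii) that no triangle partition uses only one or two of $a,b,c$, and (iv) that gluing many gadgets along shared element vertices creates no unintended triangle spanning two gadgets. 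Points (iii) and (iv) are the delicate ones and are handled in the standard way by never joining $a$, $b$, $c$ to one another directly but routing them through private gadget vertices, and by making the gadget's internal triangulation forced once the fate of $\set{a,b,c}$ is fixed.
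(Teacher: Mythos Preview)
The paper does not prove this theorem at all; it is quoted as a classical result with a citation to Garey and Johnson~\cite{garey1979computers} and no argument is given. So there is no ``paper's own proof'' to compare against, and your outline is simply a sketch of the textbook reduction.

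Your plan is the right one and matches the classical argument: membership in $\NP$ is trivial, and hardness goes via a gadget reduction from \textsc{3-Dimensional Matching}/\textsc{X3C}. Two remarks. First, the padding step as you describe it is vacuous: adding vertex-disjoint triangles increases $\cardinality{V(G)}$ by a multiple of~$3$, so it cannot fix divisibility. In the standard construction the gadget is arranged so that its number of private vertices is already a multiple of~$3$, and since $\cardinality U = 3q$ no padding is needed. Second, you correctly flag that the entire content of the proof lives in the gadget $H_S$ and its rigidity properties (i)--(iv), but you then leave the gadget unspecified. For a sketch this is acceptable, but be aware that you have not actually proved anything yet: exhibiting a concrete $H_S$ with these properties (and verifying (iii) and (iv)) is the whole proof, and ``handled in the standard way'' is a pointer to the literature rather than an argument.
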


Observe that a partition into triangle graphs is equivalent to a fleet of vehicles that each drive a small tour of just three vertices.
We use this intuition in the following reduction.

\begin{theorem}
    \label{thm:cvrp-cap-paraNP}
    All of the following problems are $\paraNP$-hard, even with unit demands and weights:
    \begin{itemize}
        \item $\LoadCVRP$ parameterized by $\ell$,
        \item $\GasCVRP$ parameterized by $g$, and
        \item $\LoadGasCVRP$ parameterized by $\ell + g$.
    \end{itemize}
\end{theorem}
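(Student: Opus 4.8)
The plan is to reduce from $\TrianglePacking$, building on the intuition already flagged in the text: a packing of $V(G)$ into triangles corresponds exactly to a fleet of vehicles, each driving a closed walk of length three (a triangle). Given a $\TrianglePacking$ instance $G$ with $\cardinality{V(G)} = 3q$, I would construct a $\LoadCVRP$ instance on a graph $G'$ obtained from $G$ by adding a single depot vertex $d$ connected to every vertex of $G$, setting $C = V(G)$, $D = \set d$, $k = q$, unit demands and unit weights, load capacity $\ell = 3$, and budget $r = 6q$. The idea is that each of the $q$ vehicles must start and end at $d$, may serve at most three clients, and since all $3q$ clients must be covered by $q$ vehicles each serving at most $3$, every vehicle serves exactly three clients. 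A vehicle serving three clients $\set{a,b,c}$ and returning to $d$ must traverse at least $\set{d,a}$ and $\set{c,d}$ (cost $2$) plus a walk through $a,b,c$ of cost at least $3$ — but to get cost exactly $3$ on the $a,b,c$ portion with a closed sub-route back through $d$, the cheapest is $d\,a\,b\,c\,d$ using edges $\set{a,b},\set{b,c}$ of $G$ plus the two depot edges, total $4$... so I need to recount: actually $w(d\,a\,b\,c\,d) = w(d,a)+w(a,b)+w(b,c)+w(c,d) = 4$ when $\set{a,b},\set{b,c}\in E(G)$, giving total budget $r = 4q$; but this only forces a path $a\,b\,c$ in $G$, not a triangle. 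To force a genuine $K_3$ I would instead make the depot edges expensive or, more cleanly, require the route to be a closed walk that returns via $G$-edges: set $r = 3q$ is impossible since two depot edges already cost $2$ per vehicle. The cleaner fix: give depot edges weight $1$ and $G$-edges weight $1$, budget $r = 5q$, forcing each vehicle to use the route $d\,a\,b\,c\,a\,d$? No. The genuinely correct approach is to let each vehicle drive $d \to a \to b \to c \to d$ where $\set{d,a},\set{d,c}$ are depot edges and $\set{a,b},\set{b,c},\set{c,a}$... to close the triangle we want the walk $d\,a\,b\,c\,a\,d$ is length $5$; or simply let the vehicle route be $d\,a\,b\,c\,a\,d$ — this revisits $a$. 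Cleanest: demand the route visits $a,b,c$ and the budget forces edges $\set{a,b},\set{b,c},\set{c,a}$ all used, i.e. route $d\,a\,b\,c\,a\,d$ has weight $6$ using edge $\set{c,a}$ once and $\set{a,b}$ once and... this is getting delicate.

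Let me restate the plan at the right level of abstraction: construct $G'$ from $G$ by adding a depot $d$ adjacent to all of $V(G)$; put weight $1$ on all edges; set $C = V(G)$, $D=\set d$, $k = q$, $\ell = 3$, and choose the budget $r$ as the exact value forced when every vehicle traverses a closed walk $d, a, b, c, a, d$ using one depot edge $\set{d,a}$ twice and the three triangle edges $\set{a,b}, \set{b,c}, \set{c,a}$ — wait, $\set{a,b}$ and $\set{c,a}$ each once, $\set{b,c}$ once, giving $G$-cost $3$ and depot-cost $2$, total $5$ per vehicle, so $r = 5q$. Actually the walk $d\,a\,b\,c\,a\,d$ uses $\set{d,a}$ twice (cost $2$), $\set{a,b}$ once, $\set{b,c}$ once, $\set{c,a}$ once (cost $3$), total $5$. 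For this to be feasible we need $\set{a,b},\set{b,c},\set{c,a}\in E(G)$, i.e. $\set{a,b,c}$ spans a triangle. Conversely, any closed walk from $d$ covering exactly three clients $\set{a,b,c}$ with total weight at most $5$ must enter and leave $G$ through depot edges: if it uses two distinct depot edges it has depot-cost $2$ and only $3$ left for a walk through $a,b,c$ inside $G$, which forces a path $a\,b\,c$ (two edges) — insufficient to also return to the entry point inside $G$, contradiction unless it re-uses a depot edge; if it uses one depot edge twice, depot-cost is $2$ and the $G$-portion is a closed walk from some vertex $a$ through $b,c$ of weight $\le 3$, and a closed walk visiting $3$ distinct vertices needs $\ge 3$ edges, hence exactly $3$ edges forming the triangle $\set{a,b},\set{b,c},\set{c,a}$. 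So $r = 5q$ works, and I should verify the edge cases (a vehicle serving $<3$ clients is wasteful given the counting $3q$ clients, $q$ vehicles, cap $3$ forces exactly $3$ each).

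The remaining steps: (i) state the reduction precisely and note it is polynomial; (ii) forward direction — a triangle partition $V_1,\dots,V_q$ yields vehicle $i$ driving $d, a_i, b_i, c_i, a_i, d$ of weight $5$, all constraints met, total $5q = r$; (iii) backward direction — from a routing of weight $\le 5q$ with $\le q$ closed walks covering all $3q$ clients with $\ell = 3$, argue by the double counting that there are exactly $q$ walks each covering exactly $3$ clients, then apply the weight argument above to conclude each walk's three clients form a triangle in $G$, and that the clients are partitioned since each is covered and there are exactly $3q$ client-slots; (iv) adapt to $\GasCVRP$ by replacing $\ell = 3$ with $g = 5$ (the per-route weight bound), which independently forces at most three clients per route via the weight argument, and since total weight $\le 5q$ with $q$ routes each of weight $\le 5$... actually need per-route weight exactly forcing $\le 3$ clients: a route of weight $\le 5$ through $\ge 4$ clients plus $d$ needs $\ge 5$ edges, possible in weight $5$, so I may need a slightly larger/different gadget for the gas version — a safer choice is to use vertex-weight-style padding or to note a closed walk through $4$ distinct clients and $d$ with all weights $1$ needs $\ge 5$ edges and indeed could have weight exactly $5$, so the gas bound alone does not force $3$; hence for $\GasCVRP$ I would instead subdivide or reweight so that reaching $\ge 4$ clients costs $> g$. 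The main obstacle, therefore, is calibrating weights and the gas bound $g$ (and the combined $\ell+g$ case) so that the per-route constraint alone rules out serving four or more clients while still allowing the triangle route; for $\LoadCVRP$ the load cap $\ell = 3$ handles this cleanly, so the delicate part is purely the $\GasCVRP$ and $\LoadGasCVRP$ variants, which I expect to handle by a local modification of the gadget (e.g. giving each depot edge a larger weight, or subdividing triangle edges) while keeping the same counting argument.
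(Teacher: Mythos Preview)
Your backward direction for $\LoadCVRP$ contains a genuine error. You claim that a closed walk from $d$ covering three clients $\set{a,b,c}$ with weight at most $5$ cannot use two distinct depot edges, because then ``only $3$ [is] left for a walk through $a,b,c$ inside $G$, which forces a path $a\,b\,c$ (two edges) --- insufficient to also return to the entry point inside $G$.'' But there is no need to return to the entry point \emph{inside} $G$: the walk $d,a,b,c,d$ uses depot edges $\set{d,a}$ and $\set{d,c}$ and $G$-edges $\set{a,b},\set{b,c}$, has weight $4 \le 5$, is closed, and visits all three clients --- yet only requires $\set{a,b,c}$ to span a $P_3$ in $G$, not a triangle. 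Consequently your construction with $r = 5q$ (and even with $r = 4q$) accepts whenever $G$ admits a perfect $P_3$-packing, so the claimed equivalence with $\TrianglePacking$ fails. You also correctly sense that the $\GasCVRP$ case is delicate under your gadget; the same $P_3$ issue is the root cause there.

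The paper avoids all of this by a much simpler choice: it does \emph{not} add an extra depot, but sets $D = C = V(G)$, so every vertex is both a client and a depot. With unit weights, $k = q$, $\ell = 3$, and budget $r = 3q$, a vehicle starts at any vertex $a$ and drives the triangle $a,b,c,a$ of weight exactly $3$. In the backward direction, the counting argument forces each of the $q$ vehicles to cover exactly $3$ clients disjointly, and a closed walk of weight $3$ visiting $3$ distinct vertices is precisely a triangle. The $\GasCVRP$ and $\LoadGasCVRP$ cases then follow immediately with $g = 3$, since a closed walk of weight at most $3$ can visit at most $3$ vertices. The key idea you are missing is that allowing all vertices to be depots removes the extraneous depot-edge accounting entirely and makes ``closed walk of length $3$ through three vertices'' synonymous with ``triangle.''
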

\begin{proof}
    First, we show that $\LoadCVRP$ is $\NP$-hard to decide, even when $\ell = 3$.
    Given a $\TrianglePacking$ instance $G$ with $\cardinality{V(G)} = 3q$, we output the $\LoadCVRP$ instance
    \[
        (G, w_1, D = V(G), C = V(G), k = q, \Lambda_1, \ell = 3, r = 3q),
    \]
    where $w_1$ and $\Lambda_1$ are the unit weight and unit demand functions $w_1\colon e \mapsto 1$ and $\Lambda_1\colon c \mapsto 1$ respectively.

    It is easy to check that a valid triangle packing represents a valid routing, where each triangle is traversed by one vehicle starting at an arbitrary vertex of the triangle.

    Looking at a specific vehicle in any valid routing it can never cover more than $3$ clients, due to the limited load.
    We can also see that it must always cover at least $3$ clients, since otherwise there would be no way to cover $3q$ clients with $q$ vehicles.
    The same argument guarantees that the walks of all vehicles need to be disjoint.
    Therefore the walks of the vehicles form disjoint triangles.
    Since $C = V(G)$, these triangles cover all vertices and form a valid triangle packing of $G$.

    For $\GasCVRP$ and $\LoadGasCVRP$, it is easy to check that the same reduction works with the additional parameter $g = 3$.
    In particular, even without load constraints, $g = 3$ enforces that each vehicle's walk follows a triangle.
\end{proof}

\subsection{Hardness from Numerical 3D Matching}
\label{sec:NTDM}
In this section we use a reduction from $\NTDM$ to show that $\LoadGasCVRP$ is strongly \NP-hard even on trees with constant number of depots, load capacity and unit demands. The reduction can also be adapted to show hardness on trees with constant number of depots, a constant gas constraint and unit edge weights. 

\begin{definition}[\NTDM]
    A \emph{Numerical 3-Dimensional Matching} instance consists of three disjoint sets $X,Y,Z \subseteq \N$, each containing $m$ elements, and a bound $b \in \N$ with $\sum_{a\in X\cup Y \cup Z}a = m\cdot b$. It asks wether $X\cup Y\cup Z$ can be partitioned into $m$ disjoints sets $A_1,\dots, A_m$, such that each $A_i$ contains exactly one element from each of $X,Y,Z$ and $\sum_{a\in A_i}a = b$.    
\end{definition}
\begin{theorem}[\cite{garey1979computers}]
    $\NTDM$ is strongly \NP-complete. 
\end{theorem}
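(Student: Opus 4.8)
The statement is a classical fact — it is one of the numerical problems catalogued by Garey and Johnson, imported here purely as the source of the reduction carried out in the next theorem — so inside the paper itself the right move is simply to cite it. If I had to prove it, here is the argument I would give. Membership in \NP\ is immediate: a solution is a list of $m$ triples, of size polynomial in the instance, and one checks in polynomial time that the triples partition $X \cup Y \cup Z$, that each contains exactly one element of each of $X$, $Y$, $Z$, and that each sums to $b$. The real content is \NP-hardness \emph{in the strong sense}, i.e.\ hardness surviving even when all integers are bounded by a polynomial in the number of elements (equivalently, encoded in unary); I would obtain this via a polynomial-time reduction from a problem already known to be strongly \NP-complete.

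The slickest source is \emph{Numerical Matching with Target Sums} (also a strongly \NP-complete problem of Garey and Johnson): given disjoint sets $A$, $B$ with $\cardinality{A} = \cardinality{B} = m$ and distinct targets $s_1, \dots, s_m$ with $\sum_{\alpha \in A}\alpha + \sum_{\beta \in B}\beta = \sum_{i} s_i$, decide whether $A \cup B$ splits into $m$ pairs whose $i$-th pair sums to $s_i$. From such an instance I would build an $\NTDM$ instance with $X := A$, $Y := \set{\beta + W \with \beta \in B}$, $Z := \set{(K - s_i) + 2W \with i \in [m]}$, where $K := 1 + \max_i s_i$, $W := 1 + \max\bigl(\max A,\ \max B,\ K - \min_i s_i\bigr)$, and bound $b := K + 3W$. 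The shifts $0$, $W$, $2W$ place $X$, $Y$, $Z$ into disjoint integer intervals, so the three sets are disjoint; a one-line computation gives $\sum_{v \in X \cup Y \cup Z} v = m(K + 3W) = mb$; and for $x \in X$, $y \in Y$, $z \in Z$ arising from $\alpha \in A$, $\beta \in B$, $s_i$, one gets $x + y + z = K + 3W$ precisely when $\alpha + \beta = s_i$. Hence $\NTDM$ solutions are in bijection with target-sum matchings, and since $K$ and $W$ are polynomial in the input numbers the reduction preserves \emph{strong} \NP-completeness.

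If one prefers to start from a more famous problem, the same conclusion is reachable from \textsc{3-Partition} (strongly \NP-complete): from $3m$ items $w_1, \dots, w_{3m}$ with bound $b$ one forms three ``tagged copies'' of each item, one per role, each being a large role-specific offset $M_X$, $M_Y$, $M_Z$ plus a low-order value block and a small index block, with bound $M_X + M_Y + M_Z + b$; the offsets force one copy of each role in every triple, the value block makes a triple feasible exactly when the three underlying values sum to $b$, and arranging that the three copies of a single item themselves hit the bound lets the $2m$ leftover copies per role be absorbed into inert ``dummy'' triples. I expect the only genuine obstacle in either route to be the bookkeeping that keeps every integer polynomially bounded — this is what makes the result \emph{strongly} \NP-complete rather than merely \NP-complete, and it is why the encodings rely on separated digit-blocks and modest shifts rather than arbitrarily large gadget weights. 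Since the present paper uses the statement only as an external building block for \Cref{thm:loadgascvrp_np_const_demand}, citing Garey and Johnson is exactly the right move.
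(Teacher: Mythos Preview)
The paper gives no proof of this theorem at all; it is stated with the citation to Garey and Johnson and used immediately as the source for the reduction in \Cref{thm:loadgascvrp_np_const_demand}. You correctly identify this, and your decision to treat it as an imported fact matches the paper exactly.

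Your supplementary sketch goes well beyond what the paper does, and it is essentially sound. The reduction from Numerical Matching with Target Sums is clean and the shift argument for disjointness and the arithmetic for $x+y+z=b \iff \alpha+\beta=s_i$ both check out. One small caveat: you quietly assume the targets $s_1,\dots,s_m$ are distinct so that $Z$ is a genuine set of $m$ elements, as required by the paper's definition of $\NTDM$; this is harmless (one can enforce distinctness by a standard scaling trick without losing strong \NP-hardness), but it is worth flagging explicitly if you ever write this out in full. The alternative \textsc{3-Partition} route you outline is also standard, though the bookkeeping you allude to is where all the work lies. None of this is needed for the paper's purposes, where the citation suffices.
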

\begin{theorem}\label{thm:loadgascvrp_np_const_demand}
    $\LoadGasCVRP$ is strongly \NP-hard, even on stars with one depot, constant load capacities and unit demands. 
\end{theorem}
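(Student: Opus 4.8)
The plan is to reduce from $\NTDM$, constructing a $\LoadGasCVRP$ instance on a star. Given an $\NTDM$ instance $(X, Y, Z, b)$ with $|X| = |Y| = |Z| = m$ and $\sum a = mb$, I would build a star $G$ with a single center vertex $d$, which will be the sole depot, and one leaf vertex $v_a$ for each element $a \in X \cup Y \cup Z$. Each leaf $v_a$ is connected to $d$ by a single edge. To encode the numerical values, I would set the edge weight $w(\{d, v_a\}) = a$ (or, to avoid zero-weight edges if $0 \in X \cup Y \cup Z$, shift everything by a suitable constant and compensate in the gas bound). Every leaf is a client with unit demand, so $C = \{v_a : a \in X \cup Y \cup Z\}$ and $\Lambda \equiv 1$. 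I would set the number of vehicles $k = m$, the load capacity $\ell = 3$, the gas bound $g = 2b$, and the total weight bound $r = 2mb$.

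The key steps are then the two directions of correctness. For the forward direction, given a valid $\NTDM$ partition $A_1, \dots, A_m$, each triple $A_i = \{x_i, y_i, z_i\}$ with $x_i + y_i + z_i = b$ is served by one vehicle that starts at $d$, visits $v_{x_i}$, $v_{y_i}$, $v_{z_i}$ (each via an out-and-back traversal of its incident edge, since $G$ is a star), and returns to $d$. Such a walk has weight $2(x_i + y_i + z_i) = 2b = g$, serves exactly $3 = \ell$ clients, and the total over all $m$ vehicles is $2mb = r$. For the reverse direction, suppose a feasible routing exists. Since $G$ is a star, any closed walk from $d$ covering a set $S$ of leaves has weight exactly $2\sum_{v_a \in S} a$; the load constraint forces $|S| \le 3$, and the gas constraint forces $\sum_{v_a \in S} a \le b$. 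There are $3m$ clients, $k = m$ vehicles, and each covers at most $3$, so every vehicle covers exactly $3$ clients and the covered sets partition $C$. Summing the gas constraints, $\sum_i \sum_{v_a \in S_i} a = mb$ with each term $\le b$, forcing each to equal exactly $b$. It remains to argue that each $S_i$ contains exactly one element from each of $X$, $Y$, $Z$ — this does \emph{not} follow automatically from $x + y + z = b$ alone, which is the main obstacle.

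The standard fix, which I expect to be the crux of the argument, is to use the classic encoding trick from the strong-NP-hardness proof of $\NTDM$: multiply every value by a large factor (say $4m$) and add a distinct offset depending on which of $X$, $Y$, $Z$ the element belongs to — concretely, replace each $x \in X$ by $16mx + 1$, each $y \in Y$ by $16my + m$ (or some such), each $z \in Z$ by $16mz + 2m$, and adjust $b$ accordingly to $b' = 16mb + \text{(appropriate constant)}$. With offsets chosen so that any multiset of three shifted values summing to $b'$ must use exactly one offset of each type (because the residues/low-order contributions only add up correctly in that combination, given that at most three are summed), the partition into triples of one element per set is enforced. I would set the star edge weights to these shifted values. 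One then checks the feasibility bounds $r$ and $g$ against the shifted $b'$. Since $\NTDM$ is strongly \NP-complete, all these values remain polynomially bounded, so the instance has unit demands, a single depot, constant load capacity $\ell = 3$, and is a star — giving strong \NP-hardness as claimed.

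For the parenthetical remark in the section preamble about the gas-constrained variant: instead of putting the numerical value on the edge weight, one would subdivide each edge $\{d, v_a\}$ into a path of $a$ unit-weight edges (again with the offset trick applied so all lengths are positive and the matching structure is forced), keep unit edge weights, make only the endpoint $v_a$ a client, and use the gas bound to play the role previously played by $\ell$; this yields \Cref{cor:loadgascvrp_np_const_gas}. I would present the $\LoadGasCVRP$ reduction in full and note the adaptation briefly.
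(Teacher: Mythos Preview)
Your main reduction is correct and matches the paper's approach almost exactly: reduce from $\NTDM$, build a star with the depot at the center and one leaf per element, set $\ell=3$, unit demands, $k=m$, and encode the element values (plus a type tag) in the edge weights so that the gas and total-weight budgets are tight. The paper uses constant offsets $1,4,16$ with multiplier $2^6$ (so the ``type'' lives in the low six bits and, since at most three offsets are summed, the only way to hit $1+4+16=21$ modulo $2^7$ is one of each), whereas you propose $m$-dependent offsets with multiplier $16m$; both work, but the paper's choice keeps the encoding independent of $m$ and makes the no-carry argument immediate. One small imprecision: a closed walk from $d$ covering a leaf set $S$ has weight \emph{at least} $2\sum_{v_a\in S} w(d,v_a)$, not exactly; equality is then forced by the tight global budget $r$, which you should state explicitly.

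Your sketch for the corollary is off, however. Subdividing each edge $\{d,v_a\}$ into a path of (shifted) $a$ unit-weight edges does not yield a \emph{constant} gas bound: the gas needed to reach a leaf is still $2a'$, so $g$ scales with $b'$. The paper's adaptation goes the other way: keep the star with unit edge weights (so $g=6$ is constant, as each route traverses exactly three edges out and back) and move the numerical encoding into the \emph{demands} $\Lambda(v_a)$, with $\ell$ set to the adjusted target. That is the swap you allude to (``use the gas bound to play the role previously played by $\ell$''), but your construction does not realize it.
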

\begin{proof}
    Given a $\NTDM$ instance $(X,Y,Z,b)$ we build a star $G$ with the sole depot $d$ as the center. For every $x \in X$ we create a client $v_x$ connected to $d$ by an edge with weight $2^6 \cdot x +2^0\cdot 1 $. Analogously, we create a client $v_y$ and $v_z$ for every $y \in Y$ and $z \in Z$, connected to $d$ by edges of weight $2^6 \cdot y +2^2\cdot 1$ and $2^6 \cdot z +2^4\cdot 1$, respectively. With this, we encode the origin set of the clients in the last six bits of the edge weight. We set $g = 2(2^6 \cdot b +21)$, $\ell = 3$ and define $\Lambda\colon c \mapsto 1$. To conclude the construction of the $\LoadGasCVRP$ instance, we choose $k = m$ and $r =k\cdot g$.   
    An example of this reduction is depicted in \Cref{fig:loadgascvrp-ntdm-example}.

    \begin{figure}[ht]
        \centering
        \includegraphics{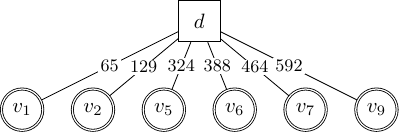}
        \caption{Example for $X = \set{1, 2}$, $Y = \set{5, 6}$, $Z = \set{7, 9}$, and $b = 15$, resulting in $g = 1962$.}
        \label{fig:loadgascvrp-ntdm-example}
    \end{figure}

    When presented with a valid solution $A_1,\dots , A_k$ to the $\NTDM$ instance, for every $i \in [m]$ we let the $i^\text{th}$ vehicle cover the clients $\set{v_x,v_y,v_z \with \set{x,y,z}= A_i}$ in an arbitrary order, driving back to $d$ between visits. Every vehicle visits exactly three clients, and in doing so, accumulates a total edge weight of $2(2^6 \cdot b +21)$. Since $A_1,\dots, A_m$ partitions $X\cup Y\cup Z$, all clients are visited by one vehicle. The total weight of the constructed solution is $2m(2^6 \cdot b +21)$. 

    Let $(R,B)$ be a routing for the constructed $\LoadGasCVRP$ instance. Since $k = m$, $\ell=3$ and $\abs{C}= 3m$, $R$ needs to contain $k$ paths $p_1,\dots, p_k$ and for any $i \in [k]$ we have $ \abs{B^{-1}(i)}=3$.
    Due to the star structure of $G$, every edge has to be traversed twice to visit all clients. Since $\sum_{e\in E(G)}2w(e) = 2m(2^6 \cdot b +21)=m \cdot g$, for all $i \in [k]$ we have $w(p_i)=g$ and every edge is traversed by exactly one vehicle. The three clients visited by a vehicle need to have distinct types, so that $w(p_i)\equiv 2\cdot 21 \mod 2^7$ holds. Thus, for any $i \in [k]$ we can define $A_i=\set{\set{x,y,z}\with v_x, v_y, v_z \in B^{-1}(i)}$. We know $\sum_{a \in A_i} a = b$ and that $A_1,\dots, A_k$ partitions $X\cup Y \cup Z$, proving it a valid solution to the $\NTDM$ instance. 
\end{proof}
To show the hardness of $\LoadGasCVRP$ on stars with one depot and a constant gas constraint, the construction used in the proof of \Cref{thm:loadgascvrp_np_const_demand} can be modified.
We use client demands instead of edge weights to encode the numbers to be matched and their type in a given $\NTDM$ instance, changing the load and gas constraints accordingly. 
\begin{corollary}\label{cor:loadgascvrp_np_const_gas}
    $\LoadGasCVRP$ is strongly \NP-hard, even on stars with one depot and a constant gas constraint.
\end{corollary}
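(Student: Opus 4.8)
The plan is to adapt the reduction of \Cref{thm:loadgascvrp_np_const_demand}, swapping the roles of edge weights and client demands. Given an $\NTDM$ instance $(X, Y, Z, b)$ with $\abs X = \abs Y = \abs Z = m$, I would again build a star $G$ centered at the single depot $d$, but now with \emph{unit} edge weights, and with one leaf client per element, encoding the origin set in three low-order bits of the \emph{demand}: $\Lambda(v_x) = 2^6 x + 2^0$ for $x \in X$, $\Lambda(v_y) = 2^6 y + 2^2$ for $y \in Y$, and $\Lambda(v_z) = 2^6 z + 2^4$ for $z \in Z$. I set $D = \set d$, $C = V(G) \setminus \set d$, $k = m$, the now-constant gas bound $g = 6$, the load bound $\ell = 2^6 b + 21$, and $r = kg$. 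Since $\NTDM$ is strongly $\NP$-hard, its numbers may be taken polynomially bounded, and then every number in the produced instance is polynomially bounded ($\ell$ is polynomial, $g$ is a fixed constant), so the reduction is polynomial and yields \emph{strong} $\NP$-hardness on a star with one depot and constant gas constraint.

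For the forward direction, from a matching $A_1, \dots, A_m$ I would let vehicle $i$ serve the three clients corresponding to the elements of $A_i = \set{x, y, z}$ by driving out and back between $d$ and each of them. This route has weight $6 = g$ and load $2^6(x + y + z) + 21 = 2^6 b + 21 = \ell$, and collectively the routes cover all clients.

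For the backward direction, I would first exploit the star structure together with the gas bound: any closed walk from $d$ of weight at most $6$ is a concatenation of at most three depot–leaf–depot excursions, hence visits at most three distinct leaves. Since the $k = m$ routes must jointly serve all $3m$ clients, and each client is assigned to exactly one route that must visit it, the counting $3m = \sum_i \abs{A^{-1}(i)} \le 3k = 3m$ forces every route to be assigned exactly three clients and to visit exactly those three. The load constraint then gives $\sum_{c \in A^{-1}(i)} \Lambda(c) = \ell$ for all $i$, because the total demand equals $2^6 \sum_a a + 21 m = m(2^6 b + 21) = m \ell$. Taking this equality modulo $2^6$: with exactly three clients the low bits contribute $\alpha \cdot 2^0 + \beta \cdot 2^2 + \gamma \cdot 2^4$ with $\alpha + \beta + \gamma = 3$ and each term at most $3$, so $3 + 12 + 48 = 63 < 64$ rules out any carry into bit $6$, and the unique way to reach $21 = 2^0 + 2^2 + 2^4$ is $\alpha = \beta = \gamma = 1$; hence each route takes exactly one client of each type. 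The high-order part of the same equality then yields $x + y + z = b$ for the route serving $v_x, v_y, v_z$, so the routes induce a valid $\NTDM$ matching.

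The point requiring care is that, unlike in \Cref{thm:loadgascvrp_np_const_demand}, the load is \emph{not} doubled by the out-and-back traversal the way edge weight is, so the constants must be recomputed ($g$ becomes the constant $6$ and $\ell$ becomes $2^6 b + 21$, not twice that). Consequently the step that previously followed from the tight global weight budget — ``every edge is used by exactly one vehicle'' — is no longer available and must instead be re-derived purely from the gas bound and the identity $\abs C = 3m = 3k$; this is the only genuinely new piece of the argument. Once one verifies that this counting pins each route to exactly three visited clients, the bit-packing argument carries over verbatim, since the chosen bit positions $0, 2, 4$ leave enough headroom to avoid carries for three summands.
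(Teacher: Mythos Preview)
Your proposal is correct and follows exactly the approach the paper sketches: swap the roles of edge weights and client demands in the $\NTDM$ reduction of \Cref{thm:loadgascvrp_np_const_demand}, so that the gas bound becomes the constant $g=6$ and the numerical information is carried by $\Lambda$ and $\ell$. You also correctly identify the one place where the backward argument must be adapted---the tight edge-weight budget is replaced by the counting argument $\sum_i \abs{A^{-1}(i)} = 3m \le 3k$ forced by $g=6$ on a unit-weight star---and the bit-packing analysis goes through unchanged.
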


\subsection{Tractability for Constant Treewidth and Vehicle Capacities}
\label{sec:cvrp-xp}

In this section we investigate $\LoadGasCVRP$ parameterized by $\tw + \ell + g$.
Note that the observation from \Cref{thm:evrp-remove-edge-twice} does not hold once vehicles need to abide capacities, as witnessed in \Cref{fig:cvrp-multi-edges-example}.

\begin{figure}[ht]
    \centering
    \includegraphics{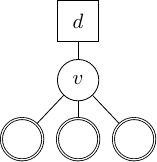}
    \caption{A $\LoadCVRP$ instance with $\ell = 1$ where each routing has to use the edge $\set{d, v}$ more than twice.}
    \label{fig:cvrp-multi-edges-example}
\end{figure}

Thus, the approach of \Cref{def:vrp-dp} cannot directly be applied to $\LoadGasCVRP$.
Instead, we track the number of walks with a given set of characteristics separately, leading to an $\XP$ running time.
Note however, that a similar result to \Cref{thm:evrp-remove-edge-twice} holds for each vehicle individually:

\begin{lemma}
    \label{thm:loadgascvrp-remove-edge-twice}
    Let $(R, A)$ be a $\LoadGasCVRP$ routing, let $p \in R$ and let $H \subseteq G$ be the undirected multigraph representing $p$.
    If there is an edge $e \in E(H)$ with multiplicity greater than two, then the walk $p'$ obtained from $H - 2e$ can be used instead of $p$ in $R$ to form a $\LoadGasCVRP$ routing $R'$ with $w(R') \le w(R)$.
\end{lemma}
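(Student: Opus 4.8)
The plan is to mirror the argument for \Cref{thm:evrp-remove-edge-twice}, but now track the effect on a single walk and verify that none of the three constraints ($\Gamma_{\text{closed}}$, $\Gamma_{\ell,\Lambda}$, $\Gamma_g$) is violated when we delete two copies of $e$ from that walk. First I would set up notation: let $p\in R$ be the walk in question, $i=A^{-1}$-index of $p$ wait---more carefully, let $i\in[\cardinality R]$ be such that $p=p_i$, let $H\subseteq G$ be the Eulerian multigraph consisting of the edges traversed by $p$ (with multiplicity), and let $e\in E(H)$ have multiplicity $m_e\ge 3$ in $H$. Since $H$ is Eulerian (every vertex has even degree, as $p$ is a closed walk), removing two copies of $e$ keeps all degrees even, so $H-2e$ is again Eulerian. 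The key combinatorial point, exactly as in \Cref{thm:evrp-remove-edge-twice}, is that $H-2e$ remains connected: the two deleted copies of $e$ can be thought of as forming a closed sub-walk (a length-two cycle on the edge $e$), and deleting a cycle from a connected Eulerian multigraph that still retains at least one copy of every vertex incident to that cycle leaves it connected---here this holds because $m_e\ge 3$ means $e$ itself still connects its endpoints in $H-2e$. Hence $H-2e$ is a connected Eulerian multigraph and admits an Eulerian circuit; starting this circuit at the same depot vertex $v_1\in D$ at which $p$ started yields a closed walk $p'$ with $v_1\in D$, so $\Gamma_{\text{closed}}$ and the depot-start condition are preserved.

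Next I would check that $p'$ still visits every vertex that $p$ visited. Since $m_e\ge 3$, the multigraph $H-2e$ still contains at least one copy of $e$, hence still contains both endpoints of $e$ and all other vertices of $H$; thus $V(H-2e)=V(H)$ and $p'$ covers exactly the same vertex set as $p$. In particular, every client $c\in C$ with $A(c)=i$ is still visited by $p'$, so we may keep the assignment $A$ unchanged and condition~(3) of \Cref{def:gvrs} together with $\Gamma_{\ell,\Lambda}$ still holds---the load served by vehicle $i$ is $\sum_{c\in A^{-1}(i)}\Lambda(c)$, which does not depend on the route at all, only on $A$, so it is literally unchanged. Finally, $w(p')=w(H-2e)=w(H)-2w(e)=w(p)-2w(e)\le w(p)$ since $w(e)\ge 0$; this immediately gives $\Gamma_g$ for $p'$ (as $w(p')\le w(p)\le g$) and, summing over the fleet, $w(R')=w(R)-2w(e)\le w(R)$ where $R'=(R\setminus\{p\})\cup\{p'\}$.

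I would then assemble these observations into the statement: $R'$ uses the same number of walks as $R$ (so at most $k$), every walk of $R'$ other than $p'$ is unchanged, $p'$ is a closed walk starting at a depot, the assignment $A$ is feasible for $R'$ because client membership in walks is preserved, the load constraint $\Gamma_{\ell,\Lambda}$ is unaffected, and every walk of $R'$ satisfies the gas bound; hence $(R',A)$ is a $\LoadGasCVRP$ routing with $w(R')\le w(R)$.

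The main obstacle---though it is a mild one, already handled in \Cref{thm:evrp-remove-edge-twice}---is the connectivity argument: one must be careful that deleting two copies of $e$ does not disconnect $H$, and the reason it does not is precisely the hypothesis that the multiplicity is \emph{greater than two}, i.e.\ at least three, so that a copy of $e$ (and hence a route between its endpoints) survives. If the multiplicity were exactly two, deleting both copies could disconnect the graph, which is why the lemma requires multiplicity greater than two; I would make this dependence explicit. Everything else is bookkeeping: since the load is a function of the assignment alone and the vertex set of the walk is preserved, neither $\Gamma_{\ell,\Lambda}$ nor the covering condition is at risk, and the gas bound only gets easier to satisfy because the weight strictly does not increase. (One may note in passing that $p'$ is \emph{a} choice of Eulerian circuit of $H-2e$; any such choice works, and if $w(e)>0$ the weight strictly decreases.)
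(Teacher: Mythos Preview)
Your proposal is correct and follows exactly the approach the paper intends: it reduces to \Cref{thm:evrp-remove-edge-twice} and then verifies the two extra points the paper singles out, namely that $p'$ visits the same vertex set as $p$ (so the covering and load constraints are unaffected) and that $w(p')\le w(p)$ (so the gas constraint is still met). Your write-up is simply a more explicit version of the paper's one-line proof.
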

\begin{proof}
    Analogously to \Cref{thm:evrp-remove-edge-twice}, with the additional remark that $p'$ covers the same clients as $p$ and that $w(p') \le w(p)$ ensures that the gas capacity is not exceeded.
\end{proof}

Before we formally define the state of the dynamic programming algorithm on the tree decomposition of a given instance for this section, we divert back to $\BinPacking$.
We define and solve an extended version of $\BinPacking$, which abstracts from the combination of different sets of walks in order to form new walks fulfilling some desired characteristics. In the algorithm, this operation will be needed for the computation at the Join Nodes of the tree decomposition.

In particular, we encode each walk as a $\BinPacking$ item with a \emph{multidimensional size} given by the number of clients and distance traveled.
Accordingly, the bins have multidimensional capacities matching the characteristics of the desired walks.
As the walks of a partial solution might have different characteristics, there are several different \emph{bin kinds}.
In addition, each walk has a \emph{fingerprint} given by the endpoints of the walk and whether or not it contains a depot.
Each bin kind expects a combination of fingerprints fulfilling a certain predicate, according to the endpoints of the desired walk.

\begin{upright-definition}[\HetMdFpBinPacking]
    A \emph{Heterogeneous Multidimensional Fingerprint Bin Packing} instance consists of
    \begin{itemize}
        \item a finite set $U$ of items,
        \item a finite set $\mathcal F$ of fingerprints,
        \item a dimension $d \in \N^+$,
        \item a number of bin kinds $m \in \N^+$,
        \item bin capacities $B_1, B_2, \dots, B_m \in \N^d$ (with $\norm{B_1}_1 \ge \norm{B_2}_1 \ge \dots \ge \norm{B_m}_1$),
        \item available bin counts $k_1, k_2, \dots, k_m$,
        \item predicates deciding the validity of a multiset of fingerprints $\mathcal V_1, \mathcal V_2, \dots, \mathcal V_m\colon \N^{\mathcal F} \to \set{0, 1}$,
        \item item sizes $s\colon U \to \N^d \setminus \set 0$ with $\norm{s(u)}_1 \le \norm{B_1}_1$ for all $u \in U$, and
        \item item fingerprints $F\colon U \to \mathcal F$.
    \end{itemize}
    It asks whether there exists a partition $\set{U_{i, j}}_{i \in [m], j \in [k_i]}$ of $U$ such that for all $i, j$
    \begin{itemize}
        \item the bin capacity is not exceeded: $\sum_{u \in U_{i, j}} s(u) \le B_i$, and
        \item the multiset of fingerprints is valid: $\mathcal V_i(\multiset{F(u) \with u \in U_{i, j}}) = 1$. \lipicsEnd
    \end{itemize}
\end{upright-definition}

\begin{theorem}
    \label{thm:hetmdfpbinpacking-fpt-cap}
    There is an algorithm that, given a $\HetMdFpBinPacking$ instance
    \[
        \mathcal B = (U, \mathcal F, d, m, (B_1, B_2, \dots, B_m), (k_1, k_2, \dots, k_m), (\mathcal V_1, \mathcal V_2, \dots, \mathcal V_m), s, F),
    \]
    computes a feasible partition or concludes that there is none in $f(\cardinality{\mathcal F} + d + m + \norm{B_1}_1) \cdot (T(\cardinality U) + \cardinality{\mathcal B}^{\O(1)})$ steps, where $T\colon \N \to \N$ is an upper bound on the time complexity of the predicates $\set{\mathcal V_i}_{i \in [m]}$, for some computable function $f$.
\end{theorem}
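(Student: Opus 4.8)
The plan is to exploit the fact that the bound $\norm{B_1}_1$ tightly restricts both \emph{which items can matter} and \emph{what a single bin can contain}, thereby collapsing the instance to an integer program whose dimension depends only on the parameters. Observe first that we may group the items into \emph{types}: declare $u \sim u'$ iff $s(u) = s(u')$ and $F(u) = F(u')$. Since every $s(u) \in \N^d \setminus \set 0$ satisfies $\norm{s(u)}_1 \le \norm{B_1}_1$, there are at most $\binom{\norm{B_1}_1 + d}{d}$ possible size vectors, hence at most $\cardinality{\mathcal F} \cdot \binom{\norm{B_1}_1 + d}{d}$ types; this is bounded by a function of $\cardinality{\mathcal F} + d + \norm{B_1}_1$. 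For each type $\tau$ let $\sigma_\tau$ be its size component and $n_\tau$ the number of items of that type. The tuple $(n_\tau)_\tau$ together with the (unchanged) bin data is a complete description of the instance up to renaming items.

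Next I would enumerate, for each bin kind $i \in [m]$, all \emph{admissible patterns} for that kind: an admissible pattern is a multiset $\mu$ of types with $\sum_\tau \mu(\tau)\,\sigma_\tau \le B_i$ and $\mathcal V_i\bigl(\multiset{F_\tau \with \tau, \text{ with multiplicity } \mu(\tau)}\bigr) = 1$, where $F_\tau$ is the common fingerprint of type $\tau$; the empty multiset counts as admissible for kind $i$ exactly when $\mathcal V_i(\multiset{}) = 1$, which covers unused bins. Because every item contributes at least $1$ to the $\ell_1$-norm of the load, any $\mu$ meeting the capacity constraint has $\cardinality \mu \le \norm{B_i}_1 \le \norm{B_1}_1$, so the number of candidate patterns is at most the number of multisets of types of cardinality $\le \norm{B_1}_1$, namely $\binom{\cardinality{\mathcal T} + \norm{B_1}_1}{\norm{B_1}_1}$ — again bounded by a function of the parameters. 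Testing the capacity inequality is a $\cardinality{\mathcal B}^{\O(1)}$ computation, and testing $\mathcal V_i$ costs $T(\cardinality U)$ (it is only ever evaluated on a fingerprint multiset of size $\le \norm{B_1}_1$, which we can present within the $\cardinality U$-sized budget), so the entire enumeration runs within the claimed bound.

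Finally I would set up an integer feasibility program with one variable $x_{i,\mu} \in \N$ for each bin kind $i$ and each admissible pattern $\mu$ of kind $i$, interpreted as the number of the $k_i$ bins of kind $i$ that realize pattern $\mu$. The constraints are $\sum_\mu x_{i,\mu} = k_i$ for every $i$ (every bin, including empty ones, is accounted for) and $\sum_{i,\mu} x_{i,\mu}\,\mu(\tau) = n_\tau$ for every type $\tau$. The number of variables is at most $m$ times the number of patterns, hence bounded by a function of $\cardinality{\mathcal F} + d + m + \norm{B_1}_1$, while all coefficients and right-hand sides have bit-length $\O(\log \cardinality{\mathcal B})$; thus the program can be solved, and a solution $(x_{i,\mu})$ recovered, by integer programming in a fixed number of variables (Lenstra's algorithm) in time $f(\cdot) \cdot \cardinality{\mathcal B}^{\O(1)}$. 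Given such a solution, I reconstruct a partition by processing the types one at a time: for each $\tau$, distribute its $n_\tau$ items among the bins, giving $\mu(\tau)$ of them to each of the $x_{i,\mu}$ bins using pattern $\mu$; the equality constraints guarantee this exactly exhausts type $\tau$ and fills every pattern-slot. Conversely, any genuine feasible partition induces such an $(x_{i,\mu})$, so the program is feasible iff the instance is.

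I expect the main obstacle to be the bookkeeping around the predicates and the dimension count: one must be careful that (i) the $T(\cardinality U)$ factor genuinely suffices — which it does, precisely because admissible patterns have bounded cardinality — and (ii) that the integer program's dimension is controlled by the parameter rather than by $\cardinality U$ or by $\sum_i k_i$. The decisive point is the separation between "how many bins use a given pattern" (a variable \emph{value}, possibly large) and "which patterns exist" (the variable \emph{count}, bounded); everything else is routine.
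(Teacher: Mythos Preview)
Your proposal is correct and follows essentially the same approach as the paper: group items by the pair (size, fingerprint), enumerate the bounded set of feasible bin configurations per kind, and solve a configuration ILP via Lenstra's algorithm in a parameter-bounded number of variables. The only cosmetic difference is that you use the equality $\sum_\mu x_{i,\mu} = k_i$ together with an explicit check of $\mathcal V_i(\multiset{})$ for the empty pattern, whereas the paper writes the inequality $\sum_c b_{i,c} \le k_i$; your formulation is arguably the more faithful one to the problem definition.
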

\begin{proof}
    First, observe that only the size and the fingerprint of the items $u \in U$ matter for the partitioning, not the identity of each item.
    We call the combination of these two values the \emph{key} of an item.
    Let $\mathcal K = \set{(s(u), F(u)) \with u \in U}$ be the set of all keys present in the input.
    For $u \in U$ and its key $K = (s(u), F(u))$ we denote $s(K) = s(u)$.
    Additionally, for $K \in \mathcal K$, let $q_K$ denote the number of items $u \in U$ with key $K$.

    In the output, each bin contains a certain number of items matching a given key.
    More formally, for each bin kind $i \in [m]$, we call a function $c\colon \mathcal K \to \N$ a \emph{feasible bin configuration} for kind $i$, if a bin with such items passes the criteria of $\HetMdFpBinPacking$.
    We denote the multiset of fingerprints of a configuration $c$ as $F(c)$.
    $F(c)$ contains each fingerprint $f \in \mathcal F$ with multiplicity $\sum_{x} c(x, f)$.
    Then, more formally, the set of all feasible bin configurations for kind $i$ is
    \[
        C_i = \set*{c\colon \mathcal K \to \N \with \sum_{K \in \mathcal K} c(K) \cdot s(K) \le B_i \land \mathcal V_i(F(c)) = 1}.
    \]

    We define an ILP $\mathcal I$ that computes, for each bin kind $i \in [m]$ and configuration $c \in C_i$, the number $b_{i, c}$ of bins that should be configured according to $c$.
    \begin{align*}
        b_{i, c} &\in \N && \forall i \in [m], c \in C_i \\
        \sum_{c \in C_i} b_{i, c} &\le k_i && \forall i \in [m] \\
        \sum_{i \in [m], c \in C_i} b_{i, c} \cdot c(K) &= q_K && \forall K \in \mathcal K \\
    \end{align*}
    A satisfying assignment of the $b_{i, c}$ variables is equivalent to a partition of $U$, where for each $i \in [m]$ and $c \in C_i$ there are $b_{i, c}$ bins of the $i$-th kind configured according to $c$.

    To show that the algorithm terminates in the desired number of steps, we will bound the sizes of the sets $\mathcal K$ and $C_i$ for all $i \in [m]$.
    For the former, observe that as each $u \in U$ has $\norm{s(u)}_1 \le \norm{B_1}_1$,
    \[
        \cardinality{\mathcal K} \le (\norm{B_1}_1 + 1)^d \cdot \cardinality{\mathcal F}.
    \]
    Next, let $i \in [m]$ and $c \in C_i$.
    As all items contribute to a bin's size in at least one dimension, we have $c(K) \le \norm{B_1}_1^d$ for each $K \in \mathcal K$.
    Therefore, the size of the set $C_i'$ of functions $c\colon \mathcal K \to \N$ satisfying the bin capacity condition is bounded by 
    \[
        \cardinality{C_i'} \le \left( \norm{B_1}_1^d + 1 \right)^{\cardinality{\mathcal K}}.
    \]
    Filtering the set $C_i'$ for only those functions that satisfy $\mathcal V_i$ gives the final set $C_i$.
    Thus, the number of invocations of $\mathcal V_i$ is bounded as stated in the theorem.

    Finally, solving $\mathcal I$ can be done in FPT time with respect to the number of variables.
    As there is one variable for each combination of bin kind $i$ and feasible bin configuration $c \in C_i$, this number is bounded as desired.~\cite{lenstra1983integer, kannan1987minkowski, frank1987application, cygan2015parameterized}
\end{proof}

Note that removing all the additions of $\HetMdFpBinPacking$, the above algorithm can also be used for regular $\BinPacking$.

\begin{corollary}
    \label{thm:binpacking-B-fpt}
    $\BinPacking$ parameterized by the bin capacity $B$ is in $\FPT$.
\end{corollary}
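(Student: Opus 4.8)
The plan is to realize $\BinPacking$ as the special case of $\HetMdFpBinPacking$ in which all the extra structure is trivial, and then invoke \Cref{thm:hetmdfpbinpacking-fpt-cap}. Concretely, given a $\BinPacking$ instance $(U, s, k, B)$ --- which, by the convention fixed just before the statement of \Cref{thm:cvrp-k-w1-on-trees}, we may assume satisfies $s(u) \le B$ for every $u \in U$ --- I would build the $\HetMdFpBinPacking$ instance with the same item set $U$ and size map $s$, dimension $d = 1$, a single bin kind $m = 1$ with capacity $B_1 = B$ and count $k_1 = k$, a one-element fingerprint set $\mathcal F = \set{\star}$, the constant item-fingerprint map $F\colon u \mapsto \star$, and the constantly-true validity predicate $\mathcal V_1 \equiv 1$. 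All side conditions of \Cref{thm:hetmdfpbinpacking-fpt-cap}'s input are immediate: with $m = 1$ the ordering $\norm{B_1}_1 \ge \norm{B_2}_1 \ge \dots$ is vacuous, each $s(u) \in \N^+ = \N^1 \setminus \set 0$, and $\norm{s(u)}_1 = s(u) \le B = \norm{B_1}_1$.

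Next I would check that feasible partitions of the two instances coincide. A partition $\set{U_{1, j}}_{j \in [k]}$ of $U$ is feasible for the constructed $\HetMdFpBinPacking$ instance exactly when $\sum_{u \in U_{1, j}} s(u) \le B_1 = B$ for every $j$ (the fingerprint condition $\mathcal V_1(\cdot) = 1$ being always satisfied), which is precisely the $\BinPacking$ feasibility condition. Hence running the algorithm of \Cref{thm:hetmdfpbinpacking-fpt-cap} on the constructed instance solves the original $\BinPacking$ instance. Since $\mathcal V_1$ can be evaluated in constant time, we may take $T \equiv \O(1)$, so the running time is $f(\cardinality{\mathcal F} + d + m + \norm{B_1}_1) \cdot (T(\cardinality U) + \cardinality{\mathcal B}^{\O(1)}) = f(B + 3) \cdot n^{\O(1)}$, which is $\FPT$ in $B$.

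There is essentially no hard step here; the only points requiring attention are that this is a genuine parameterized reduction --- the new parameter $\cardinality{\mathcal F} + d + m + \norm{B_1}_1 = B + 3$ is bounded by a function of the original parameter $B$ --- and that instantiating the trivial predicate does not secretly inflate the running-time bound, which it does not. Equivalently, one could simply specialize the proof of \Cref{thm:hetmdfpbinpacking-fpt-cap} inline: collapse items to their sizes as keys, enumerate the feasible bin configurations $c\colon \mathcal K \to \N$ with $\sum_{K} c(K) \cdot K \le B$ (of which there are at most $(B+1)^{(B+1)}$), and solve the resulting integer program $\sum_c b_c \le k$, $\sum_c b_c \cdot c(K) = q_K$ in $\FPT$ time in the number of variables via Lenstra's algorithm.
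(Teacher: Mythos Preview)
Your proposal is correct and matches the paper's approach exactly: the paper also derives the corollary by observing that $\BinPacking$ is the special case of $\HetMdFpBinPacking$ obtained by stripping away the extra structure, so that \Cref{thm:hetmdfpbinpacking-fpt-cap} applies directly. Your write-up is in fact more detailed than the paper's one-line justification, but the argument is the same.
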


With all preliminaries out of the way, we present the dynamic programming algorithm for $\LoadGasCVRP$.
Given a $\LoadGasCVRP$ instance $(G, w, D, C, k, \ell, \Lambda, g, r)$ and a nice tree decomposition $\T$ of $G$, we track partial solutions of the following form:

\begin{upright-definition}
    \label{def:loadgascvrp-dp}
    Let $t \in \T$.
    We call a walk $p$ in $G_t^\downarrow$ \emph{detached} if $V(p) \cap X_t = \emptyset$ and $V(p) \cap D \ne \emptyset$.
    We call $p$ \emph{attached} when both endpoints of $p$ are in $X_t$ (but we don't require anything with regards to $D$).
    For $u, v \in V(G), \lambda, w \in \N$, an attached $u$-$v$-walk with $\Lambda(p) = \lambda$ and $w(p) = w$ is called a $(u, v, \lambda, w)$-walk.

    Let $S_t = \set{s\colon X_t^2 \times [\ell]_0 \times [g]_0 \to \N}$ be the set of all multisets of combinations of walk endpoints, clients covered, and weight.
    We refer to the elements $s \in S_t$ as \emph{counters} and abbreviate $\Lambda(s) = \sum_{u, v, \lambda, w} \lambda \cdot s(u, v, \lambda, w)$ and $w(s) = \sum_{u, v, \lambda, w} w \cdot s(u, v, \lambda, w)$.

    Let $X \subseteq X_t \cap C$, $c \le k$, and $s, s^* \in S_t$.
    We call $(X, c, s, s^*)$ a \emph{solution signature} at $t$.
    A set of walks $R = \set{p_1, p_2, \dots, p_{k'}}$ in $G_t^\downarrow$ and assignment of clients $A\colon C \cap (X \cup (V_t^\downarrow \setminus X_t)) \to [k']$ is a \emph{partial solution} compatible with $(X, c, s, s^*)$ at $t$, if
    \begin{itemize}
        \item for all clients $c \in C \cap (X \cup (V_t^\downarrow \setminus X_t))$, $c \in V(p_{A(c)})$,
        \item all walks $p \in R$ are either detached or attached,
        \item there are exactly $c$ detached walks in $R$,
        \item
            for each $u, v \in X_t, \lambda \in [\ell]_0, w \in [g]_0$,
            \begin{itemize}
                \item there are exactly $s(u, v, \lambda, w)$ many $(u, v, \lambda, w)$-walks $p \in R$ with $V(p) \cap D = \emptyset$, and
                \item there are exactly $s^*(u, v, \lambda, w)$ many $(u, v, \lambda, w)$-walks $p \in R$ with $V(p) \cap D \ne \emptyset$. \lipicsEnd
            \end{itemize}
    \end{itemize}
\end{upright-definition}

We use dynamic programming to compute the minimum weight $dp[t, X, c, s, s^*] \in \N \cup \set{\infty}$ of all partial solutions $R$ compatible with $(X, c, s, s^*)$ at all $t \in \T$ and for all solution signatures $(X, c, s, s^*)$ at $t$.
The observant reader might have noticed that since the counters $s \in S_t$ map to the natural numbers, the set $S_t$ is infinite when $X_t \ne \emptyset$.
However, we know from \Cref{thm:loadgascvrp-remove-edge-twice} that it suffices to look for routings where each of the $k$ walks contain each edge of $G$ at most twice.
As there are only finitely many such routings, only finitely many partial solutions leading to these routings can be witnessed at the bags of $\T$.
Thus, there are only finitely many counters which are relevant for finding an optimal routing.

First, if the number of covered clients indicated by a counter $s$ is greater than the number of available clients, we know that no compatible partial solution exists.
More formally, if
\[
    \Lambda(s + s^*) > \cardinality{C \cap (X \cup (V_t^\downarrow \setminus X_t))},
\]
then $dp[t, X, \cdot, s, s^*] = \infty$.
Note that the above values might still not be equal, as some of the clients in $V_t^\downarrow$ might be covered by detached walks.
This means that we can limit the computation of $dp$ to only those counters $s \in S_t$ with $s(\cdot, \cdot, \lambda, \cdot) \le \cardinality C$ for all $\lambda \ge 1$.

This still leaves the value of $s$ unbounded for $\lambda = 0$.
However, the $dp$ values for counters with $s(u, v, 0, w) > 1$ can be reduced to the case where $s(u, v, 0, w) = 1$.
The fact that a partial solution $(R, A)$ containing a $(u, v, 0, w)$-walk $p$ exists suffices to find the minimum weight partial solution for $s(u,v,0,w) > 1$.
Such a solution can be obtained by repeatedly inserting copies of $p$ into $R$, which increases the weight of $R$ by $(s(u, v, 0, w) - 1) \cdot w$.
Thus, we only need to compute $dp[t, X, c, s, s^*]$ when $s(\cdot, \cdot, 0, \cdot), s^*(\cdot, \cdot, 0, \cdot) \le 1$.

Before we can start to describe the computation of $dp$, we need to introduce the concept of \emph{counter reducibility}.

\begin{definition}
    Let $t \in T$ and $s_1, s_1^*, s_2, s_2^* \in S_t$.
    The pair of counters $(s_1, s_1^*)$ is reducible to $(s_2, s_2^*)$, denoted as $(s_1, s_1^*) \leadsto (s_2, s_2^*)$, if for every $X \subseteq X_t$, $c \le k$, and partial solution $(R_1, A_1)$ compatible with $(X, c, s_1, s_1^*)$ at $t$, there is a partial solution $(R_2, A_2)$ compatible with $(X, c, s_2, s_2^*)$ at $t$ which can obtained by merging walks of $R_1$ and updating $A_1$ accordingly.
\end{definition}
\begin{lemma}
    \label{thm:counter-reducibility-fpt}
    Counter reducibility can be decided in $f(\cardinality{X_t} + \ell + g + z) \cdot n^{\O(1)}$ steps for some function $f$, where $z = \sum_{u, v \in X_t} (s_2(u, v, 0, 0) + s_2^*(u, v, 0, 0))$.
\end{lemma}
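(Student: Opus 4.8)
The plan is to show first that \emph{counter reducibility is a purely combinatorial property of the quadruple $(s_1, s_1^*, s_2, s_2^*)$, independent of $G$, of $X$, and of $c$}, and then to decide that property by a bounded amount of guessing on top of a single call to the algorithm of \Cref{thm:hetmdfpbinpacking-fpt-cap}. For the first part: every partial solution compatible with $(X, c, s_1, s_1^*)$ at $t$ carries the same multiset $W_1$ of attached-walk descriptors (endpoints, covered load, weight, depot flag) and the same number $c$ of detached walks; concatenating two attached walks at a shared vertex of $X_t$ again yields an attached walk and does not change $c$; and since the concatenated walk visits the union of the vertices visited by its parts, the reassignment of clients stays valid. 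Hence a partial solution compatible with $(X, c, s_1, s_1^*)$ can be merged to one compatible with $(X, c, s_2, s_2^*)$ if and only if the multiset $W_1$ can be partitioned into groups, each of which concatenates into a single walk, so that the multiset of resulting descriptors equals the multiset $W_2$ read off from $(s_2, s_2^*)$. (If $(s_1, s_1^*)$ is realized by no partial solution the statement is vacuously true; this is easy to detect, or one may simply answer ``yes''.) So it remains to decide this statement about $W_1$ and $W_2$.

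Next I would characterize concatenability. Viewing a $(u, v, \lambda, w)$-descriptor as an undirected edge $uv$ (a self-loop when $u = v$) of a multigraph on $X_t$, a group of descriptors concatenates into one walk exactly when this multigraph is connected and has at most two vertices of odd degree; the resulting walk has those two vertices as its endpoints (or, if all degrees are even, a freely chosen base vertex as both endpoints), its load and weight equal the sums over the group, and its depot flag equals the disjunction over the group; a singleton group is always valid and leaves its walk unchanged. Calling a descriptor \emph{trivial} when $\lambda = w = 0$ and \emph{nontrivial} otherwise, every nontrivial descriptor contributes at least $1$ to $\lambda + w$ of its group, while a merged walk has $\lambda \le \ell$ and $w \le g$, so each group contains at most $\ell + g$ nontrivial descriptors; only trivial descriptors can make a group large. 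Note that the number of descriptor types is at most $2 \cardinality{X_t}^2 (\ell + 1)(g + 1)$, a function of the parameter.

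The plan for the decision is then: guess, for each of the at most $z$ trivial descriptors that must appear in $W_2$, which walk of $W_2$ it represents, and, for each target group, a constant-size ``skeleton'' recording only how the trivial descriptors allocated to that group affect connectivity and degree parities on the $\cardinality{X_t}$ bag vertices; there are $f(\cardinality{X_t} + \ell + g + z)$ such guesses in total. Having fixed a guess, the task of distributing the \emph{nontrivial} descriptors of $W_1$ among the nontrivial descriptors of $W_2$ so that each target group sums to the prescribed load and weight, has the prescribed depot flag, and — together with its pre-allocated trivial descriptors — forms a connected multigraph with at most two odd-degree vertices located at the prescribed endpoints, is precisely an instance of $\HetMdFpBinPacking$: the items are the nontrivial descriptors of $W_1$ with size $(\lambda, w) \in \N^2 \setminus \set 0$ and fingerprint recording their endpoints and flag; the bin kinds are the nontrivial descriptor types of $W_2$, with capacity $(\lambda, w)$ and count equal to their multiplicity; and the validity predicates $\mathcal V_i$ check connectivity, the parity/endpoint condition, and the flag in $n^{\O(1)}$ time. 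Since $\norm{B_1}_1 \le \ell + g$ and the numbers of bin kinds and fingerprints are bounded by a function of $\cardinality{X_t} + \ell + g$, \Cref{thm:hetmdfpbinpacking-fpt-cap} runs in $f(\cardinality{X_t} + \ell + g) \cdot n^{\O(1)}$ time; the (possibly large) multiplicities in $W_1$ and $W_2$ enter only as item counts and bin counts and are absorbed into the $n^{\O(1)}$ factor. Multiplying by the number of guesses and checking a simple supply/demand inequality for the trivial descriptors of $W_1$ that are neither needed in $W_2$ nor consumed by a skeleton yields the claimed running time.

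I expect the technical heart — and the main obstacle — to be the clean treatment of trivial descriptors: a $(u, u, 0, 0)$-descriptor can be absorbed into any group passing through $u$ with no effect at all, whereas a $(u, v, 0, 0)$-descriptor with $u \ne v$ can repair connectivity or relocate an endpoint, so one must prove that it is enough to ``activate'' only $\O(\cardinality{X_t}^2)$ trivial descriptors per group (for parity and connectivity repair) besides the at most $z$ that must survive into $W_2$, and that every remaining trivial descriptor can be placed by a mere counting argument. Establishing this normal form, together with verifying that the $\HetMdFpBinPacking$ encoding faithfully captures the Eulerian-path endpoint condition on the bag multigraph, is where the real work lies.
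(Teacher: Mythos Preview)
Your core reduction matches the paper's: encode the question as a $\HetMdFpBinPacking$ instance where the items are the walks counted by $(s_1, s_1^*)$, the bin kinds are the walk types counted by $(s_2, s_2^*)$, fingerprints record the two endpoints together with a depot bit, and the validity predicate is exactly the Eulerian-trail check on the bag multigraph that you describe. The divergence is entirely in how the trivial ($\lambda = w = 0$) descriptors are handled, and your proposal does not close that.

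The paper's move is to take $d = 3$ rather than $d = 2$: a nontrivial item gets size $(\lambda, w, 0)$, a trivial item gets size $(0, 0, 1)$, and every bin capacity is $(\lambda, w, z)$. This forces $\norm{s(u)}_1 \ge 1$ for every item (so the hypothesis of \Cref{thm:hetmdfpbinpacking-fpt-cap} is met), makes the parameter $z$ appear only through $\norm{B_1}_1 \le \ell + g + z$, and lets the single bin-packing call absorb all bookkeeping about trivial descriptors via its validity predicates --- no external guessing, no normal form to establish. Your anticipated ``main obstacle'' therefore simply does not arise. By contrast, your plan guesses ``for each target group a constant-size skeleton''; but the number of target groups is $\cardinality{W_2}$, which is bounded only polynomially in $n$, not by any function of $\cardinality{X_t} + \ell + g + z$, so the claim that there are $f(\cardinality{X_t} + \ell + g + z)$ guesses in total is unsupported. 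Attaching the skeleton to the bin \emph{kind} instead does not obviously help either, since you would then have to distribute the unbounded bin counts $k_i$ among the refined kinds. The third-dimension device is the missing idea.
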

\begin{proof}
    First, we check whether $\Lambda(s_1 + s_1^*) = \Lambda(s_2 + s_2^*)$ and $w(s_1 + s_1^*) = w(s_2 + s_2^*)$.
    If this is not the case, we immediately reject the instance.
    Otherwise, we use the algorithm for $\HetMdFpBinPacking$ from \Cref{thm:hetmdfpbinpacking-fpt-cap} with $d = 3$ and $\mathcal F = X_t^2 \times \set{0, 1}$.
    Each item will be one walk from $(s_1, s_1^*)$.
    The three dimensions indicate the number of covered clients, the weight of the walk, and one extra dimension for otherwise zero-sized items.
    The fingerprints indicate the endpoints of the walks and one bit which encodes whether the walk contains a depot.
    For each $u, v \in X_t, \lambda \in [\ell]_0, w \in [g]_0$ we introduce $s_1(u, v, \lambda, w)$ items $i$ with $s(i) = (\lambda, w, \indic(\lambda = w = 0))$ and $F(i) = (u, v, 0)$.
    Similarly, we introduce $s_1^*(u, v, \lambda, w)$ items $i$, but set $F(i) = (u, v, 1)$.
    The walks in $(s_2, s_2^*)$ are modeled by the bin kinds.
    We first cover the walks in $s_2$.
    For each $u, v \in X_t, \lambda \in [\ell]_0, w \in [g]_0$ we add one bin kind $j$ with $B_j = (\lambda, w, z)$.
    A multiset of fingerprints $\multiset{F_1, F_2, \dots, F_x}$ is valid according to $\mathcal V_j$ if the directed graph formed by the walk endpoints contains an Eulerian trail from $u$ to $v$ and none of the fingerprints has the depot-bit set.
    The number of available bin kinds is $k_j = s(u, v, \lambda, w)$.
    For the walks in $s_2^*$, we add equivalent bin kinds, but require in $\mathcal V_j$ that at least one depot-bit is set.
    Note that the capacity of the third dimension in every bin kind allows that the zero-sized walks to be distributed among the bins arbitrarily.
    To satisfy $\norm{B_1}_1 \ge \norm{B_2}_1 \ge \dots \ge \norm{B_m}_1$, we reorder the numbering of the bin kinds accordingly.

    By using the validity predicates $\mathcal V_j$, we ensure that any $\HetMdFpBinPacking$ partition is equivalent to a proper distribution of the walks.
    Additionally, as we have checked before that the given number of clients and weight of walks from $s_1 + s_1^*$ matches $s_2 + s_2^*$, the partition is tight in the first two dimensions and match the requirements of $s_2 + s_2^*$.
    We have $\cardinality{\mathcal F} = 2 \cdot \cardinality{X_t}^2$, the number of bins is bounded by $2 \cdot \cardinality{X_t}^2 \cdot (\ell + 1) \cdot (g + 1)$, and $\norm{B_1}_1 \le \ell + g + z$.
    As all $\mathcal V_j$ predicates can be checked in polynomial time, the total running time matches the desired bound.
\end{proof}

The computation of $dp$ then depends on the node type of $t$:

\paragraph*{Leaf Node}
If $t$ is a leaf node, then $X_t = \emptyset$, so $X$, $s$ and $s^*$ must all be empty too.
As $V_t^\downarrow = \emptyset$, there cannot be any detached walks, so
\[
    dp[t, X, c, s, s^*] = \begin{cases}
        0, &\text{if } c = 0,\\
        \infty, &\text{otherwise}.
    \end{cases}
\]
\paragraph*{Introduce Vertex Node}
Suppose that $t$ is an introduce vertex node with child node $t'$ such that $X_t = X_{t'} \cup \set v$.
As $v$ is an isolated vertex in $G_t^\downarrow$, any attached walk $p$ that contains $v$ must be the empty walk.
Thus, for all $u \ne v$ and $w > 0$ we must have $s(u, v, \cdot, \cdot) = s(v, u, \cdot, \cdot) = s^*(u, v, \cdot, \cdot) = s^*(v, u, \cdot, \cdot) = s(v, v, \cdot, w) = s^*(v, v, \cdot, w) = 0$.
Depending on whether or not $v \in D$, the empty walks at $v$ are counted in $s$ or $s^*$ respectively.
Suppose that $v \notin D$, then $s^*(v, v, \cdot, \cdot) = 0$.
One of the empty walks at $v$ should cover a client if and only if $v$ is a client and contained in $X$.
More formally, for all $\lambda \ge 1$ we have $s(v, v, \lambda, 0) = \indic(\lambda = \Lambda(v) \land v \in X)$.
Lastly, the value $s(v, v, 0, 0)$ can take any value.
If all of the above constraints are satisfied, we set $dp[t, X, c, s, s^*] = dp[t', X \setminus \set v, c, s_{\downarrow X_{t'}}, s_{\downarrow X_{t'}}^*]$.
Otherwise there is no compatible partial solution, and the $dp$ entry is set to $\infty$.
The case $v \in D$ follows analogously with the requirements for $s$ and $s^*$ swapped.

\paragraph*{Introduce Edge Node}
Suppose that $t$ is an introduce edge node labeled with the edge $e = \set{u, v}$ with a child $t'$.
As seen in \Cref{thm:loadgascvrp-remove-edge-twice}, any walk in a minimum weight partial solution compatible with $(X, c, s, s^*)$ at $t$ traverses $e$ at most twice.
Thus, when removing $e$, a walk $p$ gets split into at most three subwalks.
Let $s', s'^* \in S_{t'}$ and $c_e \in \N$ such that $\max\set{c_e, \cardinality{s'} + \cardinality{s'^*}} \le 3 \cdot (\cardinality{s} + \cardinality{s^*})$.
Then a partial solution compatible with $(X, c, s, s^*)$ at $t$ can be constructed from a partial solution compatible with $(X, c, s', s'^*)$ at $t'$ by inserting $c_e$ copies of $e$ if $(s' + c_e \cdot \multiset{(u, v, 0, w(u, v))}, s'^*) \leadsto (s, s^*)$.
Thus, we set
\[
    dp[t, X, c, s, s^*] = \min_{s', s'^*, c_e} dp[t', X, c, s', s'^*] + c_e \cdot w(u, v),
\]
where $s', s'^*, c_e$ range over the values described above.

\paragraph*{Forget Node}
Suppose that $t$ is a forget node with child node $t'$ such that $X_t = X_{t'} \setminus \set v$.
A partial solution compatible with $(X, c, s, s^*)$ can interact with $v$ in several ways.

First, there can be detached walks containing $v$, which are included in the count $c$.
For a given count $c' \le c$ of detached walks that are also detached from $t'$, let $s_v \in S_{t'}$ be a counter which has $s_v(a, b, \cdot, \cdot) = 0$ whenever $a \ne v$ or $b \ne v$ and $\sum s_v(v, v, \cdot, \cdot) = c - c'$.
There are at most $(c - c' + 1)^{(\ell + 1) \cdot (g + 1)}$ such counters $s_v$.
As the detached walks need to contain a depot, we consider the counter $s_v$ as an addition to $s^*$.

Second, the walks counted in $s$ and $s^*$ can contain $v$.
However, any such walk cannot start or end at $v$, as it would otherwise not be attached to $X_t$.
Thus, the counters for $t'$ remain unchanged by this interaction.

In both cases, if $v \in C$, it must be covered regardless of $X$.
Therefore, we set
\[
    dp[t, X, c, s, s^*] = \min_{c', s_v} dp[t', X \cup (C \cap \set v), c', s, s^* + s_v]
\]
where $c$ and $s_v$ range over the counts and counters described above.

\paragraph*{Join Node}
Suppose that $t$ is a join node with children $t_1$ and $t_2$.
A partial solution compatible with $(X, c, s, s^*)$ is composed of partial solutions from both $t_1$ and $t_2$ to cover the clients in $V_{t_1}^\downarrow$ and $V_{t_2}^\downarrow$ respectively.
A walk from a partial solution at $t$ might switch between $V_{t_1}^\downarrow$ and $V_{t_2}^\downarrow$ a number of times, so it is split into multiple walks at the children.
Suppose that a walk $p$ in $G_t^\downarrow$ is split into the walks $p_1, p_2, \dots, p_q$.
First, the clients covered by $p$ are distributed among the $p_i$.
There are at most $\ell$ such client-connecting walks.
All walks $p_i$ that do not cover a client are only needed to connect the start and end of two client-connecting walks.
As there are $\cardinality{X_t}$ vertices that could be reached, at most $\cardinality{X_t}$ walks are needed in between two client-connecting walks.
In total, there are thus at most $\ell$ client-connecting walks and $(\ell + 1) \cdot \cardinality{X_t}$ walks connecting the endpoints of those.
Therefore, we limit the counters $s_1, s_1^* \in S_{t_1}, s_2, s_2^* \in S_{t_2}$ to those which are bounded above by $(\ell + 1) \cdot \cardinality{X_t} \cdot (\cardinality{s} + \cardinality{s^*})$.
Similarly to before, only the counters which satisfy $(s_1 + s_2, s_1^* + s_2^*) \leadsto (s, s^*)$ are considered.
Note that the clients in $X$ are split between the two partial solutions so that the partial solution in $t_1$ covers some subset $X_1 \subseteq X$ and leaves $X \setminus X_1$ to $t_2$'s solution.
Similarly, the number of detached walks is split into $c_1 \le c$ and $c - c_1$.
Bringing it all together, we set
\[
    dp[t, X, c, s, s^*] = \min_{\substack{X_1 \subseteq X, c_1 \le c,\\ s_1, s_1^*, s_2, s_2^*}} dp[t_1, X_1, c_1, s_1, s_1^*] + dp[t_2, X \setminus X_1, c - c_1, s_2, s_2^*],
\]
where $s_1, s_1^*, s_2, s_2^*$ range over the values described above.

\begin{theorem}
    \label{thm:loadgascvrp-xp}
    $\LoadGasCVRP$ parameterized by $\tw + \ell + g$ is in $\XP$.
\end{theorem}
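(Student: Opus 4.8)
The plan is to run exactly the dynamic program set up in \Cref{def:loadgascvrp-dp} together with the five node-update rules stated above, and to verify that (i) it can be confined to a state space of size $n^{\O(\tw^2\ell g)}$, (ii) each update is computable within the same bound, and (iii) it is sound and complete. As a first step I would use \Cref{thm:tw-2-approx} and \Cref{thm:nice-tree-decomposition-transform} to obtain, in $\FPT$ time, a nice tree decomposition $\T$ of $G$ of width $\O(\tw)$ with $n^{\O(1)}$ nodes; I would also record the standard preprocessing fact that one may assume $k\le\cardinality C\le n$, since no routing needs more than $\cardinality C$ non-trivial closed walks and trivial walks cost nothing, so capping $k$ at $\cardinality C$ changes neither feasibility nor the optimum. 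The algorithm then fills $dp[t,X,c,s,s^*]$ bottom-up and accepts the instance iff $\min_{c\le k}dp[\rho,\emptyset,c,\emptyset,\emptyset]\le r$, where $\rho$ is the root: since $X_\rho=\emptyset$, every walk of a full routing is forced to be detached, hence to touch a depot and to be counted in $c$, so this query asks exactly whether a feasible routing of weight at most the budget $r$ exists.

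The crucial quantitative step is bounding the number of \emph{relevant} states. By \Cref{thm:loadgascvrp-remove-edge-twice} it suffices to search among routings in which every walk uses every edge of $G$ at most twice, and this is precisely what justifies the two restrictions discussed before \Cref{def:loadgascvrp-dp}: only counters with $\sum_{\lambda\ge 1}s(\cdot,\cdot,\lambda,\cdot)\le\cardinality C$ and with $s(\cdot,\cdot,0,\cdot),s^*(\cdot,\cdot,0,\cdot)\le 1$ need be kept, as an entry $s(u,v,0,w)>1$ is realised by freely replicating one cheapest $(u,v,0,w)$-walk. Since the index set $X_t^2\times[\ell]_0\times[g]_0$ of a counter has $\O(\tw^2\ell g)$ cells, the number of admissible pairs $(s,s^*)$ is at most $2^{\O(\tw^2 g)}\cdot(n+1)^{\O(\tw^2\ell g)}=n^{\O(\tw^2\ell g)}$, so together with the $2^{\O(\tw)}$ choices of $X$, the at most $n+1$ choices of $c$, and the $n^{\O(1)}$ nodes, the table has $n^{\O(\tw^2\ell g)}$ entries; the test $\Lambda(s+s^*)>\cardinality{C\cap(X\cup(V_t^\downarrow\setminus X_t))}$ immediately fixes the impossible ones to $\infty$.

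It remains to bound the per-node work and to argue correctness. Leaf and introduce-vertex nodes are $\O(1)$ table lookups after a constant number of parity and consistency checks. At an introduce-edge node labelled $\set{u,v}$ I would range over the child counters $s',s'^*$ and over the number $c_e=\O(\cardinality s+\cardinality{s^*})$ of reinserted copies of the edge, checking $(s'+c_e\cdot\multiset{(u,v,0,w(u,v))},s'^*)\leadsto(s,s^*)$ with \Cref{thm:counter-reducibility-fpt}; the quantity $z$ appearing there equals the number of zero-size entries of the target counter, which our restriction bounds by $\O(\tw^2)$, so each reducibility test costs $f(\tw+\ell+g)\cdot n^{\O(1)}$. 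At a forget node for $v$ I would range over $c'\le c$ and over the counters $s_v$ supported on the pair $(v,v)$, of which there are $(c-c'+1)^{(\ell+1)(g+1)}\le n^{\O(\ell g)}$, and take a minimum over child lookups. At a join node I would range over $X_1\subseteq X$, $c_1\le c$, and child counters $s_1,s_1^*,s_2,s_2^*$ bounded entrywise by $(\ell+1)\cdot\cardinality{X_t}\cdot(\cardinality s+\cardinality{s^*})$ — hence $n^{\O(\tw^2\ell g)}$ many — testing $(s_1+s_2,s_1^*+s_2^*)\leadsto(s,s^*)$. Correctness is a bottom-up induction in the same style as the two lemmas proved for the uncapacitated DP: every stored value is realised by a compatible partial solution, and conversely every compatible partial solution is witnessed by a stored value of weight at most its own. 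The only non-routine cases are introduce-edge, where \Cref{thm:loadgascvrp-remove-edge-twice} guarantees that in a minimum-weight solution the new edge splits each walk into at most three subwalks, exactly reproduced by a child solution plus $c_e$ reinserted edge copies, and join, where a walk of the combined solution alternates between $G_{t_1}^\downarrow$ and $G_{t_2}^\downarrow$ and thus decomposes into at most $\ell$ client-covering subwalks and at most $(\ell+1)\cdot\cardinality{X_t}$ connector subwalks — exactly the entrywise bound used to truncate the child counter ranges — whereas a detached walk lies wholly on one side and merely splits $c$ additively.

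The main obstacle is this bookkeeping. The counters a priori range over an infinite set, and the argument only closes because three independent bounds mesh: the edge-multiplicity bound of \Cref{thm:loadgascvrp-remove-edge-twice} keeps the number of structurally distinct, client-covering walks at most $\cardinality C$; the ``at most one copy of any zero-client walk'' normalisation keeps the parameter $z$ of \Cref{thm:counter-reducibility-fpt} at $\O(\tw^2)$, so that the reducibility oracle is itself $\FPT$ in $\tw+\ell+g$; and the ``$\O(\tw+\ell)$ fragments per reassembled walk'' bound at a join, together with the ``$\le 3$ subwalks per walk'' bound at an introduce-edge, keeps all child counters polynomially bounded. Verifying that these never interact so as to force a counter entry exponential in $\tw+\ell+g$, and that the reducibility tests may be restricted to the truncated ranges without discarding an optimal routing, is the heart of the proof; the remaining work is a tedious but routine check that each of the five update rules is sound and complete for \Cref{def:loadgascvrp-dp}.
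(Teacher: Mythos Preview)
Your proposal is correct and follows essentially the same approach as the paper: you run the dynamic program of \Cref{def:loadgascvrp-dp} over a nice tree decomposition, bound the relevant state space via the restrictions $s(\cdot,\cdot,0,\cdot)\le 1$ and $\sum_{\lambda\ge 1}s(\cdot,\cdot,\lambda,\cdot)\le\cardinality C$, and use \Cref{thm:counter-reducibility-fpt} for the introduce-edge and join updates with $z=\O(\tw^2)$. Your write-up is in fact more explicit than the paper's own proof---you spell out the state-space bound as $n^{\O(\tw^2\ell g)}$, note the preprocessing $k\le\cardinality C$, and sketch the soundness/completeness induction, whereas the paper only records $n^{f(\tw+\ell+g)}$ and leaves correctness implicit.
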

\begin{proof}
    We first obtain a nice tree decomposition $\T$ of $G$ with width $\O(\tw)$ in $\FPT$-time with respect to $\tw$ using \Cref{thm:tw-2-approx,thm:nice-tree-decomposition-transform}.
    Next, we compute the dynamic programming array $dp$ according to \Cref{def:loadgascvrp-dp}.
    Using the restrictions described, we compute $n^{f(\tw + \ell + g)}$ entries of $dp$ for some computable function $f$. We will argue the running time of the computations in every node type for each signature $(X, c, s, s^*)$.
    \begin{itemize}
        \item The running time is constant for every leaf node. 
        \item An introduce vertex node requires computations with a complexity of $\O(\tw + \ell + g)$.
        \item Every introduce edge node has $n^{f(\tw + \ell + g)}$ many combinations of $c_e$ and $s_v$ to check, for some computable function $f$.
            Since we assume that $s(\cdot, \cdot, 0, 0), s^*(\cdot, \cdot, 0, 0) \le 1$, each of those is checked for reducibility in $\FPT$-time with respect to $\tw + \ell + g$ using \Cref{thm:counter-reducibility-fpt}.
        \item Forget nodes require a running time in $n^{\O(\ell + g)}$.
        \item
            For any join node, similarly to the introduce edge node, there are $n^{f(\tw + \ell + g)}$ combinations of the variables $X_1, c_1, s_1, s_1^*, s_2, s_2^*$ to check, for some computable function $f$.
            Each check takes $\FPT$-time with respect to $\tw + \ell + g$ using \Cref{thm:counter-reducibility-fpt}.
    \end{itemize}
    We then check the weight at the root node $r$ of $\T$ in the entries $dp[r, \emptyset, k', \emptyset, \emptyset]$ for all $k' \le k$ and accept or reject the instance accordingly.
\end{proof}

Furthermore, it is easy to see that the algorithm from \Cref{thm:loadgascvrp-xp} can be modified to yield similar results for $\LoadCVRP$ and $\GasCVRP$.
In particular, the counters $S_t$ for each node $t \in \T$ should only track the endpoints of each walk and one of the two constraints.

\begin{corollary}
    \label{thm:loadcvrp-xp}
    \label{thm:gascvrp-xp}
    $\LoadCVRP$ parameterized by $\tw + \ell$ and $\GasCVRP$ parameterized by $\tw + g$ are in $\XP$.
\end{corollary}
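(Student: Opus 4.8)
The plan is to obtain both statements by \emph{specializing} the dynamic programming algorithm of \Cref{thm:loadgascvrp-xp} (\Cref{def:loadgascvrp-dp}) rather than by a black-box reduction: one cannot simply view $\LoadCVRP$ as a $\LoadGasCVRP$ instance with a huge value of $g$ (and symmetrically for $\GasCVRP$), since $g$ (resp.\ $\ell$) is part of the parameter. Instead, in each case I would keep exactly the coordinate of the counters in $S_t$ that the problem constrains and drop the other one, leaving all the transitions and the reducibility subroutine structurally intact.

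\textbf{The $\LoadCVRP$ case.} First observe that \Cref{thm:loadgascvrp-remove-edge-twice} holds verbatim here: its proof only uses that $p'$ covers the same clients as $p$ (so the load bound is preserved) and that $w(p')\le w(p)$, so again it suffices to search for routings in which each of the $k$ walks uses every edge at most twice. I would then redefine the counter universe as the set of multisets $s\colon X_t^2\times[\ell]_0\to\N$, i.e.\ delete the $[g]_0$ coordinate that tracked per-walk weight; the aggregate weight of a partial solution is still carried by the value $dp[t,X,c,s,s^*]$, which is exactly what the bound $r$ needs at the root. All five node transitions (Leaf, Introduce Vertex, Introduce Edge, Forget, Join) carry over line by line after deleting the $w$-coordinate from every counter, and the counter-bounding arguments (capping $s(\cdot,\cdot,\lambda)\le\cardinality C$ for $\lambda\ge 1$ and treating fragments covering no new client separately) go through unchanged with $\ell$ in place of $\ell+g$. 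Counter reducibility is still decided by (the obvious analogue of) \Cref{thm:counter-reducibility-fpt}, now invoking \HetMdFpBinPacking with dimension $d=2$ (one coordinate for covered demand, one padding coordinate for the zero-sized fragments) and fingerprint set $X_t^2\times\set{0,1}$, so that $\norm{B_1}_1\le\ell+z$ is bounded in terms of $\tw+\ell$. Hence the array has $n^{f(\tw+\ell)}$ entries, each computed in $n^{f(\tw+\ell)}$ time, giving the $\XP$ bound.

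\textbf{The $\GasCVRP$ case, and the main obstacle.} Here the symmetric specialization keeps the per-walk weight coordinate (needed to enforce $\Gamma_g$) and drops the load coordinate, so counters become multisets $s\colon X_t^2\times[g]_0\to\N$; the edge-duplication lemma again applies since $w(p')\le w(p)$ preserves both the gas bound and the set of covered clients. The subtle point — which I expect to be the main thing to verify carefully — is client coverage: once the load coordinate is gone, a client is no longer ``charged'' to a particular fragment, so one must argue that the algorithm still certifies that every client is visited. This is exactly where one uses that for $\GasCVRP$, as already noted for $\VRP$, the assignment $A$ is irrelevant: a routing only has to \emph{visit} every client. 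It therefore suffices to remember, for each bag vertex that is a client, a single bit recording whether it has already been incorporated into some walk fragment — this is the role played by the set $X$ together with the degenerate fragments $(v,v,\lambda,0)$ in \Cref{def:loadgascvrp-dp}, now collapsed to one bit — and to check, at each Forget node, that a client leaving the bag was indeed incorporated. One must run through all node types to confirm that this bit is maintained correctly (in particular that forgetting a client whose only witness was a pass-through in some fragment is handled), and that the \HetMdFpBinPacking-based reducibility check still faithfully models the merging of fragments when fragments are distinguished only by endpoints and weight (again $d=2$, $\norm{B_1}_1\le g+z$). With these checks in place the counter-bounding and running-time analysis is identical to the $\LoadGasCVRP$ case with $g$ replacing $\ell+g$, yielding $n^{f(\tw+g)}$ time and hence membership in $\XP$.
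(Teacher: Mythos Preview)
Your proposal is correct and follows exactly the approach the paper takes: specialize the dynamic program of \Cref{thm:loadgascvrp-xp} by deleting the unneeded coordinate from the counters $S_t$ (the weight coordinate for $\LoadCVRP$, the load coordinate for $\GasCVRP$), with the reducibility check correspondingly invoked at dimension $d=2$. The paper states this in a single sentence (``the counters $S_t$ \dots should only track the endpoints of each walk and one of the two constraints''), whereas you spell out the client-coverage verification for $\GasCVRP$ via the set $X$ in more detail than the paper does; this extra care is sound but not a different argument.
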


\section{Conclusion}

We have presented detailed tractability results for vehicle routing problems regarding the parameters given in the input and the treewidth of the underlying network.
A summary of our results can be found in \Cref{tab:capacitated_overview_parameters}. Note that hardness results can be inherited, as some problems represent strict generalizations of others.

We have shown complete results for all mentioned parameters in the uncapacitated case.
For $\VRP$ treewidth as a parameter suffices for the existence of $\FPT$ algorithms. On the other hand, once any form of capacity constraints is introduced, vehicle routing becomes hard even on trees.
Combining treewidth with other parameters is only sensible for parameters which are not sufficient for $\FPT$ algorithms on their own, namely the capacity, $k$ and $\abs{D}$.
Out of these parameters, only the inclusion of all present capacity parameters yields tractability.

While we also proved complexity bounds of the capacitated vehicle routing variants, some questions remain unanswered.
We have provided an $\XP$ algorithm for treewidth and capacity as a combined parameter. However, the existence of an $\FPT$ algorithm for the problem could neither be ruled out nor proven in this work. Additionally, while the $\W[1]$-hardness for the parameter $\tw + k + \cardinality D$ is likely to rule out an $\FPT$-algorithm, future work could explore $\XP$ algorithms for this parameter.
\emergencystretch=1em

\bibliographystyle{plainurl}
\bibliography{bibliography}

\end{document}